\newcommand{\nn}{\nonumber \\}
\newcommand{\nb}{\nonumber}
\newcommand{\LP}{\left(}
\newcommand{\RP}{\right)}
\newcommand{\Tr}{\mathop{\rm Tr}\nolimits}
\newcommand{\brac}[2]{\langle#1\ #2 \rangle}
\newcommand{\bracf}[1]{\langle#1 \rangle}
\newtheorem{lemma}{Lemma}
\newtheorem{theorem}{Theorem}
\newenvironment{proof}{Proof:}
\def\endofproof{\hfill{$\Box$}\\}
\title{Permutation Relations of Yangian Invariants, Unitarity Cuts, and  Scattering Amplitudes}
\author{Peizhi Du, Gang Chen\footnote{Corresponding author}
and
Yeuk-Kwan E. Cheung\footnote{Corresponding author}
%$^\dagger$
}
\affiliation{%
Department of Physics, Nanjing University\\
22 Hankou Road, Nanjing 210093, P. R. China
}%
\emailAdd{%
<gang.chern@gmail.com>  <cheung@nju.edu.cn>}
\date{\today}
\abstract{%
We find a permutation relation among Yangian Invariants -- two Yangian Invariants with adjacent external lines exchanged are related by a simple kinematic factor--%
which is shown to be equivalent to U(1) decoupling and Bern-Carrasco-Johansson (BCJ) relation at the level of maximal helicity violating (MHV) amplitudes.
We propose using unitarity cuts to  study nonplanar amplitudes and to systematically reconstruct the integrands of nonplanar MHV amplitudes, up to a rational function which vanishes under all possible  unitarity cuts. This is made possible with the newly found permutation relations by converting nonplanar on-shell diagrams into planar ones.
As explicit examples the construction of one-loop double-trace MHV amplitudes of 4- and 5-point interactions are presented using on-shell diagrams.
The kinematic factors and the resultant planar diagrams are carefully dealt with using the unitarity cut conditions.
The first next-to-MHV amplitudes are addressed using generalized unitarity cuts.  Their leading singularities can be identified as residues of the Grassmanian integral.
These  examples  also serve to demonstrate the power of the newly found relation of Yangian Invariants.
}
\keywords{Permutation relation, Yangian Invariants, N=4 super Yang-Mills, Nonplanar amplitudes, Unitarity cuts, BCFW}
\preprint{NITS-PHY-2014002}
\begin{document}

\maketitle
\section{Introduction}
%\addcontentsline{toc}{section}{Introduction}

The recent progress in the computation of Yang-Mills scattering amplitudes  has been  exciting.
At  tree level, BCFW recursion
relation~\cite{BCFW04-01, BCFW04-02, BCFW05, 2012FrPhy...7..533F}
can be used to calculate  n-point amplitudes efficiently.
Unitarity cuts~\cite{Bern94, Bern95, Bern05} and generalized unitarity cuts%
~\cite{GUT04-01, GUT05-02, GUT05-06, Drummond:2008bq, Mastrolia:2012wf, Mastrolia:2012du, Mastrolia:2013kca, Multiloop2013}
combined  with  BCFW  for the rational terms work well at loop
level~\cite{Bern06, Carrasco11, Bern:2008ap, Eden1009, Eden1103, Eden11}.
All loop integrands~\cite{NimaAllLoop, NimaLocalInt, 2012JHEP...07..174C, CaronHuot:2011ky} for $\mathcal{N}=4$ super Yang-Mills (SYM) planar amplitudes can be obtained recursively in principle.
On the other hand there is  much  progress on gluon amplitude computation at strong coupling~\cite{Alday07, Alday07-2} via  the celebrated AdS/CFT correspondence.

Besides the progress on calculations interesting and useful relations among color-ordered partial amplitudes have been uncovered.
A relation of such kind was proposed by Bern, Carrasco and Johansson~\cite{BCJ}, the BCJ relation. Together with the KK relation proposed earlier by Kleiss and Kuijf~\cite{KK},  these two relations have since then been widely used to simplify calculations at tree level~\cite{BCJ09, 2013NuPhB.873...65S, 2010PhRvD..82h7702T, BCJ11, BCJ12}.

Lately Arkani-Hamed et al~\cite{NimaGrass} proposed using positive  Grassmannian  to study $\mathcal{N} = 4$ super Yang-Mills
along  with  the  constructions of the bipartite  ribbon on-shell
diagrams~\cite{GeometryOnshell}--in which all internal legs are {\it on shell}--for  planar Yang-Mills interactions.
In such a construction each on-shell bipartite diagram is automatically gauge invariant;  and a direct relationship  between planar  amplitudes in
$\mathcal{N}=4$ super Yang-Mills (SYM) and the  positive Grassmannian structures is presented.
Furthermore they also prescribe a permutation rule for characterizing on-shell diagrams of tree level amplitudes as well as the leading singularities~\cite{Cachazo:2004dr, Cachazo2008,
Cachazo:2008hp, Spradlin08} in planar loop-level amplitudes.

Each on-shell diagram corresponds  to a Yangian invariant, as shown in~\cite{Drummond:2009fd} at tree level and~\cite{Brandhuber:2009kh, NimaSimpleField, Elvang:2009ya} at loop level.
(See~\cite{2003JHEP...10..017D, 2004qts..conf..300D} for earlier works
and~\cite{2014arXiv1401.7274B, Frassek:2013xza,
Amariti:2013ija, Bourjaily2013, CaronHuot:2011ky, Drummond:2010uq, Beisert2010, Drummond:2010qh,  Feng2010, Alday:2010vh, Mason:2009qx, Beisert:2009cs, 2009JHEP...06..045A, Bargheer:2009qu, 2009JHEP...04..120A} for a sample of  interesting  developments thereafter,
and~\cite{Elvang:2013cua, Benincasa:2013faa,
2012LMaPh..99....3B, Drummond:2011ic, Dixon:2011xs,  Beisert:2010jq, Bartels:2011nz, 2011JPhA...44S4011H, 2011JPhA...44S4012B, Roiban:2010kk, Drummond:2010km, 2005IJMPA..20.7189M, 1993IJMPB...7.3517B} for a sample of  reviews and a new book~\cite{Henn:2014yza}.)
Hence the tree-level amplitudes as well as leading singularities~\cite{Cachazo2008} of loop-level amplitudes are invariant under Yangian symmetry, which is a symmetry combining conformal symmetry and dual conformal symmetry~\cite{Drummond:2008vq, Drummond:2008cr,  %2009JHEP...04..018D,
Drummond:2009fd, Brandhuber:2009kh, NimaSimpleField}.
And the scattering amplitudes can be obtained by summing  over the underlying Yangian Invariants.
All can be done in  either the  momentum space or the  momentum twistor space~\cite{NimaSMatrix, NimaDuality, NimaGrassDuality}.

On-shell bipartite diagrams fall into  equivalence classes  under square moves and mergers. Such equivalence operations leave the corresponding Yangian Invariants unchanged. However if we only require the corresponding Grassmannian geometry (and hence the $C$ matrix) unchanged under a certain definition, by intuition, there should  exist  new  generators of a new kind of equivalence operations. 
In this paper, we will discuss a class of  such operations  generated  by  a black \& white (B\&W) box--which for the rest of the paper will be called a ``basic box''--leading to  permutation relations in the on-shell diagrams.
These, in turn, induce new relations  among  Yangian invariants.

Another motivation for this work  is to present  a systematic method to construct the local integrands~\cite{NimaLocalInt, LocalInt2013} of Yang-Mills scattering  amplitudes from unitarity cuts for {\it nonplanar}  diagrams. Under each unitarity cut, the integrand of the amplitude is well-defined and can be obtained by  gluing  tree-level amplitudes.  According to the on-shell diagrams of the tree-level amplitudes, we can directly remove the unitarity cut constraints in the frame of on-shell diagram. Then for each unitarity cut we obtain a simple form of the integrand up to a rational function which will vanish on the unitary cut. After introducing a proper operation to combine the integrands for all kinds of the unitarity cuts,  we can get an integrand for an general amplitude up to a rational function which will vanish under all the unitarity cuts.  According to the unitarity  constructible condition for the amplitudes in super Yang-Mills theory~\cite{Bern94, Bern95}, the final ambiguity of the integrand can be fixed by setting  the rational function to zero.  Then we obtain the final form of the integrand for the amplitude. This  method enjoys a  direct  extension  to nonplanar diagrams when combined with the newly found on-shell permutation relations.
In nonplanar diagrams\footnote{By ``nonplanar diagrams'' we mean either the loop line twisted  nonplanar diagrams or the higher loop multi-trace diagrams. This is because both cases are of the same form in on-shell diagrams.}
it is still possible to define an integrand up to rational functions which will vanish under all the unitarity cuts~\cite{BernNP98, BernNP08, BernNP10, BernNP12, Bianchi:2013pfa}.

Unitarity cuts are deployed, in nonplanar diagrams, to help remove the ambiguity in loop momentum definition due to the nonplanar leg(s), as opposed to single cuts used by Arkani-Hamed et al~\cite{NimaGrass} in the construction of planar amplitudes.
Definitions of loop momenta  in the nonplanar loop diagrams under unitarity cuts  will be presented.
The resulted diagrams after a  unitarity  cut of a given one loop nonplanar diagram (by which we mean one-loop double-trace amplitudes)  will be transformed into the corresponding on-shell diagrams.
Using our newly found on-shell permutation relation,
a nonplanar on-shell diagram could be subsequently converted to (a linear combinations of)  planar diagrams with  kinematic functions as coefficients.
One then needs to sum up  the resultant planar diagrams, from all possible unitarity cuts of a given nonplanar diagram, in a proper procedure which we call ``union'' prescribed in Sect.~\ref{Sec:MHV} for MHV amplitudes and in Sect.~\ref{sec:NMHV} for NMHV amplitudes,   to obtain the total nonplanar amplitudes.

The final step of our construction is to use appropriate BCFW bridges to
re-construct the total on-shell diagrams for a given nonplanar diagram.
All possible but inequivalent connections by BCFW bridges need to be taken into account.
The most crucial  step is the discovery of the permutation  relation for bipartite on-shell diagrams  that enable us to  convert nonplanar on-shell  sub-diagrams into planar ones,  which, in turn, enable the straight forward application of the existing techniques developed for on-shell planar diagrams.
This method works is well-adapted for higher loops; and we believe that it  can be generalized to higher-loop nonplanar diagrams (work in progress).

We shall show by explicit computations in Section~\ref{Sec:MHV} that
the total on-shell diagrams constructed by unitarity cuts for MHV nonplanar one-loop amplitudes in ${\mathcal N} =4$ super Yang-Mills give the correct local integrands.
The total on-shell diagrams constructed for NMHV nonplanar one-loop amplitudes by generalized unitarity
cuts~\cite{2011JPhA...44S4003B}
reproduce the correct integrals~\cite{OneloopNMHV2013}, as presented in Section~\ref{sec:NMHV}.

\section{A permutation relation among Yangian Invariants in on-shell diagrams}
\label{sec:permutation}

In dealing with nonplanar amplitudes it is crucial that there be a relation to enable the transformation of nonplanar elements  into planar ones.
To our pleasant surprise there exists such a simple relation, represented pictorially in
Fig.~\ref{fig:YI_4point} below.
\begin{figure}[ht!]
  \centering
 \includegraphics[width=0.60\textwidth]{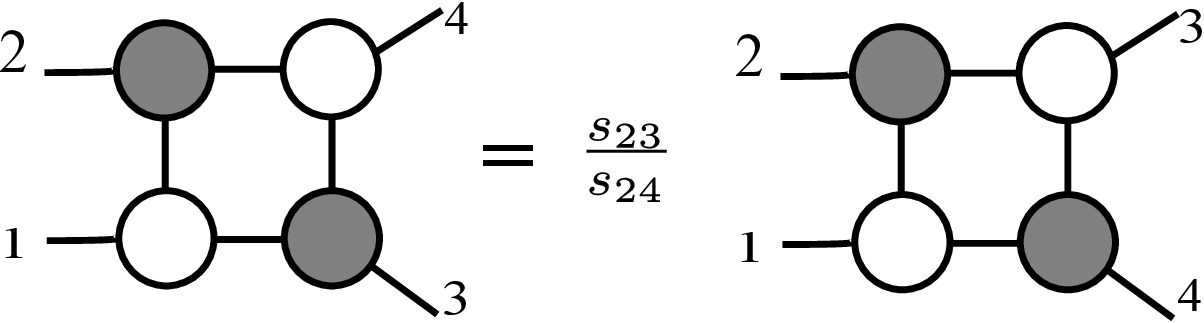}
 \caption{Yangian invariant relation in 4-point tree amplitude.}
 \label{fig:YI_4point}
\end{figure}
A permutation relation relating different Yangian invariants is completely analogous to the  BCJ relation~\cite{BCJ} for 4-point amplitude. We shall henceforth call such transformation    \textit{the permutation relation among Yangian invariants}.

In  $\mathcal N=4$ SYM the amplitudes can be constructed by unitary cuts or generalized unitary cuts~\cite{Bern94, Bern95, Bern05, GUT04-01, GUT05-02, GUT05-06, Drummond:2008bq, Mastrolia:2012wf, Mastrolia:2012du, Mastrolia:2013kca, Multiloop2013}.
After cuts the loop amplitude is a combination of tree level amplitudes.
As we want to transform the nonplanar  amplitudes, after unitarity cut, into planar ones we need to know how the constituent tree amplitudes change under the permutations of legs.
Since a permutation of legs is generated by a pairwise exchange of  two consecutive legs  we only need to know the transformations of the tree amplitudes under  an exchange of  two consecutive legs.

\subsection{A permutation relation of two bipartite boxes}
\label{sec:PRBox}

The  set of rules governing the permutations of external legs  for the  bipartite on-shell diagrams have been introduced in \cite{NimaGrass}.
Let us  take, again, the box  as an example: the 4-point tree amplitude  has only one Yangian invariant.
The corresponding permutation is
$$\left(
\begin{array}{cccc}
  1& 2  &3&4   \\
 3& 4 &5&6
\end{array}
\right),
$$
In a on-shell diagram of a tree-level amplitude a permutation is in one to one correspondence  to a Yangian invariant,  we can therefore use permutations to characterize  Yangian invariants.
Without loss of generality, we take the permuted  external legs to be 3 and 4. It is then easy to see
\begin{eqnarray}
\label{eq:4point}
Y_4^{(2)} (1,2,4,3)&=&\frac{s_{23}}{s_{24}}Y_4^{(2)} (1,2,3,4)
\end{eqnarray}
where $s_{ij}=(p_i+p_j)^2=\left \langle i\ j \right \rangle[i\ j]$ and $Y_4^{(2)} (1,2,3,4)$ is
\begin{eqnarray}
Y_4^{(2)} (1,2,3,4)= \frac{\delta^{2\times 4}(\lambda\cdot\tilde{\eta})\ \delta^{2\times 2} (\lambda\cdot\tilde{\lambda})  }{\brac{1}{2} \brac{2}{3}\brac{3}{4} \brac{4}{1} }.\nn
\end{eqnarray}
This is exactly what has been  depicted in Fig.~\ref{fig:YI_4point} above.

We can now generalize to n-point MHV amplitudes.
The corresponding permutation  is
\begin{eqnarray}
\label{eq:MHVPer}
\left(
\begin{array}{ccccccc}
  1& 2  &\cdots&i&\cdots&n-1&n   \\
 3& 4  &\cdots&i+2&\cdots&n+1&n+2
\end{array}
\right).
\end{eqnarray}
For MHV ($k=2$) amplitudes, $Y_n^{(2)} (1,2,\ldots,n)$ can be written explicitly,
\begin{eqnarray}
\label{eq:MHV}
Y_n^{(2)} (1,2,3,\cdots,n-2,n-1, n)
= \frac{\delta^{2\times 4}(\lambda\cdot\tilde{\eta})\
\delta^{2\times 2} (\lambda\cdot\tilde{\lambda}) }
{\brac{1}{2} \brac{2}{3}\cdots \brac{n-2}{n-1} \brac{n-1}{n}
\brac{n}{1} }\nn
\end{eqnarray}

We, for concreteness, take the permuting  legs to be $n-1$ and $n$. As shown in Fig.~\ref{fig:MHVGBox}, for MHV on-shell diagrams, it is always possible to connect a box directly to the pair of  the permuting legs~\cite{NimaGrass}, %\footnote{
evident  from  the expression
$Y_n^{(2)}=Y_4^{(2)} \underbrace{\odot Y_3^{(1)}\odot\ldots\odot Y_3^{(1)}}_{n-4}$
together with the cyclic symmetry of the external legs.
% and  the equivalence of  any two consecutive legs according to the cyclic symmetry of the amplitudes.
This box is nothing but  the 4-point on-shell amplitude.
Using~\ref{eq:4point},
we  obtain a permutation relation for any MHV amplitude
\begin{eqnarray}
\label{eq:npoint}
Y_n^{(2)} (1,2,3,\cdots,n-2,n, n-1)&=&\frac{s_{\widehat{n-2},n-1}}{s_{\widehat{n-2},n}}Y_n^{(2)} (1,2,3,\cdots,n-2,n-1, n).
\end{eqnarray}
The coefficient
$\frac{s_{\widehat{n-2},n-1}}{s_{\widehat{n-2},n}}$
is  obtained as followed (See Fig.~\ref{fig:MHVGBox}.).
\begin{figure}[ht!]
  \centering
 \includegraphics[width=1.0\textwidth]{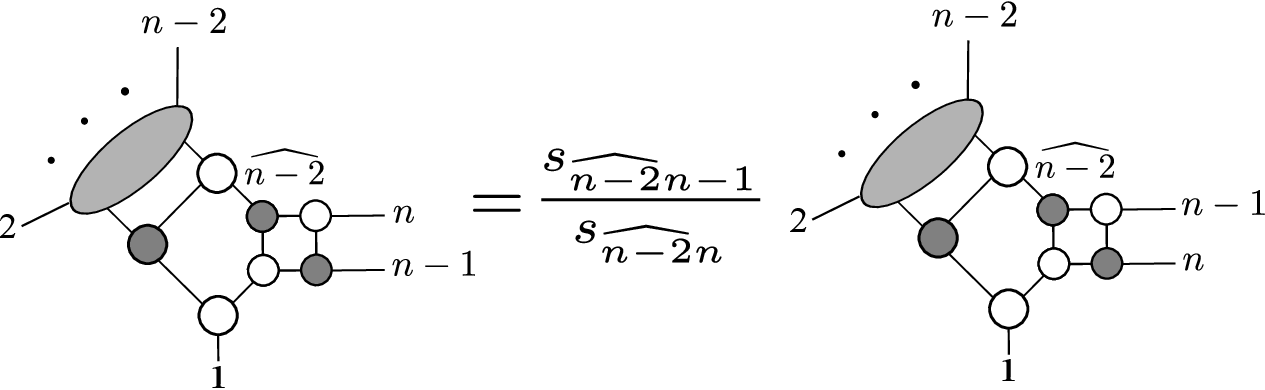}
 \caption{A Yangian invariant relation in 4-point tree amplitude,  where ellipsis  represent the process of  adding ``k-preserving inverse soft factors.''}
 \label{fig:MHVGBox}
\end{figure}
Components other than the box, in  a  bipartite diagram, are just ``k-preserving inverse soft factor $\odot Y_3^{(1)}$''~\cite{NimaAllLoop}. Adding a factor $\odot Y_3^{(1)}$ does not change the spinors $\lambda_{\widehat{n-2}}$ and $\lambda_{\hat 1}$; and    $\lambda_{\widehat{n-2}}=\lambda_{n-2}$
and $\lambda_{\hat 1}=\lambda_{1}$.
Altogether  we  get
\begin{eqnarray}
\label{eq:coeN}
\frac{s_{\widehat{n-2}n-1}}{s_{\widehat{n-2}n}}={\brac{n-2}{n-1}[\widehat{n-2}\ n-1]\over \brac{n-2}{n}[\widehat{n-2}\ n]}=-{\brac{n-2}{n-1}\brac{1}{n}\over \brac{n-2}{n}\brac{1}{n-1}}.
\end{eqnarray}
According to~(\ref{eq:npoint}) and~(\ref{eq:coeN})
the permutation relation we found is consistent with results for MHV amplitudes in the Parke-Taylor~(\ref{eq:MHV})
form~\cite{Parke:1986gb}.

This new permutation relation  holds, furthermore, for an analogous class of Yangian invariants in non-MHV amplitudes.
To aid in the  discovery we first  establish  a  criterion suitable for this class of amplitudes.
Firstly, we should define a modified BCFW-decomposition~\cite{NimaGrass}.
The on-shell diagram can be decomposed by taking  a BCFW bridge away from the diagram leaving only a sub-diagram.
The permutation of the diagram $\sigma$ can  then be decomposed as $(i j)\circ \sigma'$, where $(ij)$ is the permutation of the BCFW bridge on $i, j$ and $\sigma'$ is the permutation of the left sub-diagram.

\paragraph{A BCFW-Bridge decomposition to a Box:}

Starting with a given  permutation
$\sigma$ and picking two consecutive legs $i$ and $i+1$,
if  $\sigma(i)~\neq i ~\text{mod}~n$ and
$\sigma(i+1)~\neq i+1~\text{mod}~n$ and $\sigma$ for the
other legs is not identical to the identity modulus $n$
(a ``dressed'' identity\footnote{As an example, for $n=6$ and $k=3$, a dressed identity is $\{7,8,9,4,5,6\}$.}),
one can decompose $\sigma$ as $(j_1 j_2) \circ \sigma'$,
where $1 \leqslant j_1<j_2 \leqslant n$  and
$\sigma(j_1) < \sigma(j_2)$, with
$j_1\neq \{i, i+1\}$, and  $j_2\neq \{i,  i+1\}$.
The legs $j_1$ and  $j_2$ are being separated only by the unpermuted  legs or leg $i$, or $ i+1$, keeping   the order of $\sigma^{-1}(i), \sigma^{-1}(i+1)$ invariant.
One repeats the process  until $\sigma$ is an  identity for all the legs except the legs $i, i+1$ and
$\sigma(i)~mod~n, \sigma(i+1)~mod~n$.
We denote the final permutation as $\bar\sigma$. \\
If, on the other hand,
\begin{eqnarray}\label{eq:BoxCondition0}
\bar\sigma(i) &<& \bar\sigma(i+1)\nb\\
\bar\sigma^{-1}(i) &<& \bar\sigma^{-1}(i+1),
\end{eqnarray}
it is easy to see that the $\bar\sigma$  corresponds  to a four point amplitude for legs
$(i, i+1, \sigma(i)~mod~n, \sigma(i+1)~mod~n)$.
Furthermore $\sigma$ is obtained by putting BCFW bridges on
$\bar\sigma$.  Hence we conclude that the on-shell diagram corresponding to $\sigma$ can have ``a box'' connecting  directly to these two marked legs.

Moreover, in a  BCFW-Bridge decomposition, we always keep the order of $\sigma^{-1}(i)$ and  $\sigma^{-1}(i+1)$.
Then the condition~(\ref{eq:BoxCondition0}) is equivalent to
\begin{eqnarray}
\label{eq:BoxCondition1}
\sigma(i) &<& \sigma(i+1)\nb\\
\sigma^{-1}(i) &<& \sigma^{-1}(i+1),
\end{eqnarray}
which is a convenient criterion on permutations  to check  whether  a ``box'' can enjoy direct connection  to a pair of adjacent legs in a on-shell  bipartite diagram.

\paragraph{An application:}  In all MHV amplitudes, any two consecutive legs $\{i,i+1\}$ are in ``a box'' due to
$\sigma(i)<\sigma(i+1)$ for the permutation $\sigma$ of MHV amplitude~\ref{eq:MHVPer}.
This observation agrees with~\cite{NimaGrass}.
For general Yangian invariants,  with two consecutive legs $n-1$ and $n$,  if the condition~\ref{eq:BoxCondition1} holds  we  have
\begin{eqnarray}
\label{eq:Ynk}
{Y_{\sigma}}_n^{(k)} (1,2,3,\cdots,n-2,n, n-1)&=&\frac{s_{\widehat{n-2},n-1}}{s_{\widehat{n-2},n}}{Y_{\sigma}}_n^{(k)} (1,2,3,\cdots,n-2,n-1, n)
\end{eqnarray}
when permuting the pair of legs $n-1$ and $n$,  as shown in Fig.~\ref{fig:NMHVGBox}.
\begin{figure}[ht!]
  \centering
 \includegraphics[width=1.0\textwidth]{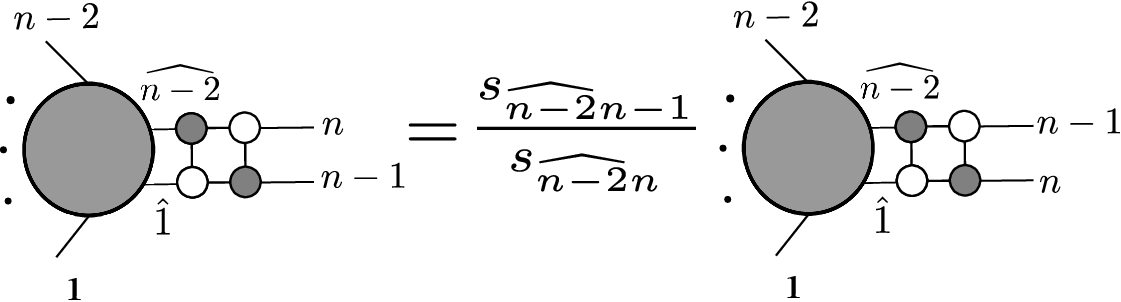}
 \caption{A Yangian invariant relation of  a  4-point tree amplitude, with the dark circle  denoting a general  on-shell diagram.}
 \label{fig:NMHVGBox}
\end{figure}

\paragraph{Conclusion:} For planar  bipartite on-shell diagrams, we can use the criterion~\ref{eq:BoxCondition1} to justify if any two adjacent legs fall  into  a box.

\subsection{Kinematic factors}
Let us now turn our attention to the kinematic factors--the remaining obstacle in the construction of total integrands for nonplanar amplitudes.
How to  deal with the resultant  planar diagrams as well as the concrete steps of reconstruction will be presented in Section~\ref{Sec:MHV} and~\ref{sec:NMHV}.
In this subsection we also study the behavior  of  Yangian invariants $Y_n^k$ and the corresponding Grassmannian cells--the $(k\times n)$-matrices, $C$--under permutations of two external legs.
Each Grassmannian cell $C$ is a point in the Grassmannian,
$G(k,n)$, characterizing the $k$-plane in the $n$-dimensional space.

To compute the kinematic factor
$\frac{s_{\widehat{n-2}n-1}}{s_{\widehat{n-2}n}}$
we only need to determine the  momenta of the two internal lines connecting to  $Y_4^{(2)}$,  which  can be done recursively by BCFW method.
\begin{figure}[ht!]
  \centering
 \includegraphics[width=0.30\textwidth]{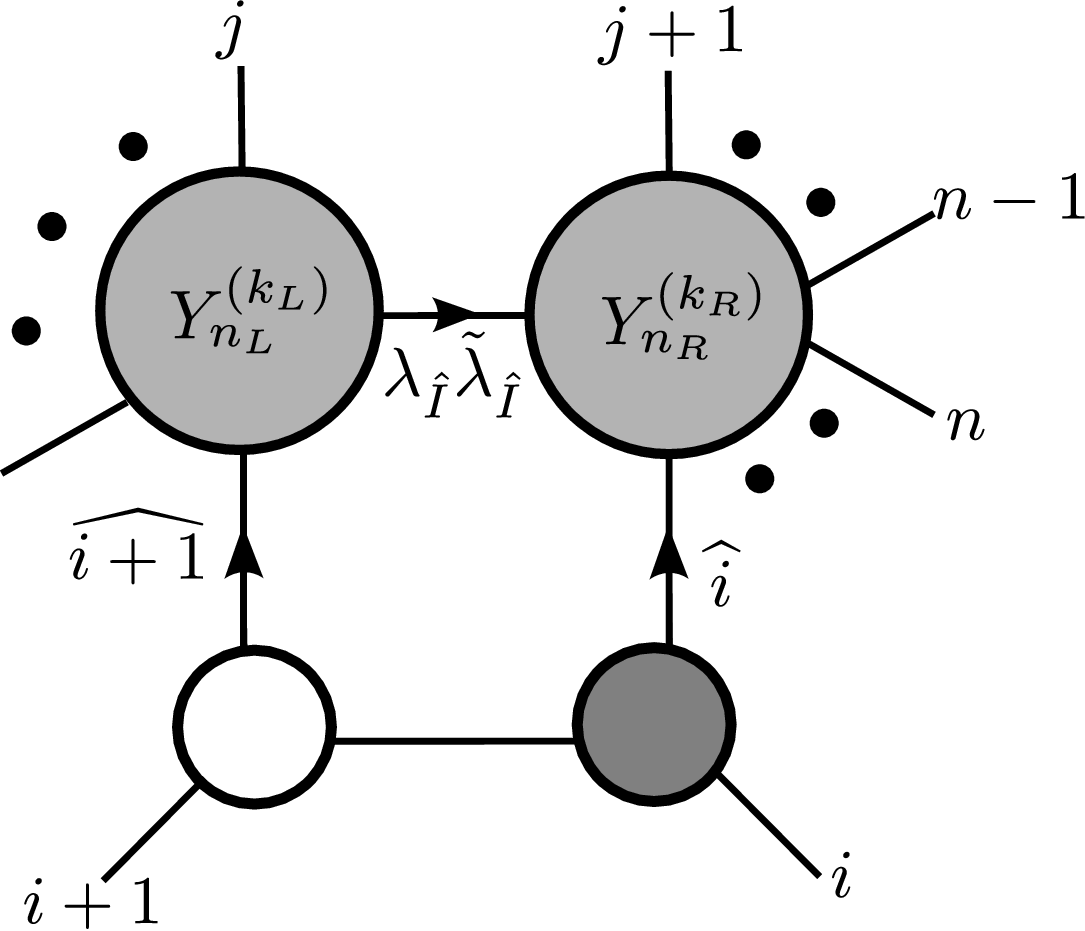}
\end{figure}
We set a variable $\alpha$ to exhibit the  momentum shift between $i+1$ and $i$,
$\alpha\,\lambda_{i+1}\,\tilde{\lambda}_{i}$.
Then the shifted momenta  are
$$\lambda_{\hat{i+1}}\tilde{\lambda}_{\hat{i+1}}
=\lambda_{i+1}(\tilde{\lambda}_{i+1}-\alpha \tilde{\lambda}_ {i})$$
and
$$\lambda_{\hat{i}}\tilde{\lambda}_{\hat{i}}
=({\lambda}_{i}+\alpha{\lambda}_{i+1})\tilde{\lambda}_{i}.$$
The internal momentum connecting  these two legs, $\lambda_I\tilde{\lambda}_{\hat I}$,  is determined by  momentum conservation from the left (or right),
$$\lambda_I\tilde{\lambda}_{I}
=\sum\limits_k \lambda_k\tilde{\lambda}_k+\lambda_{\hat{i+1}}\tilde{\lambda}_{\hat{i+1}}$$
where the index $k$ in the sum runs through all the external momenta on the left.
The variable $\alpha$ can thus be solved by the condition
$\lambda_I\tilde{\lambda}_{I}$  being on-shell
\begin{equation}
\label{eqn:onshellCD}
(\sum\limits_k \lambda_k\tilde{\lambda}_k
+\lambda_{j}\tilde{\lambda}_{j}+\alpha{\lambda}_ i\tilde{\lambda}_j)^2=0.
\end{equation}
The momenta $\lambda_I\tilde{\lambda}_{I}$ and
$\lambda_{\hat j}\tilde{\lambda}_{\hat j}$
are fully  determined by the spinors of the external momenta.
And the  Yangian invariant on the right is, in turn,  determined.

We  repeat the above operation until only a $Y_m^{(2)}$ (a MHV amplitude with $m< n$) is left.
Using~\ref{eq:npoint} and~\ref{eq:coeN}  we    arrive at  the desired kinematic factor
$\frac{s_{\widehat{n-2},n-1}}{s_{\widehat{n-2},n}}$.

\paragraph{An example:}
An example is warranted here.
In ${Y_{\sigma_0}}_6^{(2)}(1,2,3,4,5,6)$,  where $\sigma_0$ is taken to be $\{4,5,6,8,7,9\}$,  we take $5$ and $6$ to be the permuting legs.
According to~\ref{eq:BoxCondition1}, such a Yangian invariant can have a ``box'' connecting  to legs $5$ and $6$ directly as shown in Fig.~\ref{fig:6pointNMHV},
\begin{figure}[ht!]
  \centering
 \includegraphics[width=0.50\textwidth]{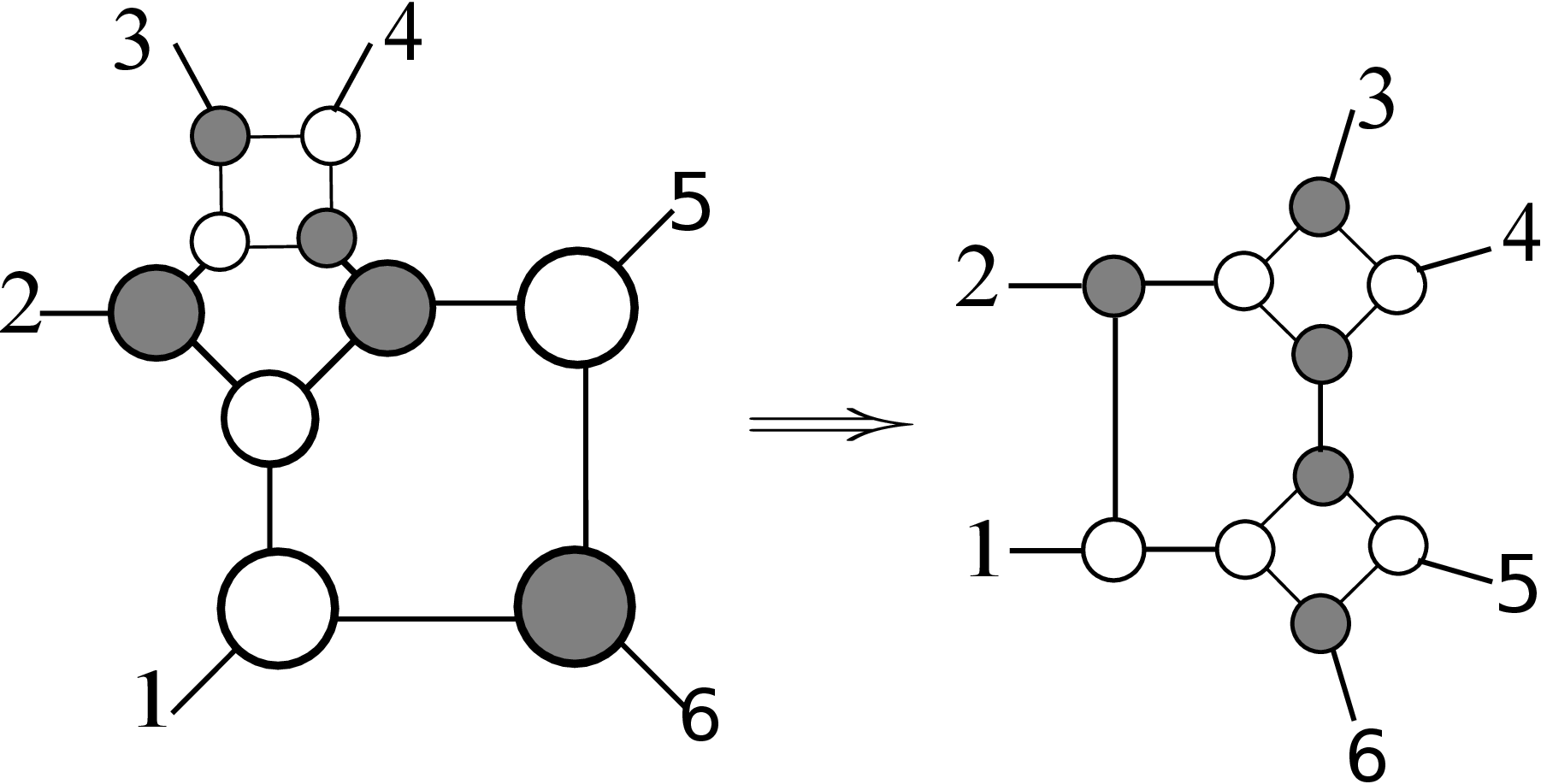}
 \caption{Transforming  to an on-shell diagram with box connecting legs $5,6$ directly.}
 \label{fig:6pointNMHV}
\end{figure}
According to~\ref{eq:Ynk} we  get
\begin{eqnarray}
\label{eq:YB63}
{Y_{\sigma_0}}_6^{(3)} (1,2,3,4,6,5)
&=&\frac{s_{\widehat{4}5}}{s_{\widehat{4}6}}{Y_{\sigma_0}}_6^{(3)}
(1,2,3,4,5,6)~.
\end{eqnarray}
The kinematic factor can hence be read off directly
\begin{equation}
\frac{s_{\widehat{4}5}}{s_{\widehat{4}6}}=-{\brac{\hat{4}}{5}\brac{1}{6}\over \brac{\hat 4}{6}\brac{1}{5}}~
\end{equation}
with $\lambda_{\hat 4}$ being solved by~\ref{eqn:onshellCD}
$$\lambda_{\hat 4}=(p_2+p_3+p_4)|\tilde\lambda_2].$$

According to the arguments in~\cite{NimaGrass} each on-shell diagram or Yangian invariant is associated with a differential form
\begin{eqnarray}
\label{eq:IntForm}
d\Omega \delta(C\cdot\widetilde\eta)\delta(C\cdot\widetilde\lambda)\delta(\lambda\cdot C^{\bot}),
\end{eqnarray}
where $d\Omega$ is the Grassmannian integration measure and $C^{\bot}$  is orthogonal to $C$.
The $C$ can be taken as the matrix associated with the linear constraints
$\delta(C\cdot\widetilde\eta)$, $\delta(C\cdot\widetilde\lambda)$,
$\delta(\lambda\cdot C)$
for the  external spinors $\lambda$,
$\widetilde\lambda$, $\widetilde\eta$.
The Grassmannian cell for a MHV amplitude is always
$$C=\left(
\begin{array}{cccccccc}
  \lambda^1_1& \lambda^1_2  &\cdots&\lambda^1_i&\lambda^1_{i+1}&\cdots&\lambda^1_{n-1}&\lambda^1_n   \\
 \lambda^2_1& \lambda^2_2  &\cdots&\lambda^2_i&\lambda^2_{i+1}&\cdots&\lambda^2_{n-1}&\lambda^2_n
\end{array}
\right).
$$

A permutation of two external lines does not change any of the linear constraints in~\ref{eq:IntForm}.
The Grassmannian cell is thus not affected by  permutations when we fix the vector of external spinors.
In fact this  rule  can be generalized to any on-shell diagrams in tree-level amplitudes.  Any permutation of two external legs attached to  a box does not affect the Grassmannian cell for a given vector of external spinors.

%%%
It should be emphasized that, in the sense of positroid stratification, the previous Grassmannian $C$ matrix and the matrix $C^{\prime}$ obtained after  a permutation  is not exactly the same. So this kind of transformations is distinct from  square moves and merges with the latter two  leave the $C$ matrix exactly the same as before. 
However, if we look at  the linear constraints $\delta(C\cdot\widetilde\lambda)$, we can see that  (we will prove it later)
$$
C\cdot\widetilde\lambda=C^{\prime}\cdot\widetilde\lambda^{\prime},
$$
implying  these two matrices capture  the same  set of linear constraints.
Since $\widetilde\lambda^{\prime}$ can be simply related to 
$\widetilde\lambda$ by a  matrix transformation, 
if we fix the order of external spinors, setting 
$\widetilde\lambda^{\prime}\to\widetilde\lambda$, 
$C^{\prime}$ has a natural map to $C$. 
At this level we take the two $C$ matrices  to be equivalent.

We can proceed to  evaluate the final results of these two diagrams. According to~\cite{NimaGrass}, the final result of the tree level diagram is 
\begin{equation}
f_\sigma^{(k)}\!=\!\oint\limits_{C\subset\Gamma_\sigma}\!\!\!\frac{d^{k \times n} C}{\mathrm{vol}(GL(k))}\;\frac{\delta^{k\times4}\big(C\!\cdot\!\widetilde{\eta}\big)}{(1\cdots k)\cdots(n\cdots k 1)}\delta^{k\times2}\big(C\!\cdot\!\widetilde{\lambda}\big)\delta^{2\times(n-k)}\big(\lambda\!\cdot\!C^\perp\!\big).
\label{eqn:general_on_shell}
\end{equation}
Since the $C \cdot\widetilde\lambda$ and $\lambda\cdot C^\perp$ 
are the same in these  two  cases, the only difference between the results of these two diagrams comes from the minors 
in (\ref{eqn:general_on_shell}). 
The original diagram can result from the minors of consecutive chains of columns, which is the property from positroid stratification. However, the permuted results can have some minors of the \textit{inconsecutive columns}. So, in this sense, we can classify the diagrams with box permutations of a given  kind,  which, in turn,  can be used to evaluate non-planar diagrams. Examples will be shown in Section~\ref{sec:NMHV}. 
%%%

This is not obvious that $C^{\prime}$ can be transformed to $C$ by rearranging  the columns. However it is not hard to prove.
Without loss of generality we  take the permuted external legs $n-1$ and $n$.
According to a BCFW decomposition to a box (Sometimes we cannot reduce to a box by the canonical BCFW decomposition introduced in~\cite{NimaGrass}, but we can always obtain a box by remove BCFW bridges  in a certain way.),  
the $C$ matrix of an on-shell diagrams can be generalized by performing  BCFW  operations
on the  $C_0$ matrix  corresponding to $\bar\sigma$.
The  rows of $C_0$ are denoted by the $i_w$'s  which satisfy
$\bar\sigma(i_w)=i_w+n$, $\bar\sigma(n-1)-n$ and $\bar\sigma(n)-n$.
In the tree level on-shell diagrams  the $\delta$-functions are just enough to fix the parameters $\alpha_I$'s.
Hence the total number of BCFW bridges acting on a box is $2n-8$;  we  obtain
\begin{eqnarray}
\label{eq:CBCFWBOX}
C=C_0 \mathcal{B}(i_5,j_5;\alpha_5)\cdot \mathcal{B}(i_6,j_6;\alpha_6)\dots \mathcal{B}(i_{I},j_{I};\alpha_{I})\dots \mathcal{B}(i_{2n-4},j_{2n-4};\alpha_{2n-4}),
\end{eqnarray}
where
\begin{equation}
\mathcal{B}(i_I,j_I;\alpha_{I})=
\bordermatrix{
& & &  &  & & j_I &   & \cr
 &1 & 0 & \cdots  & 0 & \cdots  & 0 & \cdots  & 0\cr
 &0 & 1 & \cdots  & 0 & \cdots  & 0 & \cdots  & 0 \cr
 &\vdots  & \vdots  & \ddots & 0 & \cdots  & 0 & \cdots  & 0 \cr
i_I &0 & 0 & 0 & 1 & \cdots  & \alpha_I & \cdots  & 0 \cr
 &\vdots  & \vdots  & \vdots  & \vdots  & \ddots & 0 & \cdots  & 0 \cr
 &0 & 0 & 0 & 0 & 0 & 1 & \cdots  & 0 \cr
 &\vdots  & \vdots  & \vdots  & \vdots  & \vdots  & \vdots  & \ddots & 0 \cr
& 0 & 0 & 0 & 0 & 0 & 0 & \cdots  & 1
},\nb
\end{equation}
and
\begin{equation}
C_0=\bordermatrix{
  &   &  & \bar\sigma(n-1)-n &   & \bar\sigma(n)- n & & n-1 & n \cr
& \cdots  & \cdots  & 0 & \cdots  & 0 & \cdots  & 0 & 0 \cr
\bar\sigma(n-1)- n & 0 & 0 & \lambda^1_{\bar\sigma(n-1)-n} & \cdots  & \lambda^1_{\bar\sigma(n)-n} & \cdots & \lambda^1_{n-1}  & \lambda^1_{n} \cr
  & \vdots  & \vdots  & \vdots  & \ddots  & \vdots & \ddots  & \vdots  & \vdots  \cr
  \bar\sigma(n)- n & 0 & 0 & \lambda^2_{\bar\sigma(n-1)-n} & \cdots & \lambda^2_{\bar\sigma(n)-n} & \cdots & \lambda^2_{n-1} & \lambda^2_{n} \cr
  & \vdots  & \vdots  & 0 & \ddots  & 0 & \ddots & 0 & 0
}, \nb
\end{equation}
The elements $C_0[i_w,i_w]=1$, where $C_0[i_w,i_w]$ are the element in $i_w$ row and $i_w$ columns of $C_0$.  Other elements in $C_0$ are zero. It is obvious that the permutation on the box will not affect $C_0$ and all the delta functions  associate with $C_0$ if we fix the order of $\lambda$ .
Furthermore all the parameters $\alpha_{I}$ are fixed  by the delta functions in the box. And none of vertices and internal lines outside the box in the on-shell diagram can be affected  by  permutations of the external legs.
Hence all BCFW bridges are invariant under the external leg permutations, which in turn implies that the Grassmannian cell $C$ is invariant under the permutations.
We can  make a stronger generalization: any permutations of two legs attached to  a bipartite box  will not affect the Grassmannian cell.
The proof is completely analogous; an example will be presented in Section~\ref{sec:NMHV}.

\subsection{A permutation relation for NMHV
amplitudes--a twin-box connected by a BCFW bridge}
\label{sec:BridgeBiBox}

In general, however, not any two consecutive legs can enjoy a  direct connection to a box.
From NMHV amplitudes onward to more general amplitudes,
the next basic object arisen in a permutation of two adjacent legs is a twin-box with the two permuting legs connected by a  BCFW bridge.  Fig.~\ref{fig:BiBox56} shows
the  first NMHV example of a six-point Yangian invariant corresponding  to the permutation $\sigma_1=\{4,5,6,8,7,9\}$
with $5$ and $6$ permuted.
\begin{figure}[ht!]
  \centering
 \includegraphics[width=0.30\textwidth]{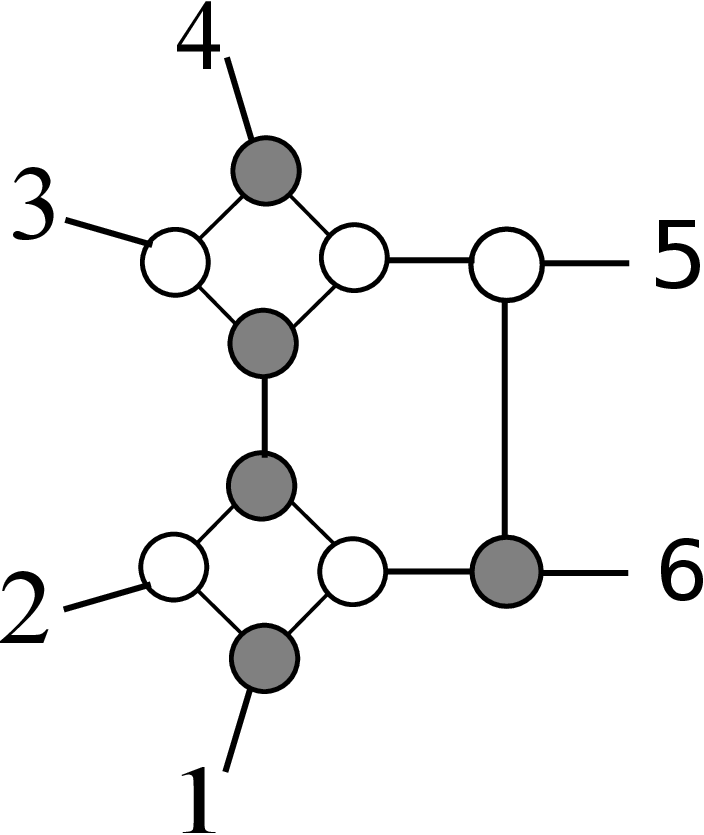}
 \caption{In a typical NMHV diagram, a pair of external legs, marked "5" and "6,"  can be made to connect to a basic twin-box by  a BCFW bridge with judicious applications of the permutation relations of the Yangian Invariants}
 \label{fig:BiBox56}
\end{figure}
According to
\begin{eqnarray}
\label{eq:YBiBox63}
&&Y_{\sigma_1}^{(3)} (o2)\equiv A^{MHV}(o2) {\bar Y_{\sigma_1}}^{(3)} (o2)\nb\\
&=&{A^{MHV}(o2)\over \bracf{2365}^{o2}\bracf{3651}^{o2}\bracf{6512}^{o2}\bracf{1236}^{o2}\bracf{5123}^{o2}}\nb\\
&\times& \delta(\bracf{2365}^{o2}\tilde{\eta}_1+\bracf{3651}^{o2}\tilde{\eta}_2+\bracf{6512}^{o2}\tilde{\eta}_3+\bracf{5123}^{o2}\tilde{\eta}_6+\bracf{1236}^{o2}\tilde{\eta}_5)
\end{eqnarray}
and a similar equation for $Y_{\sigma_1}^{(3)} (o1)$~%
\footnote{This is the form of a Yangian invariant in the momentum twistor space, and we will mainly discuss amplitudes in the momentum twistor space in this paper.
A brief introduction of the momentum twistor space is included
in Appendix~\ref{app:twistor}.},
the permutation relation is easy  to obtain,
\begin{eqnarray}\label{eq:YBiBox63}
\bar Y_{\sigma_1}^{(3)} (o2)&=&{\bracf{2356}^{o1}\bracf{3561}^{o1}\bracf{5612}^{o1}\bracf{6123}^{o1}\bracf{1235}^{o1}\over \bracf{2365}^{o2}\bracf{3651}^{o2}\bracf{6512}^{o2}\bracf{5123}^{o2}\bracf{1236}^{o2}}\nb\\
&\times&{1\over (\bracf{1235}^{o2})^4} \int d^4\bar{\tilde{\eta}}_6 \delta(\bracf{1235}^{o2}\bar{\tilde{\eta}}_6-\sum_{i=1}^{6}c_i\tilde{\eta}_i) \bar Y_{\sigma_1}^{(3)} (o1),
\end{eqnarray}
where $o2=(1,2,3,4,6,5), o1=(1,2,3,4,5,6)$ and $c_1=\bracf{2365}^{o2}-\bracf{2356}^{o1}, c_2=\bracf{3651}^{o2}-\bracf{3561}^{o1}, c_3=\bracf{6512}^{o2}-\bracf{5612}^{o1}, c_4=0, c_5=\bracf{1236}^{o2}-\bracf{6123}^{o1}, c_6=\bracf{5123}^{o2}$.

In fact for a Yangian Invariant in NMHV amplitudes, a bipartite diagram is composed of BCFW-bridged box glued  with k-preserving  inverse soft  factor $Y_n^{(3)}=Y_6^{(3)}\underbrace{\odot Y_3^{(1)}\odot\ldots\odot Y_3^{(1)}}_{n-6}$. This is because we can choose a BCFW bridge for at least one pair of consecutive legs such that $Y_{n_L}^{k_L}$ and $Y_{n_R}^{k_R}$ with
$n_L>3, n_R>3$ and $k_L=k_R=2$.
And according to the analysis in Section~\ref{sec:PRBox}  the general structure and the permutation relation of an  on-shell diagram in NMHV is as shown in Fig.~\ref{fig:NMHVBiBox}.
\begin{figure}[ht!]
  \centering
 \includegraphics[width=0.50\textwidth]{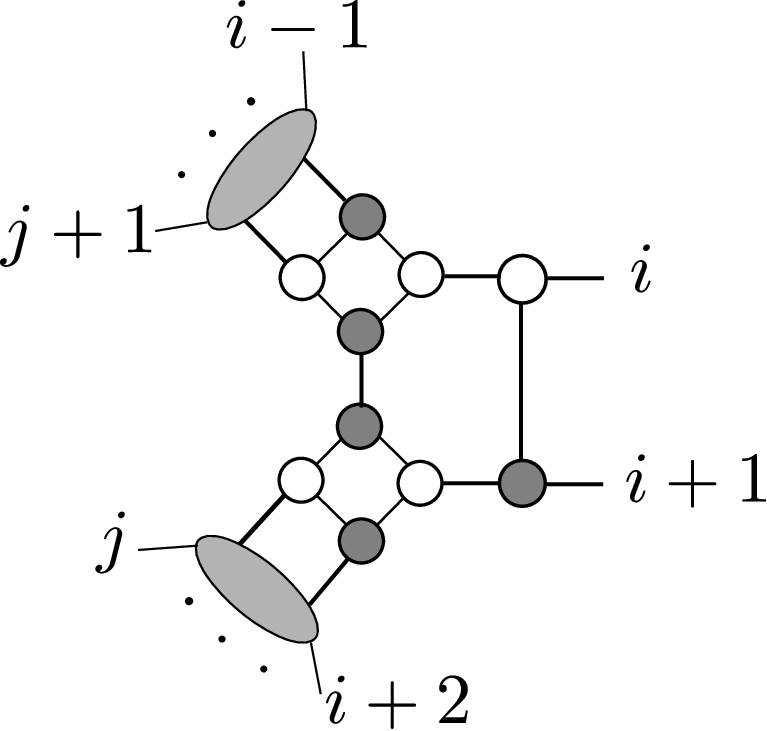}
 \caption{Bridged bi-box with permutation legs $i,i+1$.}\label{fig:NMHVBiBox}
\end{figure}

For general amplitudes beyond NMHV this permutation relation for the BCFW-bridged twin-box can be easily  shown to  exist for a class of Yangian invariants.
We shall establish a convenient criterion for them.
To this end, we define a revised BCFW-Bridge decomposition~\cite{NimaGrass}.
\paragraph{BCFW-Bridge decomposition of a bridged twin-box:}
Staring with a given permutation $\sigma$, we mark two consecutive legs $i$ and $i+1$ and other two legs $\sigma(i)~mod~n, \sigma(i+1)~mod~n$. Now one chooses another $n-6$ legs, other than the four chosen ones.
The  left 6 legs are  left fixed  if  $\sigma(i)~ \text{mod}~ n\neq i$ and $\sigma(i+1)~ \text{mod}~ n \neq i+1$ and the the box contact condition (\ref{eq:BoxCondition1})  does not hold.
If $\sigma$ for other legs is not a ``dressed'' identity~\footnote{For example, for $n=6, k=3$, a ``dressed" identity is $\{7,8,3, 10,5,6\}$.}
we decompose $\sigma$ as $(j_1 j_2)\circ \sigma'$,
where $1\leqslant j_1<j_2\leqslant n$, $\sigma(j_1)<\sigma(j_2)$,
$j_1\neq i, i+1$, $j_2\neq i, i+1$ and $j_1, j_2$
are separated only by marked legs or legs $i, i+1$,
keeping  the order of  the 6 fixed legs. This procedure is  repeated until $\sigma$ becomes  the identity for all the mobile  legs and  the  resultant permutation is denoted by $\bar\sigma$.

If  $\bar\sigma$ is  a permutation of $Y_6^3$ then the bipartite diagram is of the form  shown in Fig.~\ref{fig:GAmpBiBox}.
When we permute the legs $``i"$ and $``i+1"$, the effect of the permutation on the Grassmannian matrix  $C$ will  be partially blocked   by the box.
In fact  the total number of BCFW bridges acting on a bridged twin-box is $2n-12$; and we  obtain
\begin{eqnarray}
\label{eq:CBCFWBOX}
C=C_0 \mathcal{B}(i_8,j_8;\alpha_8)\cdot \mathcal{B}(i_9,j_9;\alpha_9)\dots \mathcal{B}(i_{I},j_{I};\alpha_{I})\dots \mathcal{B}(i_{2n-4},j_{2n-4};\alpha_{2n-4}).
\end{eqnarray}

Similar to a permutation of  the bipartite box all the
parameters $\alpha_{I}$ are fixed by  delta functions of
momentum  conservation in the bridged twin-box.
None of the  vertices  and internal lines outside of the bridged twin-box are  affected  by the  permutations of the external legs.
Hence all  BCFW bridges  are invariant under these permutations.
Nevertheless one row in $C_0$ does change while the other rows of are invariant under a leg permutation on the twin-box.
Such transformation relations on $C$ is therefore useful for classifying Yangian invariants related by a given  permutation of legs.
\begin{figure}[ht!]
  \centering
 \includegraphics[width=0.50\textwidth]{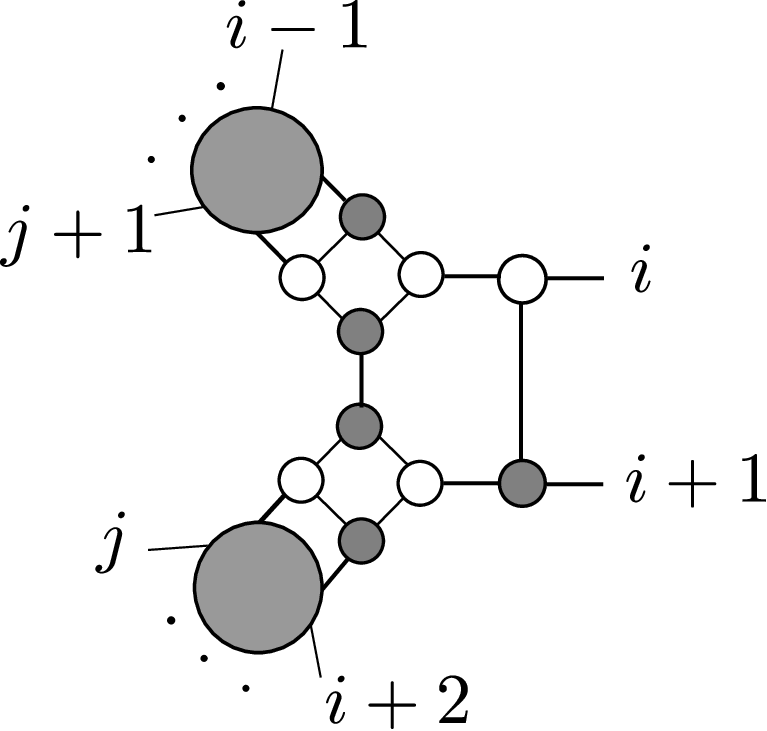}
 \caption{Bridged bi-box with permutation legs $i,i+1$.}\label{fig:GAmpBiBox}
\end{figure}

\section{Unitarity cuts and generalized unitarity cuts}
\label{sec:unitaritycuts}

In $\mathcal N=4$ supersymmetry Yang-Mills theory ``single cut''  is an  efficient way of  constructing   all loop integrands for planar loop amplitudes.
However in the nonplanar  case  the  resultant diagram after a  single cut  is often not a  well-defined Feynman diagram.
One can also view this problem as a difficulty to
endow the loop momentum with a canonical definition because the nonplanar leg(s) can fall between any two planar legs inside a loop. In fact all such possibilities should be taken into account.

The general loop amplitudes after the unitarity cuts and generalized unitarity cuts can be regarded as tree level amplitudes being glued together. Hence the unitarity and generalized unitarity cut loop amplitudes are well-defined and can be taken as the foundation to construct the  integral of the amplitudes.  From this point of view, the major difference for planar and nonplanar diagrams under unitarity cuts is that  all the gluing lines in each tree-level amplitude are adjacent for planar diagrams while in nonplanar diagrams at least a pair of gluing lines is nonadjacent. Furthermore, for the unitarity cuts, together with the on-shell diagrams for the tree-level amplitudes, it is also possible to construct the integrand of the general loop amplitudes systemically, which we will discuss in Section \ref{Sec:MHV}.

\subsection{Unitarity cut}\label{subset:UCut}

Given a nonplanar diagram one should consider all possible diagrams resulted from the nonplanar leg(s)  taking  all probable positions when traversing around the loop.
The simplest example is the four-point one-loop with one nonplanar leg--which we shall call the ``(3+1)'' case for short in the rest of the article--%
as shown in Fig.~\ref{fig:3plus1-all}.
 \begin{figure}[ht!]
  \centering
 \includegraphics[width=0.55\textwidth]{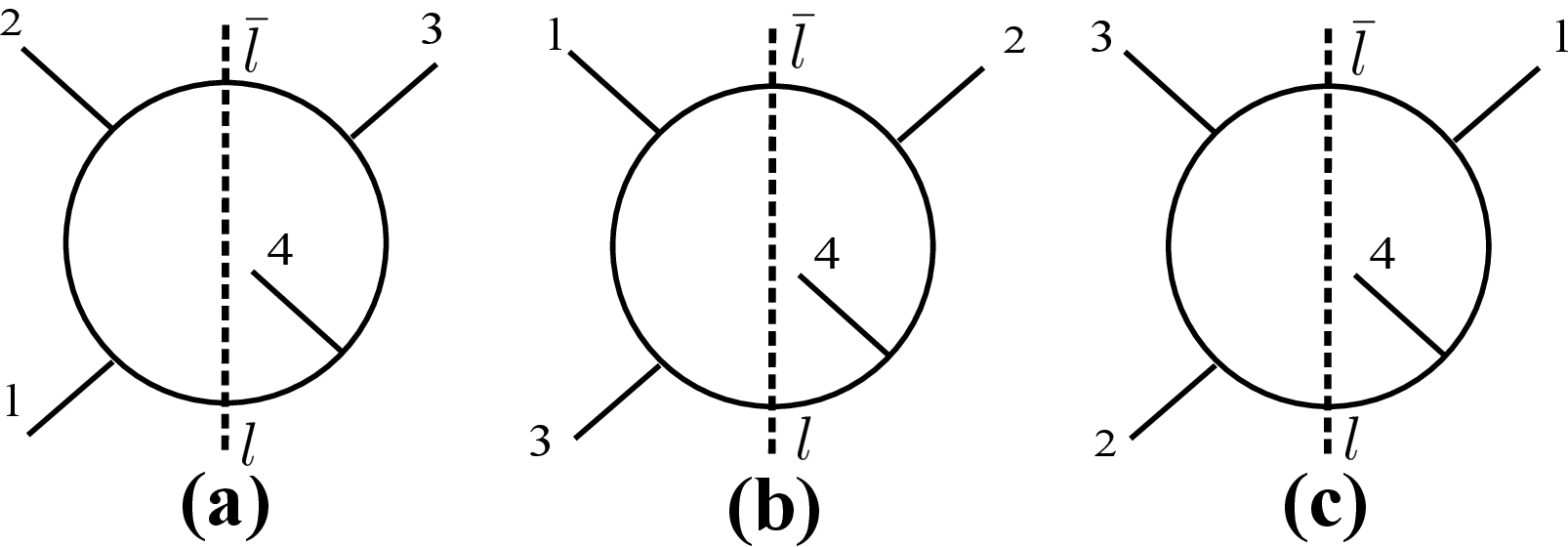}
 \caption{Three possible positions of the nonplanar leg and a possible unitarity cut in each case.}
 \label{fig:3plus1-all}
 \end{figure}
The nonplanar leg can take up three different positions; and there are two possible unitarity cuts.

The  ambiguity in defining the loop momentum is resolved
as follows: if we start with the external line marked ``${\bf 4}$'' we can call the momentum on the first cut loop line $l$, and $\bar{l}$  the loop momentum on the other cut line.
In the clockwise order for the color-ordered amplitudes one easily checks that in each resultant tree diagram the momentum is well defined for  each of the  external legs.

We can keep  track of the order of unitarity cuts as well: under each  cut the diagram will be divided into a diagram with one fewer loops in addition to a tree-level diagram.
A typical higher loop case is shown in Fig.~\ref{fig:loop-momentum}.
\begin{figure}[ht!]
  \centering
 \includegraphics[width=0.70\textwidth]{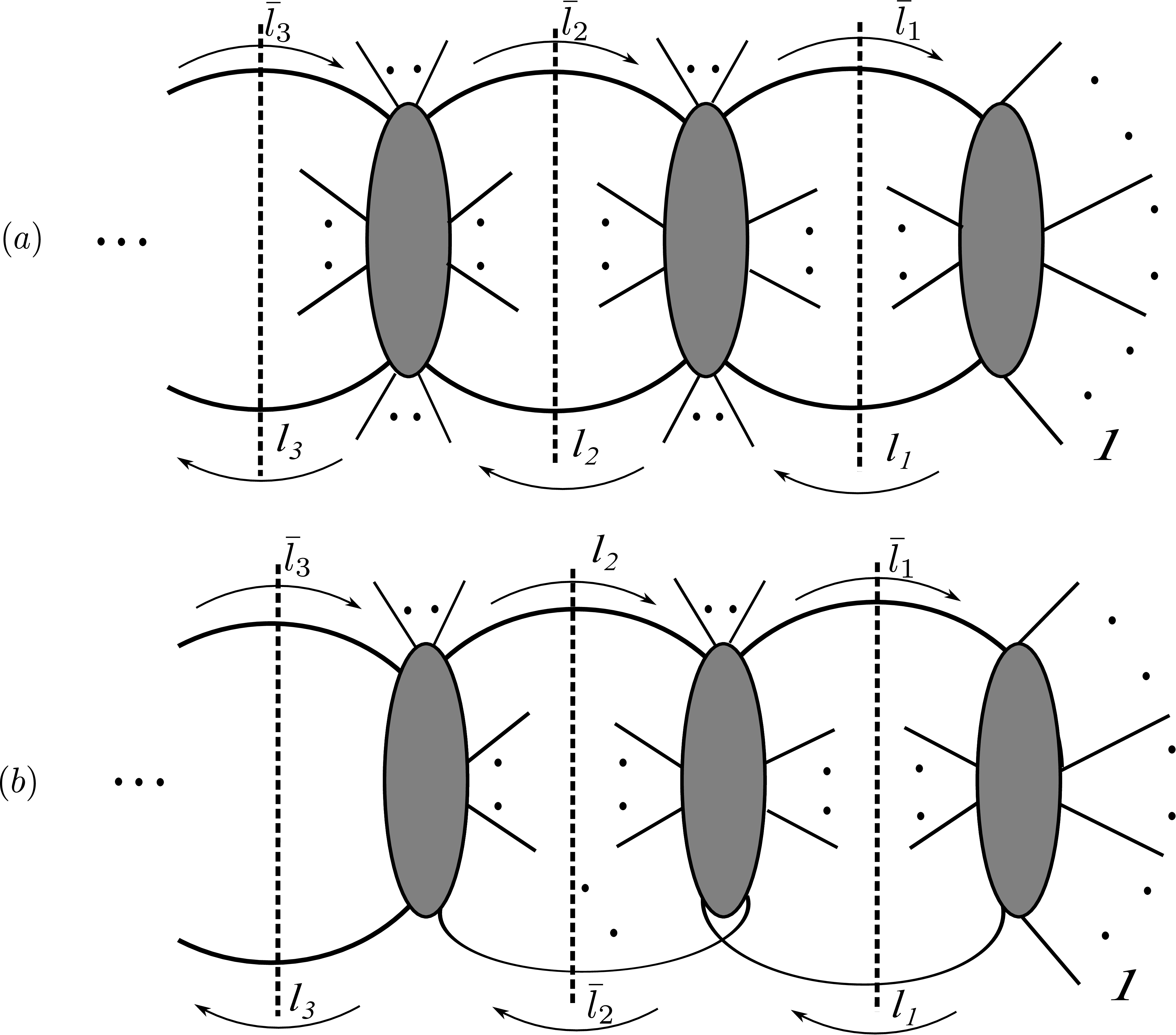}
 \caption{Unitarity cuts for a higher loop nonplanar diagram.}
 \label{fig:loop-momentum}
 \end{figure}

The ambiguity in the definitions of the loop momenta
is resolved by a series of unitarity cuts.
In fact we can define each loop momentum clockwise from a reference external line.
For the typical example in $(a)$ of
Fig.~\ref{fig:loop-momentum}, if we start with the external line marked ``${\bf 1}$''--in the clockwise order for the color-ordered amplitudes--we can call the momentum on the first cut loop-line $l_1$ and $\bar{l}_{1}$ the loop momentum on the other cut line.
After setting $l_1$ as the reference line in the first loop the momentum on the other cut line becomes
$\bar{l}_{1}=l_1-P_R$  where $P_R$ being the sum of all  external momenta  between these two cut lines.
The momenta on the second loop can thus be fixed to be $l_2$ and $\bar{l}_{2}=l_2-P'_R$ with $P'_R$ denoting the sum of all the external momenta  to the right of $l_2$, and so forth.

Obvious in this construction,  topological  information of the
non-planarity is preserved: one unitarity cut can only fix two components of the 4-momentum  integrals leaving the other two integrations unconstrained.
Furthermore, the integrand under the unitarity cut is also well-defined.
This means that the integrand  contains enough information for the characterization of  the loop topology of  nonplanar diagrams.
And the loop topology and geometric  properties of the  underlying Grassmannian arisen in nonplanar amplitudes will be manifest once we construct nonplanar amplitudes
in on-shell bipartite diagrams~\cite{NimaGrass}.

The  permutation relations
%(discussed in Section~\ref{sec:permutation} above)
of  bipartite on-shell diagrams--each of them corresponding to a Yangian with its gauge invariance--are instrumental
in  constructing the whole amplitude from unitarity cut or generalized unitarity cut diagrams.
To construct the bipartite  on-shell diagrams after a  unitarity cut we  need to convert each resultant  tree amplitudes in Fig.~\ref{fig:loop-momentum}
into  the corresponding bipartite diagrams.
The loop lines connecting the tree amplitudes now denote the same integration as the internal lines in tree-level bipartite on-shell diagrams.  Since the construction of bipartite on-shell diagram for each tree level diagram is well established
in~\cite{NimaGrass}, we only need to verify the gluing lines in bipartite diagram are equivalent to a unitarity cut of loop amplitudes.

In the language of  bipartite on-shell diagram, the gluing line represents an extra integral:
\begin{equation}
\int \frac{d^2\lambda_{l_1} d^2\tilde\lambda_{l_1}}{   vol(GL(1))} d^4\tilde{\eta_{l_1}}  \int \frac{d^2\lambda_{l_2} d^2\tilde\lambda_{l_2}}{  vol(GL(1))} d^4\tilde{\eta_{l_2}} A_L^S ~A_R^S
\end{equation}
where
$A_L^S=A_L \delta^{2\times 2}(l_1+P_L-l_2)\delta^{2\times 4}(\lambda_L\cdot\tilde{\eta}_L)$,
$A_R^S=A_R\delta^{2\times 2}(-l_1+P_R+l_2)\delta^{2\times 4}(\lambda_R\cdot\tilde{\eta}_R)$.
This  can be further simplified,
\begin{equation}
\int \langle \lambda d\lambda\rangle [ \tilde\lambda d\tilde\lambda ] d^4\tilde{\eta_{l_1}}   \frac{P_L^2}{\langle \lambda|P_L|\tilde\lambda]^2} d^4\tilde{\eta_{l_2}} A_L ~A_R \delta^{2\times 2} (P_R+P_L) \delta^{2\times 4}(\lambda_L\cdot\tilde{\eta}_L) \delta^{2\times 4}(\lambda_R\cdot \tilde{\eta}_R)
\end{equation}
which is exactly the expression of a loop level amplitude after a unitarity cut.

Let us  turn, again,  to our lovely ``3+1''  example % is the four-point one-loop with one nonplanar leg--which we shall call the ``(3+1)'' case for short in the rest of the article
(Fig.~\ref{fig:3plus1-all}),
%\begin{figure}[ht!]
%  \centering
% \includegraphics[width=0.55\textwidth]{feynman_cut.eps}
% \caption{Three possible positions of the nonplanar leg and a possible unitarity cut in each case.}
% \label{fig:3plus1-all}
% \end{figure}
the bipartite on-shell diagrams correspond to the tree amplitudes resulted from a unitary cut are shown in
Fig.~\ref{fig:stu-cuts}.
\begin{figure}[ht!]
  \centering
 \includegraphics[width=0.9\textwidth]{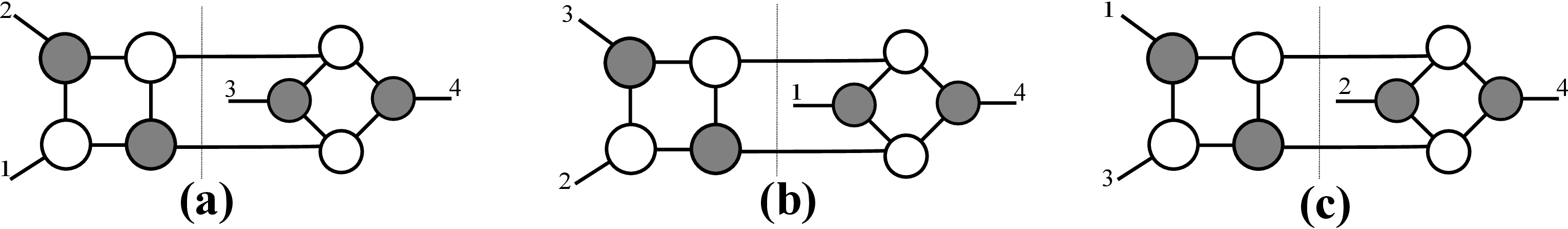}
 \caption{The  bipartite  on-shell diagrams of an s-channel cut~(a), and a t-channel cut~(b), and a u-channel cut~(c).  {\bf Note}:  In our convention a horizontal square denotes the planar tree amplitude while a rhombus (at  45 degrees) denotes a  nonplanar amplitude with  the plane of rhombus being perpendicular to the plane of the paper where the points marked ``3'' and ``4''  are at equal distance from the vertical edges of the (planar) square.}
 \label{fig:stu-cuts}
\end{figure}
After the cuts, only two four-point tree amplitudes are left. Each tree amplitude is a box. Now we can add two lines to connect these two boxes to represent the cut amplitude.  There are three different cuts--the s-channel cut, the t-channel cut, and the u-channel cut. Each of them can be  represented in on-shell diagrams in the ways shown in Fig.~\ref{fig:stu-cuts}.

We now present our strategy for  constructing the full scattering amplitudes in the bipartite on-shell language
for the corresponding  the nonplanar Feynman diagrams.
In this work we only focus on the one-loop diagrams.
We would like to stress that  our  strategy can be  extended to the   higher loops cases, with generalized unitarity cuts, in a straightforward way.
Detailed descriptions,  together with carefully worked
out examples, of the one-loop amplitudes will be presented in Section~\ref{Sec:MHV}.

\begin{itemize}
\item Perform all possible unitarity cuts on  a given nonplanar Feynman diagram. \\
We convert the resultant diagrams from each possible (series of) unitarity cuts into on-shell bipartite diagrams.
Each bipartite diagram corresponds to a Yangian invariant.
\item Remove all unphysical poles in loops, the structures  of which depend on the loop momenta. \\
They occur because Yangian invariants  in general contain unphysical poles. However the unphysical poles will cancel each other upon summing over all Yangian invariants
of a given amplitude--only physical poles remain.
Furthermore the unphysical poles in loops are not allowed in the total amplitudes.  We need to ensure that no unphysical poles of the loop momenta appear in the final expressions.
\item Sum over all the inequivalent terms from each series of unitarity cuts. \\
After removing the unitarity cut conditions, we will get an integral with respect to all the loop momenta.
If there appears the same integral when reconstructing from a different unitarity cut then it suffices to count it once.

\end{itemize}

\subsection{Generalized unitarity cuts}
Generalized unitarity  cuts~\cite{Bern:2011qt} can also be used to construct
 the full loop level amplitudes.
For $\mathcal N=4$ SYM after a quadruple cuts  on each loop only the leading singularity of the loop level amplitudes remains.
All  loop momenta are fixed by the cut constraints.
Absent is the possibility  of having a  rational function in  loop momenta.
However, from the bipartite on-shell diagrams,  lots of geometric information of the  Grassmannian can be read off from the leading singularity, as shown in
Fig.~\ref{fig:LeadSingularity}.
\begin{figure}[ht!]
  \centering
 \includegraphics[width=0.65\textwidth]{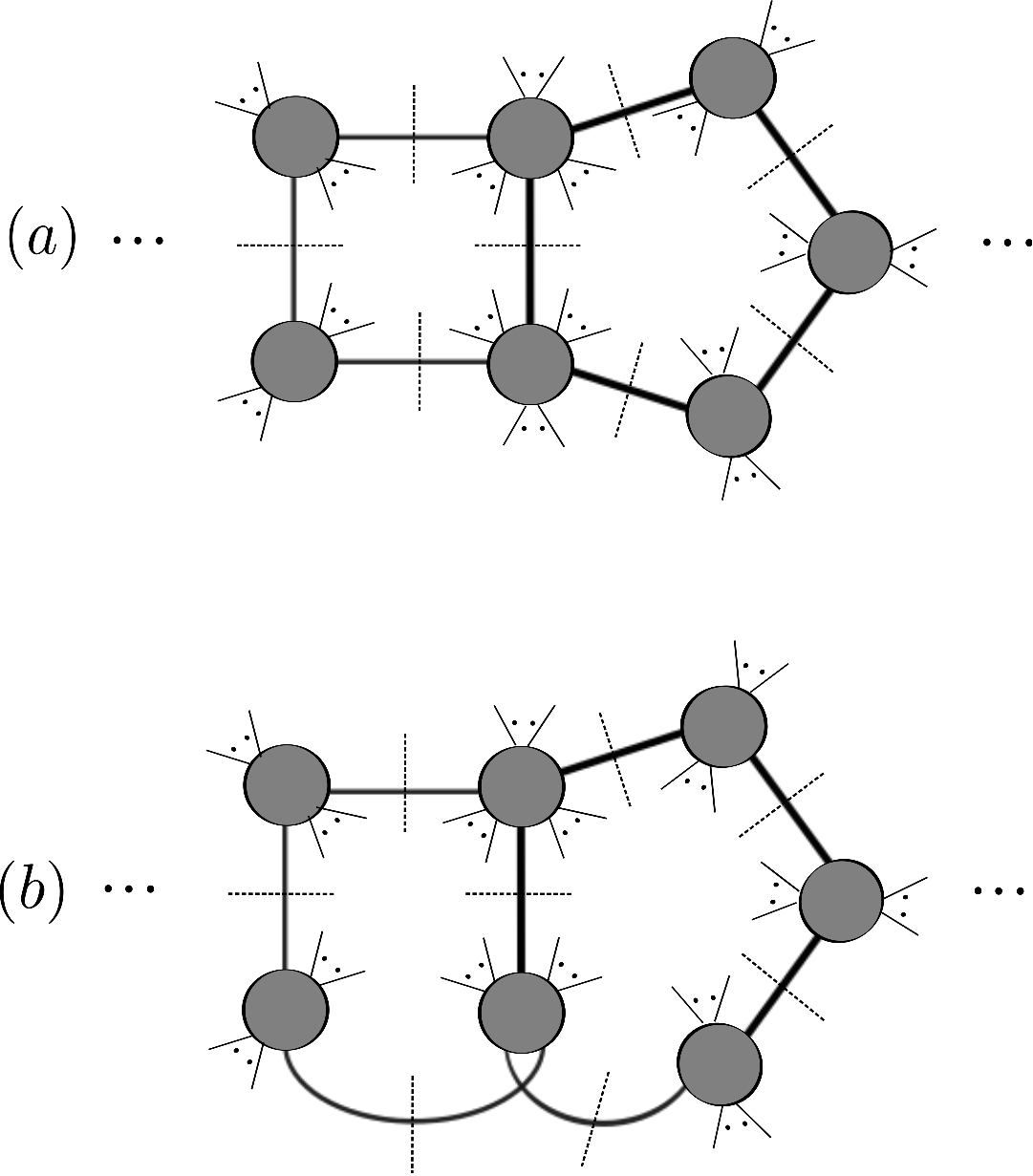}
 \caption{Leading singularity in general loop amplitudes under generalized unitarity cuts.}
 \label{fig:LeadSingularity}
 \end{figure}

In a one loop nonplanar diagram, Fig.~\ref{fig:LeadSingularityOneL},
the leading singularity is of form
\begin{equation}
\int\prod_{i=1}^4 \frac{d^2\lambda_{i} d^2\tilde\lambda_{i}}{   vol(GL(1))} d^4\tilde\eta_{i}\mathcal{A}_1^S(1\cdots 2\cdots)~\mathcal{A}_2^S(2\cdots 3\cdots)~\mathcal{A}_3^S(3\cdots 4\cdots)~\mathcal{A}_4^S(4\cdots 1\cdots).
\end{equation}
Similar to the case  of  planar diagrams the leading singularities of nonplanar diagrams can also be identified as residues of the Grassmannian integral, a specific example of which will be given  in Section~\ref{sec:NMHV}.
\begin{figure}[ht!]
  \centering
 \includegraphics[width=0.65\textwidth]{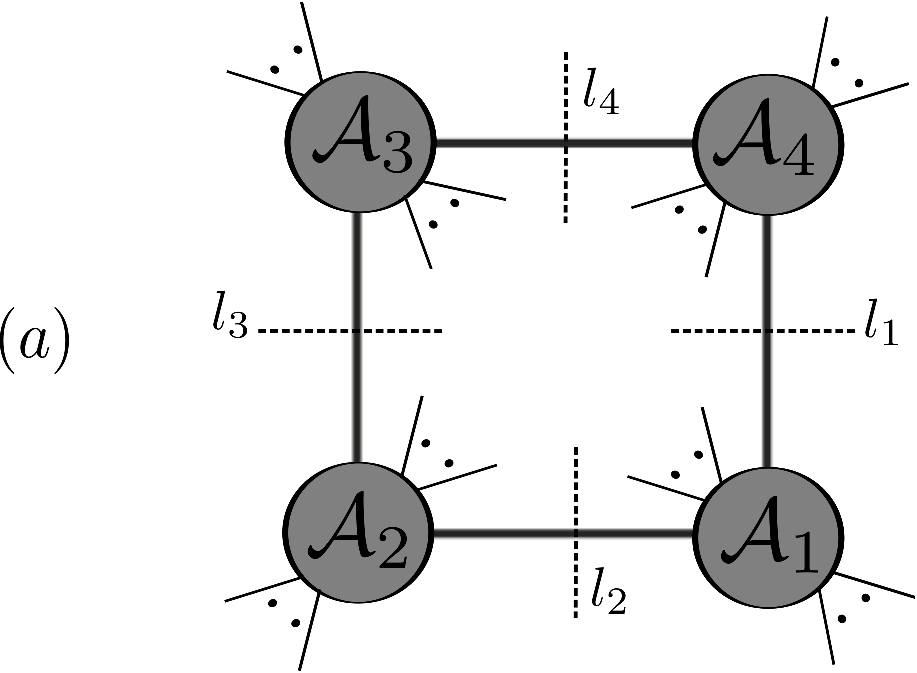}
 \caption{Leading singularity in one loop amplitudes under generalized unitarity cuts.}
 \label{fig:LeadSingularityOneL}
 \end{figure}

In the case of $\mathcal N=4$ SYM  generalized  unitarity cuts (quadruple cuts) can fully determine the full amplitudes.
Compared  to a unitarity cut  generalized unitarity cuts are  more convenient  to  the full amplitudes'  reconstruction because all the poles in loop momenta are automatically physical upon such cuts.
Furthermore different quadruple  cuts lead to   different scalar integrals: we do not need to consider equivalent integrals as we do with unitarity cuts.
The general procedures of reconstructing the full amplitudes by double cuts are as follows.
\begin{itemize}
\item Perform all possible quadruple  cuts on the nonplanar Feynman diagrams with loops. For each possible series of quadruple  cuts we convert the resultant tree-level amplitudes to the bipartite diagrams.
We then glue the cut loop lines according to the Feynman diagram. Similar to  the planar case  these  reproduce  the leading singularities of  the  nonplanar amplitudes.
\item Transform the nonplanar leading singularities into planar ones by the permutation relations of the Yangian invariants.
\item Multiply  the leading singularities by a standard integral.
\item Sum over all contributions from each series of quadruple  cuts.
\item To elucidate  the geometric properties of the Grassmannian we group terms according to their underlying Grassmannian geometry.
\end{itemize}

\section{MHV Loop Amplitudes}
\label{Sec:MHV}
In $U(N)$ Yang-Mills theory, the one loop amplitudes can be decomposed as~\cite{Dixon:1996wi}
\begin{eqnarray}
&&{A}^{\textrm{1-loop}}_{n}( \{a_i\})
=\sum_{\sigma \in S_n/Z_n}
    N_c\,\Tr\LP T^{a_{\sigma(1)}}\cdots T^{a_{\sigma(n)}}\RP\
     \mathcal{A}_{n;1}(\sigma(1),\ldots,\sigma(n))  \\
&&+\ \sum_{c=2}^{\lfloor{n/2}\rfloor+1}
      \sum_{\sigma \in S_n/S_{n;c}}
    \Tr\LP T^{a_{\sigma(1)}}\cdots T^{a_{\sigma(c-1)}}\RP\
    \Tr\LP T^{a_{\sigma(c)}}\cdots T^{a_{\sigma(n)}}\RP\
  \mathcal{A}_{n;c}(\sigma(1),\ldots,\sigma(n))\, , \nonumber
\end{eqnarray}
where $\mathcal{A}_{n;c}$ are the partial amplitudes,
$Z_n$ and $S_{n;c}$ are the subsets of $S_n$
that leave the corresponding single and double trace structures
invariant, and $\lfloor x \rfloor$ is the greatest integer less than or
equal to $x$.  In this paper the diagrams corresponding to single trace and  double trace partial amplitudes are regarded as planar and non-planar diagrams respectively. We will focus on the partial amplitudes of planar diagram,  $\mathcal{A}_{P}\equiv\mathcal{A}_{n;1}$, as well as nonplanar diagrams, $\mathcal{A}_{NP}\equiv\mathcal{A}_{n;2}$, of  4-point, 5-point interactions for $U(N)$ Yang-Mills gauge theory.
\subsection{MHV planar amplitudes and unitarity cuts}
\label{sec:Planar}

Using single cuts techniques, Arkani-Hamed et al has thoroughly studied planar amplitudes of all loops in momentum twistor space~\cite{NimaAllLoop}.  On the other hand Bern et al introduced unitarity cuts as a way to reconstruct  planar MHV amplitudes in momentum space~\cite{Bern94}, which has been instrumental as well as inspiring to our current project.
At this point, however,  no systematic method of MHV amplitudes  reconstruction  from unitarity cuts in momentum twistor space exists.

In this section, we present a detailed method of constructing  MHV one-loop amplitudes from unitarity cuts.
We build relations of bipartite on-shell diagrams and express them in momentum twistor space.
This method  leads  us naturally to  simple results  without unphysical poles, in addition to the final integrands being the same as those from single cuts~\cite{Cachazo:2004dr}.
This is to be contrasted with the way proposed by Bern et al~\cite{Bern94} in dealing with the box integrals.
Given these advantages it is thus a meaningful exercise to study MHV amplitudes in momentum twistor space together with unitarity cuts.  The steps of reconstruction of MHV one-loop amplitudes from unitary cuts are:
\begin{itemize}
\item[I] {Draw the on-shell diagrams of each amplitude under unitary cuts.}
\item[II] {Add BCFW bridges to remove the unitary cut constraints and directly write down the integrand form in momentum twistor space.}
\item[III]{Convert un-physical poles in the previous form to physical ones. }
\item[IV]{Combine results from different cuts to get the final integrands.}
\end{itemize}

\paragraph{Example: Integrands of five-point one loop amplitudes} ~~

Now we give an example to explicit the above procedure.  First non-trivial example is five-point planar amplitude.

\textbf{Step I}: In five-point situation, there are five different unitarity cuts ${A}_c(i,i+1|i+2,i+3,i+4),i=1,2,3,4,5$.
$A_c(12|345)$, for instance, can be constructed as gluing two tree level amplitudes $A_L(12l\bar l)$ and $A_R(\bar l l 345)$ as shown in Fig.~\ref{fig:loop-momentum} for the general case. The corresponding on-shell diagrams is shown in (a) of Fig.~\ref{fig:5point}.
\begin{figure}[ht!]
  \centering
 \includegraphics[width=0.83\textwidth]{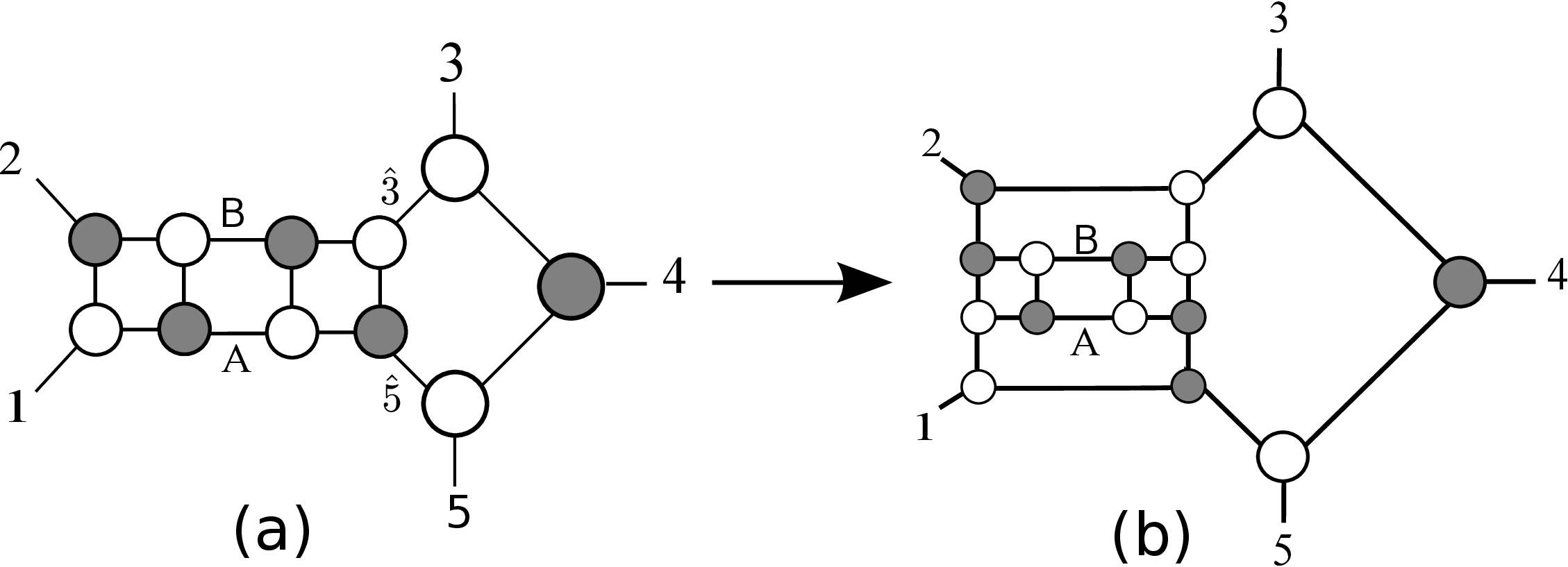}
 \caption{(a) shows the on-shell diagram of $A_c(12|345)$, and (a) transforming to (b) indicates a new way of adding BCFW bridges to remove the cut constraints (the step II). A and B denote two cut lines. }
 \label{fig:5point}
\end{figure}

\textbf{Step II}: Add BCFW bridges to (2 $\hat 3$) and (1 $\hat 5$) to remove the cut constraints (shown in Fig.~\ref{fig:5point}). We can simply write the integrand in momentum twistor space, based on the four-point one-loop situation, as
\begin{eqnarray}
\mathcal{A}_0(1,2|3,4,5)=\frac{-\left\langle1235\right\rangle^2}{\left\langle AB12\right\rangle\left\langle AB23\right\rangle\left\langle AB35\right\rangle\left\langle AB51\right\rangle},
\end{eqnarray}
where A and B denote the points of the cut lines  in momentum twistor space (Appendix \ref{app:twistor}).  
We can simply write down previous equation since the white vertices
on leg 3 and 5 dictates that $\hat\lambda_3 $ and $\hat \lambda_5$ is proportional to $\lambda_3$ and
$\lambda_5$ respectively, and the proportionality constant is irrelevant since the integrand is defined
projectively.

\textbf{Step III}: There are unphysical poles in the denominator of the previous equation, such as $\left\langle AB35\right\rangle$. These poles can be converted to physical ones using unitarity condition, $\left\langle AB23\right\rangle=0$, $\left\langle AB51\right\rangle=0$.
We could build a relation between $\frac{1}{\left\langle AB35\right\rangle}$ and $\frac{1}{\left\langle AB34\right\rangle\left\langle AB45\right\rangle}$ as
\begin{eqnarray}
{\mathcal{A}_1(1,2|3,4,5)}=\frac{\left\langle AB24\right\rangle\left\langle 3512\right\rangle\left\langle 1345\right\rangle+\left\langle AB34\right\rangle\left\langle 5123\right\rangle\left\langle 1245\right\rangle}{\left\langle AB12\right\rangle\left\langle AB23\right\rangle\left\langle AB34\right\rangle\left\langle AB45\right\rangle\left\langle AB51\right\rangle}
\end{eqnarray}

\textbf{Step IV}: Repeat the above three steps on another unitarity cut, we obtain the result of another cut $\mathcal{A}_1(2,3|4,5,1)$ as
\begin{eqnarray}
{\mathcal{A}_1(2,3|4,5,1)}=\frac{\left\langle AB25\right\rangle\left\langle 3451\right\rangle\left\langle 4123\right\rangle+\left\langle AB51\right\rangle\left\langle 3452\right\rangle\left\langle 4123\right\rangle}{\left\langle AB12\right\rangle\left\langle AB23\right\rangle\left\langle AB34\right\rangle\left\langle AB45\right\rangle\left\langle AB51\right\rangle}.
\end{eqnarray}
Now we need to combine these two results from different cuts.
$Z_5$ could be expanded based on $Z_1$, $Z_2$, $ Z_3$, $Z_4$, to get
$\left\langle AB25\right\rangle\left\langle 4123\right\rangle=\left\langle AB23\right\rangle\left\langle 4125\right\rangle+\left\langle AB24\right\rangle\left\langle 5123\right\rangle$
where $\left\langle AB12\right\rangle$ vanishes due to unitarity cut condition.
Obviously the term $\left\langle AB24\right\rangle$ is the same in these two terms, so we only need to count its contribution once. Other terms in these two equations could be directly added together  not affecting the results under both unitarity cuts. Combining other three cuts with the same method we get the final integrand of planar one loop five-point MHV amplitude as
\begin{eqnarray}
{\mathcal{A}_P(1,2,3,4,5)}&=&\frac{-\left\langle AB24\right\rangle\left\langle 2351\right\rangle\left\langle 4351\right\rangle}{\left\langle AB12\right\rangle\left\langle AB23\right\rangle\left\langle AB34\right\rangle\left\langle AB45\right\rangle\left\langle AB51\right\rangle}\\
&-&\frac{\left\langle AB23\right\rangle\left\langle 2451\right\rangle\left\langle 3451\right\rangle+\left\langle AB34\right\rangle\left\langle 2451\right\rangle\left\langle 2351\right\rangle+\left\langle AB51\right\rangle\left\langle 2345\right\rangle\left\langle 2341\right\rangle}{\left\langle AB12\right\rangle\left\langle AB23\right\rangle\left\langle AB34\right\rangle\left\langle AB45\right\rangle\left\langle AB51\right\rangle}\nb,
\end{eqnarray}
which is the same result  as obtained from single cuts in~\cite{NimaAllLoop}.

This simple example serves to illustrate our strategy to compute Yang-Mills integrands using unitarity cuts. This is to be contrasted with the single cuts method proposed by Arkani-Hamed et al~\cite{NimaAllLoop, NimaLocalInt} as well as constructing the scattering amplitudes integral after unitarity-cutting the Feynman diagrams as done by Bern et al at a much earlier attempt \cite{Bern94,Bern95}. We shall proceed with a general discussion of the higher-point results  in the rest of the section.

\paragraph{Property of MHV planar amplitudes under unitarity cuts}~

To study  MHV loop amplitudes under  unitarity cuts
we first tackle MHV tree amplitudes, of which Yangian Invariant is  $Y_n^{(2)}=Y_4^{(2)}\underbrace{\odot Y_3^{(1)}
\odot\ldots\odot Y_3^{(1)}}_{n-4}$.
The relationship between n-point Yangian Invariant (not necessarily MHV) and (n-1)-point Yangian Invariant after stripping off one $\odot Y_3^{(1)}$ is simply\footnote{We have omitted the MHV tree amplitude factor from the full amplitude in momentum space.}
\begin{eqnarray}
{Y}_{m-1}^{\prime(k)}(Z_1,\dots,Z_{m-1})=Y_{m}^{(k)}(Z_1,\dots,Z_{m-1},Z_m)
\end{eqnarray}
in momentum twistor space, which can, in turn, be represented in  on-shell bipartite  diagrams as
\begin{figure}[ht!]
  \centering
 \includegraphics[width=0.53\textwidth]{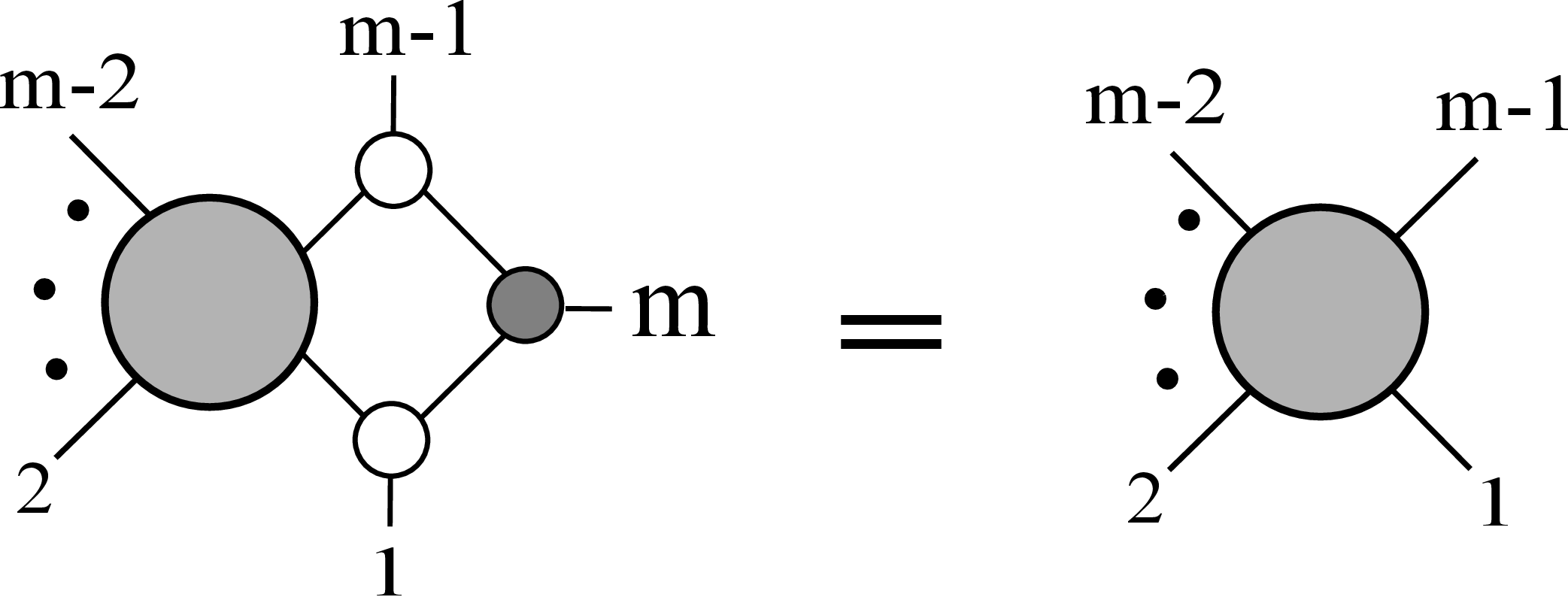}
\end{figure}

The planar part, in general, can be reduced to a very simple form (which we call a basic ``twin-box'') shown in
Fig.~\ref{fig:twinbox}.
\begin{figure}[ht!]
  \centering
 \includegraphics[width=0.83\textwidth]{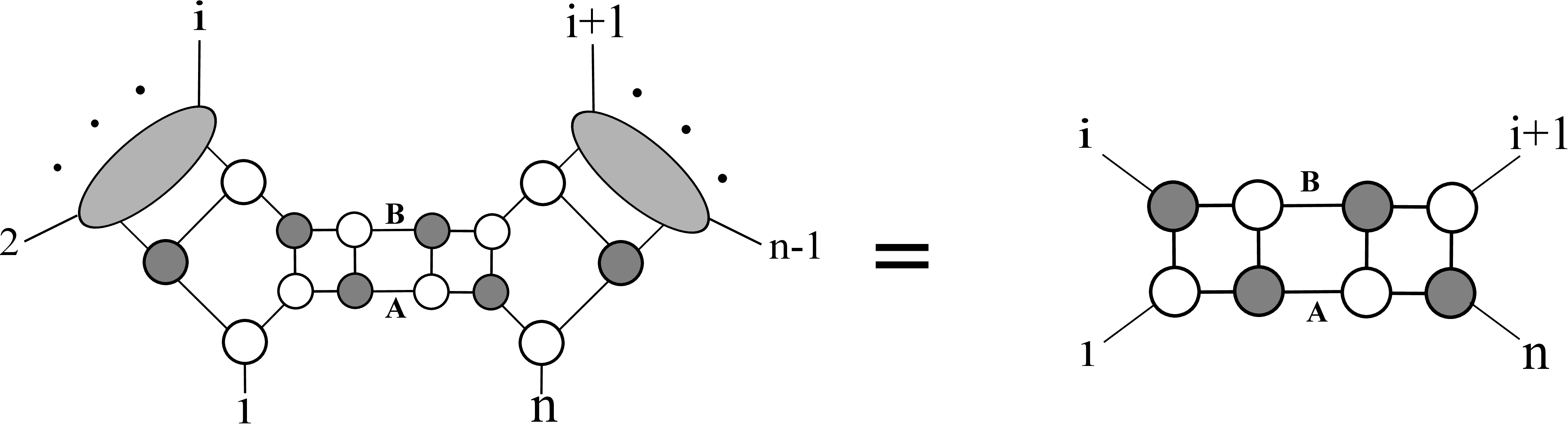}
 \caption{An n-point MHV one-loop planar amplitude after  unitarity cuts  can be converted to   a basic ``twin-box'' with only four external legs.}
 \label{fig:twinbox}
\end{figure}

We can therefore obtain the  MHV amplitude after unitarity cuts as
\begin{eqnarray}
A_c(1,2,\dots,i|i+1,\dots,n)=A_c(1,i|i+1,n),
\end{eqnarray}
where $A_c$ denotes the amplitude under each of the possible unitarity cuts. `$|$' denotes the cut line between these two legs and the other cut line between the first and last legs in the bracket. This relation shows that an n-point MHV amplitude is the same as a four-point amplitude under unitarity cuts in momentum twistor space.

\paragraph{Removing the unitarity cut constraints}~

The standard  way of reconstructing amplitudes from single cuts is by adding BCFW bridges across cut lines on a pair of external legs.
For amplitudes under unitarity cuts we do not necessarily have to add bridges on external legs.
In fact the unitarity-cut amplitudes in MHV case can be reduced to the basic ``twin-box'' in momentum twistor space (Fig.~\ref{fig:twinbox}).
They contain all the essential information of the whole amplitudes after unitarity cuts. This is also apparent  in momentum space.
After four internal integrals, all $\delta$-functions
from the three blocks vanish, reducing the number of external legs by one.
Recursively, the whole amplitude can be reduced  to a basic ``twin-box'' with four new external on-shell momenta.
This means that adding bridges to  the ``twin-box'' recover  the same result from a unitarity cut.  This new way of adding BCFW bridges will greatly simplify our subsequent computations.
In on-shell diagram it amounts to Fig.~\ref{fig:newBCFWbridging}.
\begin{figure}[ht!]
\centering
\includegraphics[width=0.83\textwidth]{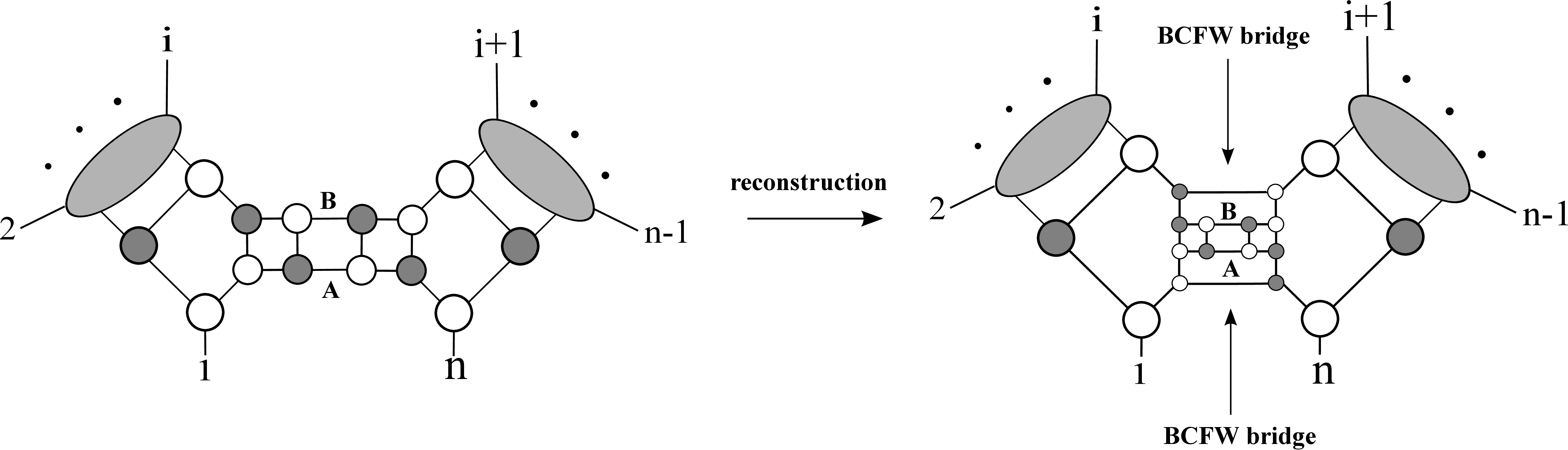}
\caption{A new way of adding BCFW bridges with simpler expressions of integrands.}
\label{fig:newBCFWbridging}
\end{figure}

The final result of this planar part is, thus,  nothing but  a one-loop four-point planar amplitude.
We denote the cut loop  momenta, $l$ and $\bar l$, by the variables $A$ and $B$ in momentum twistor space.~\footnote{We omit the terms related to integral variables $\int\left\langle ABd^2z_A\right\rangle\left\langle ABd^2z_B\right\rangle$ in this paper. Since we deal with the integrand of amplitudes, we leave this as a common factor of in the integrand.}
\begin{eqnarray}
\mathcal{A}_0(1,\dots,i|i+1,\dots,n)=\frac{\left\langle 1i i+1n\right\rangle\left\langle i i+1n1\right\rangle}{\left\langle AB1i\right\rangle\left\langle ABii+1\right\rangle\left\langle ABi+1n\right\rangle\left\langle ABn1\right\rangle},
\end{eqnarray}
where $\mathcal{A}_0$ denotes the amplitude after adding two BCFW bridge to the ``twin-box".

\paragraph{Conversion from unphysical poles to physical ones:}~~

Since the form of planar amplitudes after reconstruction is actually from a four-point $(Z_{i},Z_{i+1},Z_{1},Z_{n})$ (with the subscripts denoting the momenta of the four legs connected to the ``twin-box'' as in Fig.~\ref{fig:twinbox}) amplitude,
some propagators, say $\left\langle A\,B\,i+1\,n\right\rangle$ and $\left\langle A\,B\,i\, n\right\rangle$,  becomes unphysical poles inside  an n-point amplitude ($A$ and $B$ as before denote the loop momenta, $l$ and $\bar{l}$,  in the momentum twistor space.).

We at present  present a way to convert unphysical poles to physical ones using the unitarity cut condition.
We show by an example of an amplitude with color ordering $(1,2,\,\dots, \, n)$ and unitarity cut $\left\langle ABii+1\right\rangle\left\langle ABjj+1\right\rangle$ as shown in Fig.~\ref{fig:Pcut}.
In a particular order, this is just a planar diagram, we can  discuss the integrand.
\begin{figure}[ht!]
  \centering
 \includegraphics[width=0.33\textwidth]{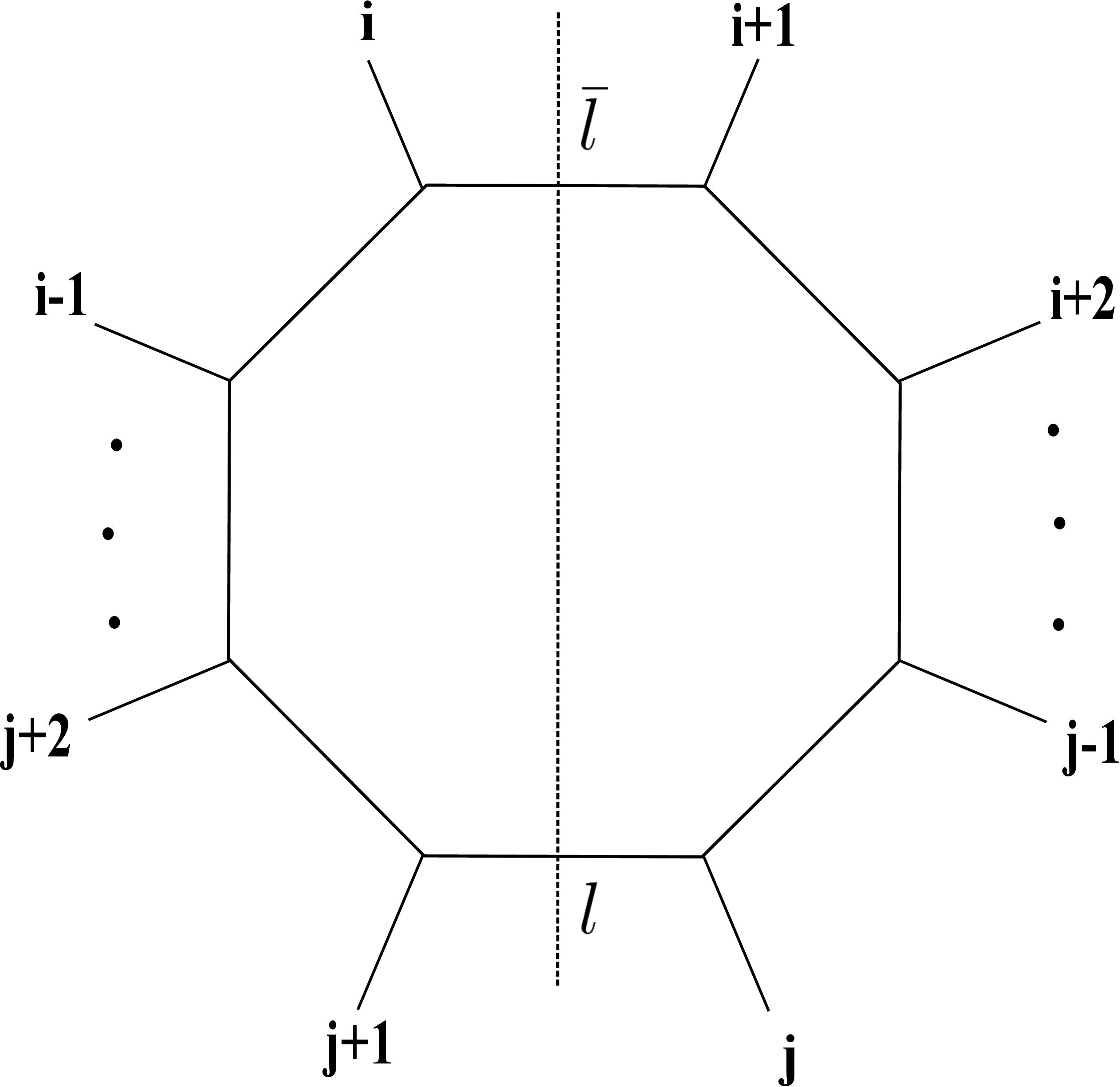}
 \caption{A unitarity cut n-point planar amplitude with color ordering $(1,2,\dots,n)$}
 \label{fig:Pcut}
\end{figure}
The unitarity cut condition is $\left\langle ABii+1\right\rangle=0,\left\langle ABjj+1\right\rangle=0$. And the poles $\left\langle ABi+1j\right\rangle$ and  $\left\langle ABj+1i\right\rangle$ become unphysical.

Imposing the unitarity cut condition,
\begin{eqnarray}\label{UCM}
\frac{\left\langle i i+1jj+1\right\rangle}{\left\langle AB i+1j\right\rangle}=\frac{\left\langle ABi+1i+2\right\rangle\left\langle ij-1jj+1\right\rangle+\left\langle ABii+2\right\rangle\left\langle j-1jj+1i+1\right\rangle}{\left\langle ABi+1i+2\right\rangle\left\langle ABj-1j\right\rangle},
\end{eqnarray}
where all of the poles in the denominator are physical.
To derive this equation, we can parameterize $Z_A, Z_B$ as follows:
\begin{eqnarray}
Z_A=c_1(Z_i +Z_{i+1})+Z_j+Z_{j+1}\nb\\
Z_B=Z_i +Z_{i+1}+c_2(Z_j+Z_{j+1}).\nb
\end{eqnarray}
Then (\ref{UCM}) is equivalent to
\begin{eqnarray}
&&\frac{\left\langle i i+1jj+1\right\rangle}{\left\langle Z_{i ,i+1}Z_{j,j+1} i+1j\right\rangle}\nb\\
&=&\frac{\left\langle Z_{i ,i+1}Z_{j,j+1}i+1i+2\right\rangle\left\langle ij-1jj+1\right\rangle+\left\langle Z_{i ,i+1}Z_{j,j+1}ii+2\right\rangle\left\langle j-1jj+1i+1\right\rangle}{\left\langle Z_{i ,i+1}Z_{j,j+1}i+1i+2\right\rangle\left\langle Z_{i ,i+1}Z_{j,j+1}j-1j\right\rangle},\nb
\end{eqnarray}
where $Z_{i ,i+1}\equiv Z_i +Z_{i+1}, Z_{j,j+1}\equiv Z_j+Z_{j+1}$. The right hand side of the equation above can be transformed as
\begin{eqnarray}
&&\frac{1}{\left\langle Z_{i ,i+1}Z_{j,j+1} i+1j\right\rangle \left\langle Z_{i ,i+1}Z_{j,j+1}i+1i+2\right\rangle\left\langle Z_{i ,i+1}Z_{j,j+1}j-1j\right\rangle}\nb\\
&\times&\left(\left\langle Z_{i ,i+1}Z_{j,j+1} i+1j\right\rangle \left\langle Z_{i ,i+1}Z_{j,j+1}i+1i+2\right\rangle\left\langle ij-1jj+1\right\rangle\right.\nb\\
&&+\left.\left\langle Z_{i ,i+1}Z_{j,j+1} i+1j\right\rangle \left\langle Z_{i ,i+1}Z_{j,j+1}ii+2\right\rangle\left\langle j-1jj+1i+1\right\rangle\right)\nb\\
&=&\frac{\left(\left\langle Z_{i ,i+1}Z_{j,j+1} i+1j\right\rangle \left\langle ij-1jj+1\right\rangle+ \left\langle Z_{i ,i+1}Z_{j,j+1}ij\right\rangle\left\langle j-1jj+1i+1\right\rangle\right)}{\left\langle Z_{i ,i+1}Z_{j,j+1} i+1j\right\rangle \left\langle Z_{i ,i+1}Z_{j,j+1}j-1j\right\rangle}\nb\\
&=&\frac{\left(\left\langle i j+1 i+1j\right\rangle \left\langle ij-1jj+1\right\rangle+ \left\langle i+1j+1ij\right\rangle\left\langle j-1jj+1i+1\right\rangle\right)}{\left\langle Z_{i ,i+1}Z_{j,j+1} i+1j\right\rangle \left\langle Z_{i ,i+1}Z_{j,j+1}j-1j\right\rangle}\nb\\
&=&\frac{\left\langle i i+1jj+1\right\rangle}{\left\langle Z_{i ,i+1}Z_{j,j+1} i+1j\right\rangle}\times \frac{\left( \left\langle ij-1jj+1\right\rangle- \left\langle j-1jj+1i+1\right\rangle\right)}{ \left\langle Z_{i ,i+1}Z_{j,j+1}j-1j\right\rangle}\nb\\
&=&\frac{\left\langle i i+1jj+1\right\rangle}{\left\langle Z_{i ,i+1}Z_{j,j+1} i+1j\right\rangle}. \nb
\end{eqnarray}
{And (\ref{UCM}) is therefore proven.}

This form of conversion is not unique: many forms can be constructed  equally up to a term  which vanishes under this unitary cut. A more general construction  is shown in
Lemma~\ref{cor:MHVFourTerm}. %}}

Here we convert it to another form related to the intersection of two planes (ii+1i+2) and (j-1jj+1), which is useful in the latter discussion.
\begin{eqnarray}\label{eq:cut}
\frac{\left\langle i i+1jj+1\right\rangle}{\left\langle AB i+1j\right\rangle}=\frac{\left\langle AB(ii+1i+2)\cap(j-1jj+1)\right\rangle}{\left\langle ABi+1i+2\right\rangle\left\langle ABj-1j\right\rangle}.
\end{eqnarray}

We use the same method to deal with pole $\left\langle ABj+1i\right\rangle$. According to (\ref{eq:cut}) and combining these two parts,
\begin{eqnarray}
\label{eq:CutPhys}
\mathcal{A}_{c_{j+1i}}&\equiv&\mathcal{A}_{c_{j+1i}}(j+1,\dots ,i|i+1,\dots, j)\\
&=&\frac{-\left\langle AB(i-1ii+1)\cap(j-1jj+1)\right\rangle\left\langle AB(jj+1j+2)\cap(i-1ii+1)\right\rangle}{\left\langle AB i-1i\right\rangle\left\langle AB ii+1\right\rangle\left\langle AB i+1i+2\right\rangle\left\langle AB j-1j\right\rangle\left\langle AB jj+1\right\rangle\left\langle AB j+1j+2\right\rangle}\nb.
\end{eqnarray}
In this paper $c_{ij}$ with $1\leqslant i<j\leqslant n-1$ denotes the cuts that divide the external lines into two groups $i\cdots j$ and $j+1\cdots i-1$. For convenience, we also use $C_{ij}$ to denote a cut set containing all the cuts $c_{12}\cdots c_{ij}$ in the union order. The corresponding $\mathcal{A}_{C_{ij}}$ denote the union of all the integrands from the cuts in $C_{ij}$. And $\mathcal{A}_{c_{ij}}$, $\mathcal{A}_{C_{ij}}$ may have the ambiguity of a rational function which will vanish under all the cuts in $C_{ij}$. We denote the corresponding arbitrary rational functions as $\mathcal{R}_{C_{ij}}$ or $\mathcal{R}_{c_{ij}}$.

\paragraph{Combining different unitarity cuts}~~

In the above paragraphs, we obtain an integrand $\mathcal{A}_c$ from each cut amplitudes. Such integrand contains only physical poles and is well-defined up to rational functions which will vanish under unitarity cuts. In this paragraph, we need to find a way to get the rational function   $\mathcal{A}_{C_{1n}}$ of the integrand such that  it is equal to  $\mathcal{A}_{c_{ij}}$ at the corresponding unitarity cuts. To this end, we define a operation $\cup$ to the constructed integrand, which means an g of two integrand from  different unitarity cuts.  After the union of all the possible unitarity cuts, we get the integrand $\mathcal{A}_{C_{1n}}$ automatically. According to the analysis in \cite{Bern94} on the unitary cut constructible for super-Yang-Mills theory, we can get $\mathcal{A}_{P}=\mathcal{A}_{C_{1n}}$.

Before the definition on $\cup$, we first define an union order for all the possible unitarity cuts as shown in Tab. \ref{tab:UnionOrder}, where the cut is label by one group of external legs in color order.
\begin{table}[htdp]
\caption{Union order of unitarity cuts}
\begin{center}
\begin{tabular}{cccccccccc}
$c$&$~$& &~& & & \\ \hline
12&$\rightarrow$&23&~&34&$\cdots$&n-3 n-2&&n-2 n-1\\
&&$\downarrow$&$\nearrow$&$\downarrow$ &$\cdots$&$\downarrow$&&$\downarrow$ \\
   &&123&&234&$\cdots$&n-4 n-3 n-2 && n-3 n-2 n-1\\
         &&&&$\downarrow$&$\cdots$& $\downarrow$&& $\downarrow$\\
      &&&&1234&$\cdots$&n-5 n-4 n-3 n-2&$\nearrow$& n-4 n-3 n-2 n-1\\
            &&&&&$\ddots$&$\vdots$&&$\vdots$\\
            &&&&~&&$\downarrow$& &$\downarrow$\\
            &&&&~&&1$\cdots$ n-2& &2$\cdots$n-1\\
\end{tabular}
\end{center}
\label{tab:UnionOrder}
\end{table}%

Now we define the operation $\cup$ on two rational functions in the function set group $\mathcal{A}_{c_{ij}}$ and $\mathcal{A}_{C_{ij}}$ as
\begin{eqnarray}
\mathcal{A}_{C_{ij}}\cup \mathcal{A}_{C_{i'j'}}.
\end{eqnarray}
The operation can be divided into two steps: First, choose a proper $\mathcal{R}_{C_{ij}}$ such that all the terms $T_{(ij),(i'j')}$, which have one cut in $C_{ij}$  and another cut  in $C_{i'j'}$, are the same in both $\mathcal{A}_{C_{ij}}$ and $\mathcal{A}_{C_{i'j'}}$, while other terms $T_{(ij)}$ or $T_{(i'j')}$ in $\mathcal{A}_{C_{ij}}$ and $\mathcal{A}_{C_{i'j'}}$  can only have one cut either in $C_{ij}$ or $C_{i'j'}$ respectively\footnote{In the following, we only verify this is possible for one-loop MHV amplitudes. And we will prove this for general ones in future work. Such procedure can also be generalized to other super Yang-Mills theory with lower super symmetry. };  Second,  add  all the terms of same formula once and all other terms. Hence we can get
\begin{eqnarray}
\mathcal{A}_{C_{ij}}\cup \mathcal{A}_{C_{i'j'}}=\sum T_{(ij),(i'j')}+\sum T_{(ij)}+\sum T_{(i'j')}+\mathcal{R}_{C_{ij}\cup C_{i'j'}}.
\end{eqnarray}
Such definition is similar for the union with $\mathcal{A}_{c_{ij}}$.

We unite the integrand from all the unitarity cuts in the union order one by one. We begin from the integrand $\mathcal{A}_{C_{12}}$ from the unitarity cut $c_{12}$, and then unite integrand $\mathcal{A}_{C_{34}}$ and $\mathcal{A}_{C_{12}}$ and then others in the order. Finally we can obtain $\mathcal{A}_{C_{1n}}$.  To this end, we first introduce some lemmas.
\begin{lemma}
\label{lem:UnitaryEq}
For a  pentagon integrand defined by five lines $\{\mathrm{L}_{i-1 i}, \mathrm{L_{ii+1}}$, $\mathrm{L_{j-1j}}$, $\mathrm{L_{jj+1}}$, $\mathrm{L_{j+1j+2}}\}$, where $\mathrm{L}_{i-1 i}=(Z_{i-1}Z_{i})$ and a plane $\mathcal{P}_j=(Z_{j-1}Z_{j}Z_{j+1})$ (or $\mathcal{P}_i=(Z_{i-1}Z_{i}Z_{i+1})$) and an arbitrary plane $\mathcal{P}_x=(Z_{x_1} Z_{x_2} Z_{x_3})$
\begin{eqnarray}
\label{eq:}
\frac{\bracf{AB\mathcal{P}_x\cap\mathcal{P}_j}}{\bracf{ABi-1i}\underline{\bracf{ABii+1}}\bracf{ABj-1j}\underline{\bracf{ABjj+1}}\bracf{ABj+1j+2}},\nb\\
\text{and} ~\frac{\bracf{AB\mathcal{P}_i\cap\mathcal{P}_x}}{\bracf{ABi-1i}\underline{\bracf{ABii+1}}\bracf{ABj-1j}\underline{\bracf{ABjj+1}}\bracf{ABj+1j+2}},\nb
\end{eqnarray}
$\exists~ \text{lines}~ \{\mathrm{Y}_2,\mathrm{Y}_3,\mathrm{Y}_4\}$ such that for any $Z_{m'}$ each integrand can be transformed to the following formulas
\begin{eqnarray}
\label{eq:FormRefLine}
&&{\bracf{AB\mathrm{Y}_2}\over \bracf{ABi-1i}\underline{\bracf{ABii+1}}\bracf{ABj-1j}\underline{\bracf{ABjj+1}}\bracf{AB m' m'+1}}\nb\\
&+&{\bracf{AB\mathrm{Y}_3}\over\bracf{ABi-1i}\underline{\bracf{ABii+1}}\underline{\bracf{ABjj+1}}\bracf{ABj+1j+2} \bracf{AB m' m'+1}}\nb\\
&+&{\bracf{AB\mathrm{Y}_4}\over \underline{\bracf{ABii+1}}\bracf{ABj-1j}\underline{\bracf{ABjj+1}}\bracf{ABj+1j+2} \bracf{AB m' m'+1}}+\mathcal{R}_{c_{i+1j}}\nb
\end{eqnarray}
under the unitarity cuts of the underlined propagators, where $\mathrm{Y}_i$ is proportional to a line which will keep the  scalar invariance of the  integrand.
\end{lemma}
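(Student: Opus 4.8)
The plan is to recognize this lemma as the pentagon (three-term) generalization of the single-pole conversion \ref{UCM}--\ref{eq:cut}, and to reduce it to one bilinear identity among four-brackets that holds on the support of the cut. The two underlined factors $\bracf{ABii+1}$ and $\bracf{ABjj+1}$ are common to the original integrands and to all three target terms, and they carry precisely the cut $c_{i+1j}$, i.e. the locus $\bracf{ABii+1}=\bracf{ABjj+1}=0$. So I would first cancel $1/(\bracf{ABii+1}\bracf{ABjj+1})$ throughout, clear the remaining common denominator $\bracf{ABi-1i}\bracf{ABj-1j}\bracf{ABj+1j+2}\bracf{ABm'm'+1}$, and observe that the asserted equality modulo $\mathcal{R}_{c_{i+1j}}$ becomes
\[
\bracf{AB\mathcal{P}_x\cap\mathcal{P}_j}\,\bracf{ABm'm'+1}=\bracf{AB\mathrm{Y}_2}\bracf{ABj+1j+2}+\bracf{AB\mathrm{Y}_3}\bracf{ABj-1j}+\bracf{AB\mathrm{Y}_4}\bracf{ABi-1i}
\]
up to a polynomial lying in the ideal generated by $\bracf{ABii+1}$ and $\bracf{ABjj+1}$; dividing such a polynomial back by the denominators is exactly an admissible $\mathcal{R}_{c_{i+1j}}$. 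The $\mathcal{P}_i$ case is identical after relabelling.

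Next I would trivialize the cut exactly as in the derivation of~\ref{UCM}: a line $(AB)$ obeying both cut conditions meets $(i,i+1)$ and $(j,j+1)$, so I parametrize it by one point on each, $Z_A=Z_i+t\,Z_{i+1}$ and $Z_B=Z_j+s\,Z_{j+1}$. Every four-bracket then becomes an affine-bilinear polynomial in $(s,t)$,
\[
\bracf{ABXY}\big|_{\mathrm{cut}}=\langle Z_iZ_jXY\rangle+t\,\langle Z_{i+1}Z_jXY\rangle+s\,\langle Z_iZ_{j+1}XY\rangle+st\,\langle Z_{i+1}Z_{j+1}XY\rangle,
\]
so each side of the cleared identity is a polynomial of bidegree $(2,2)$ in $(s,t)$, and the whole problem lives in the nine-dimensional space of such polynomials.

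Existence of $\mathrm{Y}_2,\mathrm{Y}_3,\mathrm{Y}_4$ then becomes a finite linear-algebra question. On the cut the three factors $\bracf{ABj+1j+2}$, $\bracf{ABj-1j}$, $\bracf{ABi-1i}$ are fixed bidegree-$(1,1)$ polynomials $\ell_2,\ell_3,\ell_4$, while each unknown $\bracf{AB\mathrm{Y}_k}\big|_{\mathrm{cut}}$ is a free $(1,1)$ polynomial, fixed by the four contractions $\langle a\,b\,\mathrm{Y}_k\rangle$ with $a\in\{Z_i,Z_{i+1}\}$, $b\in\{Z_j,Z_{j+1}\}$. The multiplication map $(\Phi_2,\Phi_3,\Phi_4)\mapsto\Phi_2\ell_2+\Phi_3\ell_3+\Phi_4\ell_4$ sends a twelve-dimensional space of triples of $(1,1)$ polynomials onto the nine-dimensional target, so for generic external data it is surjective and the required $\Phi_k$ exist. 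It remains to realize each $\Phi_k$ by an honest line: the four prescribed contractions are four linear conditions on the six Plücker components of $\mathrm{Y}_k$, leaving a two-parameter family in which the single quadratic Plücker relation can be solved, so $\mathrm{Y}_k$ can be taken proportional to a genuine line carrying the correct projective weights --- this is the ``scalar invariance'' clause. Finally I would lift off the cut: the difference of the two sides of the cleared identity vanishes on the cut locus, hence lies in $(\bracf{ABii+1},\bracf{ABjj+1})$, and reassembles into the remainder $\mathcal{R}_{c_{i+1j}}$.

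I expect the main obstacle to be the structural point rather than the dimension count: showing that the $\Phi_k$ can be \emph{simultaneously} realized by decomposable bitwistors (genuine lines) with the correct little-group weights, and that this can be done covariantly so that the $\mathrm{Y}_k$ appear as intersections of planes through the external twistors --- generalizing the explicit conversion $\bracf{ii+1jj+1}/\bracf{ABi+1j}=\bracf{AB(ii+1i+2)\cap(j-1jj+1)}/(\bracf{ABi+1i+2}\bracf{ABj-1j})$ of~\ref{eq:cut}. A secondary obstacle is the degenerate kinematics where surjectivity fails; there I would argue by continuity, since the integrand is rational in the external data and an identity established on a dense open set extends.
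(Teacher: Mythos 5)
Your proposal is essentially correct but proves the lemma by a genuinely different route. The paper's proof is constructive: it multiplies by $\bracf{ABm'm'+1}/\bracf{ABm'm'+1}$, expands $\mathcal{P}_x\cap\mathcal{P}_j$ over the lines $\mathrm{L}_{j-1\,j},\mathrm{L}_{j\,j+1},\mathrm{L}_{j-1\,j+1}$, then repeatedly expands the surviving bitwistors ($(m'\,m'+1)$ in a basis built from $\{Z_j,Z_{i-1},Z_i,Z_{i+1}\}$, then $Z_{j-1}$ in $\{Z_j,Z_{j+1},Z_{j+2},Z_i\}$) and applies the Schouten identity, so that every term either cancels a denominator, acquires a cut propagator in the numerator (hence lands in $\mathcal{R}_{c_{i+1j}}$), or is manifestly of the target form; the lines $\mathrm{Y}_k$ emerge explicitly as covariant combinations of external bitwistors. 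You instead clear denominators, restrict to the cut via $Z_A=Z_i+tZ_{i+1}$, $Z_B=Z_j+sZ_{j+1}$, and turn the statement into surjectivity of the multiplication map from triples of bidegree-$(1,1)$ polynomials onto the nine-dimensional space of bidegree-$(2,2)$ polynomials, followed by a Pl\"ucker-quadric argument to realize the solutions as honest lines. What your approach buys is a clean, conceptual existence proof that explains \emph{why} three numerators suffice (a dimension count $12\to 9$); what the paper's approach buys is the explicit, manifestly covariant form of the $\mathrm{Y}_k$, which is actually needed downstream in Lemma~\ref{cor:MHVFourTerm} and Lemma~\ref{lem:RefCutEqu}, where the precise intersections $(\cdots)\cap(\cdots)$ are manipulated term by term.

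Two caveats on your argument. First, the restricted factors $\ell_2,\ell_3,\ell_4$ are \emph{not} generic bilinear forms: on the cut $\bracf{ABj+1\,j+2}$ is independent of $s$, $\bracf{ABj-1\,j}$ is divisible by $s$, and $\bracf{ABi-1\,i}$ is divisible by $t$, so each $\ell_k$ factors into two rulings of $\mathbb{P}^1\times\mathbb{P}^1$. Your surjectivity claim therefore cannot be waved through as ``generic forms''; it does still hold (the pairwise intersections consist of two reduced points avoided by the third form, so $\dim(\ell_2,\ell_3,\ell_4)_{(2,2)}=7+4-2=9$), but this check is needed and you did not do it. Second, the clause that each $\mathrm{Y}_k$ be ``proportional to a line which will keep the scalar invariance of the integrand'' is part of the conclusion, and your Pl\"ucker argument only produces \emph{some} decomposable representative for generic kinematics, with no control over its homogeneity in the individual external twistors; you flag this yourself as the main obstacle but do not close it. The paper's explicit expansion disposes of this automatically, since every $\mathrm{Y}_k$ it produces is built from external bitwistors with fixed weights.
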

\begin{proof}
We transform the integrand with numerator $\bracf{AB\mathcal{P}_x\cap\mathcal{P}_j}$ as following.
\begin{eqnarray}
\label{eq:}
&&\frac{\bracf{AB\mathcal{P}_x\cap\mathcal{P}_j}}{\bracf{ABi-1i}\underline{\bracf{ABii+1}}\bracf{ABj-1j}\underline{\bracf{ABjj+1}}\bracf{ABj+1j+2}}\\
&=&\frac{\bracf{AB m' m'+1}\bracf{AB\mathcal{P}_x\cap\mathcal{P}_j}}{\bracf{ABi-1i}\underline{\bracf{ABii+1}}\bracf{ABj-1j}\underline{\bracf{ABjj+1}}\bracf{ABj+1j+2}\bracf{AB m' m'+1}}\nb
\end{eqnarray}
Then we expand  the intersection of two plane  as
\begin{eqnarray}
\label{eq:}
\mathcal{P}_x\cap\mathcal{P}_j=a_1 \mathrm{L}_{j-1 j}+ a_2 \mathrm{L}_{j j+1}+a_3 \mathrm{L}_{j-1 j+1},
\end{eqnarray}
where the coefficients $a_i$ are constant which is related to the plane $\mathcal{P}_x$. It is obvious that only the  term with line $\mathrm{L}_{j-1 j+1}$ are  not obviously of the form in (\ref{eq:FormRefLine}). The numerator of this term  is
\begin{eqnarray}
\label{eq:Trans1}
\bracf{AB m' m'+1}\bracf{ABj-1 j+1}.
\end{eqnarray}
Since any point in twistor space of $\mathbb{CP}^3$ can be expand as four independent point. Here we choose four base points $\{Z_{j}, Z_{i-1}, Z_{i}, Z_{i+1}\}$ to expand points $\{Z_{m'} Z_{m'+1}\}$. Then we can expand bi-twistor
$(m' m'+1)$ based on $(j i-1)$, $(j i)$, $(j i+1)$, $(i-1 i)$, $(i i+1)$, $(i-1 i+1)$.  Then
\begin{equation}\label{eq:}
\bracf{AB m' m'+1}\bracf{ABj-1 j+1}\rightarrow\left\{ \begin{array}{c}
 \bracf{AB j i}\bracf{ABj-1 j+1}, ~\bracf{AB j i\pm1}\bracf{ABj-1 j+1} \\
\bracf{AB i i\pm1}\bracf{ABj-1 j+1}~~~ \\
\bracf{AB i-1 i+1}\bracf{ABj-1 j+1}
\end{array}\right. \nb
\end{equation}
According to the Schouten identity (\ref{ABId}), it is easy to see that only the terms with line $\mathrm{L}_{i-1 i+1}$ is not obvious to of the form (\ref{eq:FormRefLine})
\begin{eqnarray}
\label{eq:Trans2}
\bracf{AB i-1 i+1}\bracf{ABj-1 j+1}.
\end{eqnarray}
Then expanding point $Z_{j-1}$ as $\{Z_{j}, Z_{j+11}, Z_{j+2}, Z_{i}\}$,  (\ref{eq:Trans2}) can be transformed as
\begin{equation}\label{eq:}
\bracf{AB i-1 i+1}\bracf{ABj-1 j+1}\rightarrow\left\{ \begin{array}{c}
\bracf{AB i-1 i+1}\bracf{ABj j+1} \\
\bracf{AB i-1 i+1}\bracf{ABj+2 j+1}~~~ \\
\bracf{AB i-1 i+1}\bracf{ABi j+1}
\end{array}\right. \nb
\end{equation}
Finally all the terms are obviously of form (\ref{eq:FormRefLine}) according to
(\ref{ABId}). The integrand with numerator $\bracf{AB\mathcal{P}_i\cap\mathcal{P}_x}$ can be also proved similarly. \endofproof
\end{proof}
\begin{lemma}
\label{cor:MHVFourTerm}
For any unitarity cuts $c_{i+1j}$, we can choose a line $\mathrm{L}_{m'}=(m' m'+1)$ such that $\mathcal{A}_{c_{i+1j}}$ is
 \begin{eqnarray}
\mathcal{A}_{c_{i+1j}}&=&\mathcal I_{5}[\mathcal{P}_i\cap \mathcal{P}_j, \mathrm{L}_{m'}]+\mathcal I_{5}[\mathcal{P}_{i}\cap \mathcal{P}_{j+1}, \mathrm{L}_{m'}]\nn
&+&\mathcal I_{5}[\mathcal{P}_{i+1}\cap \mathcal{P}_{j}, \mathrm{L}_{m'}]+\mathcal I_{5}[\mathcal{P}_{i+1}\cap \mathcal{P}_{j+1}, \mathrm{L}_{m'}],
\end{eqnarray}
where $$\mathcal I_{5}[\mathcal{P}_i\cap \mathcal{P}_j, \mathrm{L}_{m'}]={\bracf{AB\mathcal{P}_i\cap \mathcal{P}_j} \bracf{i j \mathrm{L}_{m'}}\over \bracf{ABi-1 i}\bracf{AB i i+1}\bracf{ABj-1 j}\bracf{ABj j+1}\bracf{ABm' m'+1}}.$$
The geometry of the terms are shown in Fig.~\ref{GeneralCutGeometry}.
\end{lemma}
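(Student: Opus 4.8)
The plan is to prove the identity by a double partial-fraction (Schouten) decomposition of the explicit six-propagator expression for $\mathcal{A}_{c_{i+1j}}$, of exactly the type established in Lemma~\ref{lem:UnitaryEq}, applied once in the ``$i$-sector'' and once in the ``$j$-sector.'' First I would write $\mathcal{A}_{c_{i+1j}}$ in the hexagon form produced by the cut, as in \ref{eq:CutPhys}: a numerator that is a product of two intersection brackets and a denominator $\bracf{ABi-1i}\bracf{ABii+1}\bracf{ABi+1i+2}\bracf{ABj-1j}\bracf{ABjj+1}\bracf{ABj+1j+2}$. The target sum of four $\mathcal I_5$ terms retains in \emph{every} summand the two cut lines $(i,i+1)$ and $(j,j+1)$ together with a single common reference line $(m',m'+1)$; the distinguishing factors are only $\bracf{ABi-1i}$ versus $\bracf{ABi+1i+2}$ on the $i$-side and $\bracf{ABj-1j}$ versus $\bracf{ABj+1j+2}$ on the $j$-side. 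Hence what must be produced is a partial fraction in the pair $\{\bracf{ABi-1i},\bracf{ABi+1i+2}\}$ and simultaneously in the pair $\{\bracf{ABj-1j},\bracf{ABj+1j+2}\}$, with the reference line carrying the leftover scalar weight (the weight count $(-4,-4)$ in $(A,B)$ matches on both sides, so the insertion of one $\bracf{ABm'm'+1}$ in the denominator is forced).

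The core step is the $i$-sector split. I would expand the intersection line in the numerator in a basis of lines lying in the planes $\mathcal{P}_i$ and $\mathcal{P}_{i+1}$, which share the cut line $(i,i+1)$, so that $1/(\bracf{ABi-1i}\bracf{ABi+1i+2})$ is replaced by a $\mathcal{P}_i$-piece proportional to $1/\bracf{ABi-1i}$ and a $\mathcal{P}_{i+1}$-piece proportional to $1/\bracf{ABi+1i+2}$, the cross term being removed using $\bracf{ABii+1}=0$ on the cut. Repeating the identical manipulation in the $j$-sector for the pair $(\mathcal{P}_j,\mathcal{P}_{j+1})$ then produces the four combinations $(a,b)\in\{i,i+1\}\times\{j,j+1\}$, each multiplied by $1/\bracf{ABm'm'+1}$ and carrying an intersection numerator $\bracf{AB\mathcal{P}_a\cap\mathcal{P}_b}$.

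To finish, I would collect the constant prefactors generated by the two splits and, using the reference-line manipulations already carried out in the proof of Lemma~\ref{lem:UnitaryEq} (the successive Schouten expansions around $(m',m'+1)$), recognize them as the scalar $\bracf{ab\,m'm'+1}$. This casts each of the four terms into the precise form $\mathcal I_5[\mathcal{P}_a\cap\mathcal{P}_b,\mathrm{L}_{m'}]$, with every piece that is not of this form collected into $\mathcal{R}_{c_{i+1j}}$ because it is manifestly proportional to $\bracf{ABii+1}$ or $\bracf{ABjj+1}$ and therefore vanishes on the cut $c_{i+1j}$.

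The main obstacle is the numerator bookkeeping rather than any single identity. I must verify that after both Schouten splits each of the four summands emerges with exactly the clean intersection bracket $\bracf{AB\mathcal{P}_a\cap\mathcal{P}_b}$ and the exact scalar coefficient $\bracf{ab\,m'm'+1}$, and that a \emph{single} reference line $(m',m'+1)$ can be chosen to work simultaneously for all four terms. Controlling which Schouten remainders survive and which are proportional to the cut propagators $\bracf{ABii+1}$ or $\bracf{ABjj+1}$ (hence fall into $\mathcal{R}_{c_{i+1j}}$) is the delicate part; since this is precisely the mechanism already validated in Lemma~\ref{lem:UnitaryEq}, I would reuse that argument term by term instead of re-deriving the full Schouten chains from scratch.
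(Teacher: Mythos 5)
Your plan correctly identifies the combinatorial shape of the answer (four terms labelled by $\{\mathcal P_i,\mathcal P_{i+1}\}\times\{\mathcal P_j,\mathcal P_{j+1}\}$, one common reference line, matching scaling weights), but the mechanism you propose does not work as described and is not the paper's. The paper does not split the hexagon \ref{eq:CutPhys}: it returns to the cut-equivalent box form $-\bracf{ii+1jj+1}^2/(\bracf{ABj+1i}\bracf{ABii+1}\bracf{ABi+1j}\bracf{ABjj+1})$, inserts $\bracf{ABm'm'+1}/\bracf{ABm'm'+1}$, and expands the \emph{numerator} $\bracf{ABm'm'+1}$ in the six-line basis built from $Z_i,Z_{i+1},Z_j,Z_{j+1}$. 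On the cut only four terms survive, with explicit coefficients $\bracf{ab\,m'm'+1}/\bracf{ii+1jj+1}$; each surviving bracket cancels one unphysical pole of a Schouten-equivalent rewriting of the box, and the remaining unphysical pole is then traded for the intersection numerator and two physical propagators via \ref{eq:cut}. The intersection brackets in the final answer are thus \emph{created} by \ref{eq:cut}, not obtained by decomposing the hexagon's numerator.

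Two concrete problems with your route. First, both intersection factors in the numerator of \ref{eq:CutPhys} involve $\mathcal P_i=(i-1\,i\,i+1)$ and neither involves $\mathcal P_{i+1}$; a line lying in $\mathcal P_i$ expands only over $L_{i-1,i}$, $L_{i,i+1}$ and $L_{i-1,i+1}$, so your ``$\mathcal P_{i+1}$-piece proportional to $1/\bracf{ABi+1i+2}$'' never arises from that expansion, and the term that actually survives is the cross term $\bracf{ABi-1\,i+1}$, which does \emph{not} vanish on the cut (it is the $L_{i,i+1}$ coefficient that drops). Taming that leftover is exactly the long Schouten chain in Lemma~\ref{lem:UnitaryEq}, and even then Lemma~\ref{lem:UnitaryEq} is only an existence statement for some lines $Y_2,Y_3,Y_4$: it cannot be ``reused term by term'' to recognize the specific scalars $\bracf{ij\,m'm'+1}$, $\bracf{i\,j+1\,m'm'+1}$, etc., which are the actual content of this lemma and which the paper obtains only from the explicit six-line expansion of $(m'\,m'+1)$. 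Second, if both of your sector splits act on the two numerator factors as described, each surviving term carries a weight-one numerator equal to $\bracf{ABm'm'+1}$ times constants rather than $\bracf{AB\mathcal P_a\cap\mathcal P_b}$, i.e., you would land on boxes, not on the pentagons with intersection numerators that the lemma asserts. The proof genuinely needs the unphysical-pole bookkeeping of the box form, which your plan bypasses.
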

\begin{figure}[ht!]
  \centering
 \includegraphics[width=0.53\textwidth]{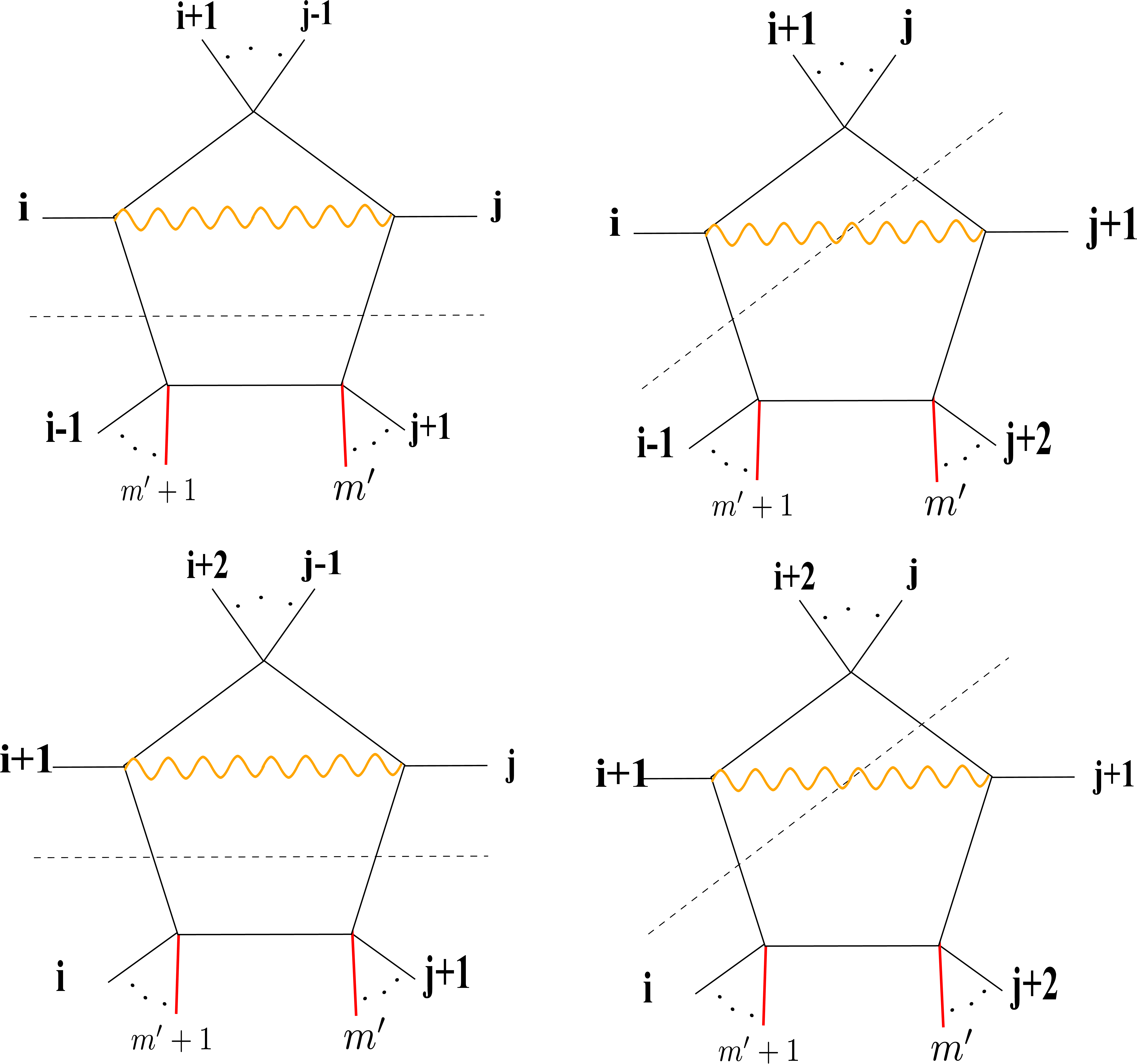}
 \caption{$\mathcal{A}_{c_{i+1j}}$ is the sum of these four terms without unphysical poles. The wavy line (ij) in this figure means the pentagon integrand $\mathcal I_5[\mathcal P_{i}\cap \mathcal P_j,(m'm'+1)]$ and the dash line denotes the unitary cuts. }
 \label{GeneralCutGeometry}
\end{figure}

\begin{proof}
According to Lemma~\ref{lem:UnitaryEq}, it is easy to see that the integrand (\ref{eq:CutPhys})  can be transformed to pentagons and boxes with one propagator $\bracf{ABm' m'+1}$. The explicit form can be calculated directly.
First, we can expand bi-twistor
$(Z_{m'}Z_{m'+1})$ based on six bi-twistors in the set $\mathcal{L}=\{(Z_{i}Z_{i+1})$, $(Z_{i}Z_{j+1})$, $(Z_{i+1}Z_{j})$, $(Z_{i}Z_{j})$, $(Z_{j}Z_{j+1})$, $(Z_{j+1}Z_{i+1})\}$.
\begin{eqnarray}
Z_{m'}Z_{m'+1}=\sum\limits_{(Z_mZ_n)\in \mathcal{L}}\frac{\left\langle klm'm'+1\right\rangle}{\left\langle ii+1jj+1\right\rangle}Z_mZ_n
\end{eqnarray}
where $(Z_{k}Z_{l})$ is the line in set $\mathcal{L}$ which do not have the common point with $(Z_mZ_n)$.
Then we add $(Z_A Z_{B})$ to get a $\left\langle\ ABm'm'+1\right\rangle$ as
 \begin{eqnarray}
\bracf{ABm'm'+1}&=&\frac{\left\langle ij+1m'm'+1\right\rangle}{\left\langle ii+1jj+1\right\rangle}\left\langle ABi+1j\right\rangle+\frac{\left\langle i+1jm'm'+1\right\rangle}{\left\langle ii+1jj+1\right\rangle}\left\langle ABij+1\right\rangle\\
&~&\frac{\left\langle j+1i+1m'm'+1\right\rangle}{\left\langle ii+1jj+1\right\rangle}\left\langle ABij\right\rangle+\frac{\left\langle ijm'm'+1\right\rangle}{\left\langle ii+1jj+1\right\rangle}\left\langle ABj+1i+1\right\rangle,\nb
\end{eqnarray}
where the unitarity cut condition $\left\langle ABii+1\right\rangle,\left\langle ABjj+1\right\rangle=0$ has been applied.

Then we add term $\left\langle ABm'm'+1\right\rangle$ to both denominator and numerator of $\mathcal{A}_{c_{i+1j}}$, which will not affect the final answer. We first deal with the term
$$
\mathcal{A}_{c_{i+1j}}\frac{\left\langle i+1jm'm'+1\right\rangle}{\left\langle ii+1jj+1\right\rangle}\frac{\left\langle ABij+1\right\rangle}{\left\langle ABm'm'+1\right\rangle}.
$$
According to (\ref{eq:cut}) and (\ref{eq:CutPhys}), we can get $$
\mathcal{A}_{c_{i+1j}}=\frac{-\left\langle AB(ii+1i+2)\cap(j-1jj+1)\right\rangle\left\langle i i+1jj+1\right\rangle}{\left\langle AB ii+1\right\rangle\left\langle AB jj+1\right\rangle\left\langle ABi+1i+2\right\rangle\left\langle ABj-1j\right\rangle\left\langle AB j+1i\right\rangle}.$$
Put them together, the term $\left\langle\ ABij+1\right\rangle$ vanishes, resulting
 \begin{eqnarray}
&~&\frac{\left\langle AB(ii+1i+2)\cap(j-1jj+1)\right\rangle\left\langle  i+1jm'm'+1\right\rangle}{\left\langle AB ii+1\right\rangle\left\langle ABi+1i+2\right\rangle\left\langle ABj-1j\right\rangle\left\langle AB jj+1\right\rangle\left\langle AB m'm'+1\right\rangle}\nn
&=&\mathcal I_{5}[\mathcal{P}_{i+1}\cap \mathcal{P}_{j},(m',m'+1)],
\end{eqnarray}
and this is exactly one term of the final answer.
Similarly, we can get $\mathcal I_{5}[\mathcal{P}_{i}\cap \mathcal{P}_{j+1},(m',m'+1)]$ from $\frac{\left\langle ABi+1j\right\rangle}{\left\langle ABm'm'+1\right\rangle}$.
In order to get the other two terms, we need to transform as
\begin{eqnarray}
\mathcal{A}_{c_{i+1j}}&=&-\frac{\left\langle i i+1jj+1\right\rangle^2}{\left\langle ABj+1i\right\rangle\left\langle ABii+1\right\rangle\left\langle ABi+1j\right\rangle\left\langle ABjj+1\right\rangle}\nn
&=&-\frac{\left\langle i i+1jj+1\right\rangle^2}{\left\langle ABii+1\right\rangle\left\langle ABjj+1\right\rangle\left\langle ABj+1i+1\right\rangle\left\langle ABij\right\rangle}
\end{eqnarray}
based on Schouten identity and unitarity cut condition.
Also, use the relation between dash line and wavy line, we can get
\begin{eqnarray}
\frac{\left\langle i i+1jj+1\right\rangle}{\left\langle AB ij\right\rangle}=-\frac{\left\langle AB(i-1ii+1)\cap(j-1jj+1)\right\rangle}{\left\langle ABi-1i\right\rangle\left\langle ABj-1j\right\rangle}
\end{eqnarray}
if we use this equation to replace $\left\langle AB ij\right\rangle$, while the remaining $\left\langle ABj+1i+1\right\rangle$ will be canceled by terms in $\left\langle ABm'm'+1\right\rangle$, and will become $\mathcal I_{5}[\mathcal{P}_{i}\cap \mathcal{P}_{j},(m',m'+1)]$
\endofproof
\end{proof}
\begin{lemma}
\label{lem:RefCutEqu}
Under the unitarity cut $c_{i+1n-1}$ we have
\begin{eqnarray}
\label{eq:Lemma2}
&&\mathcal I_{5}[\mathcal P_{j}\cap \mathcal P_{i},(n-1n)]+\mathcal I_{5}[\mathcal P_{j}\cap \mathcal P_{i+1},(n-1n)]\nn
&=&\mathcal H_{j-1j}-\mathcal H_{jj+1}+\mathcal{R}_{c_{i+1n-1}},
\end{eqnarray}
where
\begin{eqnarray}
\mathcal H_{j-1j}&=&{\bracf{j-1j(ABn-1)\cap\mathcal{P}_{i}}\bracf{ii+1i+2n}\over\bracf{ABj-1j}\bracf{ABi-1 i}\underline{\bracf{ABii+1}}\bracf{ABi+1i+2}\underline{\bracf{ABn-1n}}}\nb\\
&+&{\bracf{j-1j(ABn)\cap\mathcal{P}_{i}}\bracf{n-1ii+1i+2}\over\bracf{ABj-1j}\bracf{ABi-1 i}\underline{\bracf{ABii+1}}\bracf{ABi+1i+2}\underline{\bracf{ABn-1n}}}.
\end{eqnarray}
\end{lemma}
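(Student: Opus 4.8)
The plan is to prove the identity by clearing denominators and reducing it to a single numerator identity among four-brackets, to be verified modulo the ideal generated by the two cut propagators $\bracf{ABii+1}$ and $\bracf{ABn-1n}$. All four objects involved share the propagators $\bracf{ABi-1i}$, $\bracf{ABii+1}$ and $\bracf{ABn-1n}$; the left pentagons carry in addition $\{\bracf{ABj-1j},\bracf{ABjj+1}\}$, while $\mathcal H_{j-1j}$ and $\mathcal H_{jj+1}$ carry $\{\bracf{ABj-1j},\bracf{ABi+1i+2}\}$ and $\{\bracf{ABjj+1},\bracf{ABi+1i+2}\}$. Over the common denominator $D_6=\bracf{ABj-1j}\bracf{ABjj+1}\bracf{ABi-1i}\bracf{ABii+1}\bracf{ABi+1i+2}\bracf{ABn-1n}$ the statement becomes an equality of numerators that are quadratic in the bi-twistor $(AB)$. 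Throughout I would use only two tools: the bilinear intersection rule $\bracf{AB\,\mathcal P_x\cap\mathcal P_y}=\bracf{A\mathcal P_x}\bracf{B\mathcal P_y}-\bracf{A\mathcal P_y}\bracf{B\mathcal P_x}$ (applied also when one plane is $(ABn-1)$ or $(ABn)$), and the Schouten identity~\ref{ABId} in its five-twistor form. A term whose numerator acquires a factor $\bracf{ABii+1}$ or $\bracf{ABn-1n}$ cancels that factor in $D_6$, loses the full double-pole of the cut, and is therefore dropped into $\mathcal R_{c_{i+1n-1}}$.

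I would start from the right-hand side. Expanding $(ABn-1)\cap\mathcal P_i$ and $(ABn)\cap\mathcal P_i$ with the bilinear rule rewrites the numerator of $\mathcal H_{j-1j}$ as $\bracf{j\mathcal P_i}$ times $[\,\bracf{AB(j-1)(n-1)}\bracf{ii+1i+2n}+\bracf{AB(j-1)n}\bracf{n-1ii+1i+2}\,]$ minus $\bracf{(j-1)\mathcal P_i}$ times the same bracket with $j-1\to j$. Recognising $\bracf{ii+1i+2n}$ and $\bracf{n-1ii+1i+2}$ (up to sign) as the $\mathcal P_{i+1}$-brackets $\bracf{n\mathcal P_{i+1}}$ and $\bracf{(n-1)\mathcal P_{i+1}}$, I would apply the Schouten identity~\ref{ABId} to the five twistors $\{A,B,j-1,n-1,n\}$ (and to $\{A,B,j,n-1,n\}$) contracted with the plane $\mathcal P_{i+1}$. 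Each bracket pair then collapses to a single intersection $\bracf{AB\,\mathcal P_{i+1}\cap((j-1)(n-1)n)}$ (resp. with $j$), plus a leftover proportional to $\bracf{ABn-1n}$ that is absorbed into $\mathcal R_{c_{i+1n-1}}$. Thus, modulo the cut, $\mathcal H_{j-1j}$ reduces to $\bracf{j\mathcal P_i}\bracf{AB\mathcal P_{i+1}\cap((j-1)(n-1)n)}-\bracf{(j-1)\mathcal P_i}\bracf{AB\mathcal P_{i+1}\cap(j(n-1)n)}$ over its five propagators, and $\mathcal H_{jj+1}$ to the analogous expression with $j-1\to j,\ j\to j+1$.

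I would then reduce the left-hand side in the mirror fashion, expanding $\bracf{AB\mathcal P_j\cap\mathcal P_i}$ and $\bracf{AB\mathcal P_j\cap\mathcal P_{i+1}}$ and using that $\mathcal P_i$ and $\mathcal P_{i+1}$ share the edge $(i,i+1)$, so that multiples of the cut propagator $\bracf{ABii+1}$ drop out. Combining the two left pentagons over $D_6$ and feeding in the two kinematic factors $\bracf{ji(n-1)n}$, $\bracf{j(i+1)(n-1)n}$ should produce precisely the intersection form found for $\mathcal H_{j-1j}-\mathcal H_{jj+1}$; the relative minus sign between $\mathcal H_{j-1j}$ and $\mathcal H_{jj+1}$ is exactly what makes the non-adjacent $\bracf{ABj-1j+1}$-type contributions cancel between the two $\mathcal H$ terms. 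Matching the coefficients of $\bracf{j\mathcal P_i}$ and $\bracf{(j\mp1)\mathcal P_i}$ then yields the claimed equality up to $\mathcal R_{c_{i+1n-1}}$. This is the same reduction strategy already used for Lemma~\ref{lem:UnitaryEq} and Lemma~\ref{cor:MHVFourTerm}, so no new identity is needed.

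The hard part will be organisational rather than conceptual: keeping a clean ledger of which Schouten-generated terms carry a factor $\bracf{ABii+1}$ or $\bracf{ABn-1n}$ — and hence belong in $\mathcal R_{c_{i+1n-1}}$ — against those that survive on the double cut, all while tracking the signs generated by reordering four-brackets. A genuine subtlety to flag is that the reference line $(n-1n)$ coincides here with one of the two cut lines, so $\bracf{ABn-1n}$ occurs simultaneously as a denominator factor and as a cut condition; one must consistently read a numerator factor $\bracf{ABn-1n}$ as lowering the pole order on the cut (so that it lands in $\mathcal R_{c_{i+1n-1}}$) rather than as an identically vanishing object. Once this accounting is fixed the identity follows from a finite number of applications of~\ref{ABId}.
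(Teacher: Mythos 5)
Your strategy is sound and, at bottom, draws on the same toolbox as the paper's proof --- bilinear expansion of intersection brackets, the five-term Schouten identity, and the rule that a numerator acquiring a factor $\bracf{ABii+1}$ or $\bracf{ABn-1n}$ loses the double pole and is absorbed into $\mathcal{R}_{c_{i+1n-1}}$ --- but you organize the computation in the opposite direction. The paper works forward from the left-hand side only: it first establishes the merging identity $\bracf{AB\mathcal{P}_j\cap\mathcal{P}_{i+1}}\bracf{ABi-1i}=\bracf{AB\mathcal{P}_j\cap\mathcal{P}_{i}}\bracf{ABi+2i}$ (obtained by Schouten-expanding $Z_{i-1}$ and watching three terms cancel), which collapses the two pentagons into a single numerator $\bracf{AB\mathcal{P}_j\cap\mathcal{P}_{i}}\bigl(\bracf{ABi+2i}\bracf{ji+1n-1n}+\bracf{ABi+1i+2}\bracf{jin-1n}\bigr)$; the second factor is then recognized as $\bracf{BA(ii+1i+2)\cap(jn-1n)}=\bracf{ABn-1j}\bracf{ii+1i+2n}+\bracf{ABnj}\bracf{n-1ii+1i+2}$ using both cut conditions, and finally $\bracf{AB\mathcal{P}_j\cap\mathcal{P}_i}$ is split into its $\bracf{ABj-1j}$ and $\bracf{ABjj+1}$ pieces --- the cross term $\bracf{ABj-1j+1}$ being redistributed via $\bracf{ABj-1j+1}\bracf{ABkj}=\bracf{ABj-1k}\bracf{ABj+1j}+\bracf{ABj-1j}\bracf{ABkj+1}$ --- so that $\mathcal H_{j-1j}-\mathcal H_{jj+1}$ can be read off directly. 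You instead reduce both sides to a common canonical ``intersection form'' and match coefficients; your reduction of $\mathcal H_{j-1j}$ via the Schouten identity on $\{A,B,j-1,n-1,n\}$ contracted with $\mathcal{P}_{i+1}$ is correct and is essentially the paper's final step run backwards. What your route costs is that, without the merging identity, you must expand $\bracf{AB\mathcal{P}_j\cap\mathcal{P}_i}$ and $\bracf{AB\mathcal{P}_j\cap\mathcal{P}_{i+1}}$ separately and carry roughly twice the Schouten bookkeeping on the left-hand side (and note you cannot simultaneously use the $\mathcal{P}_j$-line expansion, which produces the $\bracf{ABj-1j}/\bracf{ABjj+1}$ split, and the $\mathcal{P}_i$-line expansion, which exposes the $\bracf{ABii+1}$ drop-out, on the same bracket); what it buys is a symmetric treatment of the two sides and a transparent explanation of why the relative sign between $\mathcal H_{j-1j}$ and $\mathcal H_{jj+1}$ is forced. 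Your closing observation --- that a numerator factor $\bracf{ABn-1n}$ must be read as lowering the pole order on the cut rather than as identically zero, since the reference line coincides with a cut line --- is exactly the right way to handle the one genuine subtlety, and the paper implicitly relies on the same reading.
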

\begin{proof}
First we expand $\bracf{AB\mathcal{P}_j\cap\mathcal{P}_{i+1}}$ and add a term $\bracf{ABi-1i}$ both in the numerator and denominator. Since two parts have the same denominator, we only deal with the numerators. We first deal with $\bracf{AB\mathcal{P}_j\cap\mathcal{P}_{i+1}}\bracf{ABi-1i}$
\begin{eqnarray}
&&\bracf{AB\mathcal{P}_j\cap\mathcal{P}_{i+1}}\nn
&=&\bracf{ABj-1j}\bracf{j+1ii+1i+2}+\bracf{ABjj+1}\bracf{j-1ii+1i+2}+\bracf{ABj-1j+1}\bracf{ii+1i+2j}\nb.
\end{eqnarray}
We expand $Z_{i-1}$ in $\bracf{ABi-1i}$ based on Schouten identity in each term in previous equation and obtain six terms. The sum of three terms which have the same factor $\bracf{i-1ii+1i+2}$ is $\bracf{i-1ii+1i+2}(\bracf{ABj-1j}\bracf{ABj+1i}+\bracf{ABjj+1}\bracf{ABj-1i}-\bracf{ABj-1j+1}\bracf{ABji})$
Three terms in the bracket equals zero based on Schouten identity. And the sum of remaining terms forms $\bracf{AB\mathcal{P}_j\cap\mathcal{P}_{i}}\bracf{ABi+2i}$.
Now we can get an very important equation
\begin{eqnarray}
\bracf{AB\mathcal{P}_j\cap\mathcal{P}_{i+1}}\bracf{ABi-1i}=\bracf{AB\mathcal{P}_j\cap\mathcal{P}_{i}}\bracf{ABi+2i}
\end{eqnarray}
The sum of the two numerators in \ref{eq:Lemma2} is
\begin{eqnarray}\label{eq:Numerator}
\bracf{AB\mathcal{P}_j\cap\mathcal{P}_{i}}(\bracf{ABi+2i}\bracf{ji+1n-1n}+\bracf{ABi+1i+2}\bracf{jin-1n})
\end{eqnarray}
Now we deal with $(\bracf{ABi+2i}\bracf{ji+1n-1n}+\bracf{ABi+1i+2}\bracf{jin-1n})$
\begin{eqnarray}
&~&\bracf{ABi+2i}\bracf{ji+1n-1n}+\bracf{ABi+1i+2}\bracf{jin-1n}\\
&=&\bracf{ii+1(ABi+2)\cap(jn-1n)}\nn
&=&\bracf{ii+1Bi+2}\bracf{Ajn-1n}+\bracf{ii+1Ai+2}\bracf{jn-1nB}\nn
&=&\bracf{BA(ii+1i+2)\cap(jn-1n)}\nn
&=&\bracf{ABn-1j}\bracf{ii+1i+2n}+\bracf{ABnj}\bracf{n-1ii+1i+2}\nb,
\end{eqnarray}
where unitarity condition related to $\bracf{ABii+1}$ and $\bracf{ABn-1n}$ has been applied.
There will be three terms related to $\bracf{ABj-1j}$, $\bracf{ABjj+1}$ and $\bracf{ABj-1j+1}$ in $\bracf{AB\mathcal{P}_j\cap\mathcal{P}_{i}}$. However, according to $\bracf{ABj-1j+1}\bracf{ABkj}=\bracf{ABj-1k}\bracf{ABj+1j}+\bracf{ABj-1j}\bracf{ABkj+1}$ ($Z_k$ could be any twistor), (\ref{eq:Numerator}) only have two terms related to $\bracf{ABj-1j}$, $\bracf{ABjj+1}$.
The sum of terms contain $\bracf{ABj-1j}$ is
\begin{eqnarray}
&&\big(\bracf{ABn-1j} \bracf{ii+1i+2n} \bracf{j+1i-1ii+1}\nn
&+&\bracf{ABnj}\bracf{n-1ii+1i+2}\bracf{j+1i-1ii+1}\nn
&+&\bracf{ABn-1j+1}\bracf{ii+1i+2n}\bracf{i-1ii+1j}\nn
&+&\bracf{ABnj+1}\bracf{n-1ii+1i+2}\bracf{i-1ii+1j}\ \big) \bracf{ABj-1j}\nn
&=\big(& \bracf{j+1j(ABn-1)\cap(i-1ii+1)}\bracf{ii+1i+2n}\nn
&+&\bracf{j+1j(ABn)\cap(i-1ii+1)}\bracf{n-1ii+1i+2}\ \big)\bracf{ABj-1j}.
\end{eqnarray}
Similarly, we can get $\bracf{ABjj+1}(\bracf{j-1j(ABn-1)\cap(i-1ii+1)}\bracf{ii+1i+2n}+\bracf{j-1j(ABn)\cap(i-1ii+1)}\bracf{n-1ii+1i+2})$.
Combine numerator and denominator, we can get $\mathcal H_{j-1j}-\mathcal H_{jj+1}$.
\endofproof
\end{proof}
\begin{lemma}
\label{lem:RefCutR}
The integrand $\mathcal{A}_{c_{i+1n-1}}$ is equal to the integrand $\mathcal{A}_{C_{1n-2}}$ with reference line $\mathrm{L}_{n-1}$ under unitarity cut $c_{i+1n-1}$.
\end{lemma}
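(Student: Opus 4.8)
The plan is to bring the single-cut integrand $\mathcal{A}_{c_{i+1,n-1}}$ into the reference-line representation and to show that, on the cut $c_{i+1,n-1}$, it telescopes down to precisely those terms of the accumulated union $\mathcal{A}_{C_{1,n-2}}$ that are non-vanishing under this cut. First I would apply Lemma~\ref{cor:MHVFourTerm} with the reference line taken to be the line named in the statement, $\mathrm{L}_{m'}=\mathrm{L}_{n-1}=(n-1,n)$, which coincides with one of the two cut lines of $c_{i+1,n-1}$. Writing the cut as $c_{i+1,j}$ with $j=n-1$ (so that $\mathcal{P}_{j+1}=\mathcal{P}_{n}$), this expresses $\mathcal{A}_{c_{i+1,n-1}}$ as the sum of the four pentagons $\mathcal I_{5}[\mathcal{P}_{a}\cap\mathcal{P}_{b},(n-1,n)]$ with $a\in\{i,i+1\}$ and $b\in\{n-1,n\}$.

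Next I would group these four pentagons into the two pairs sharing a common plane on the $j$-side, namely the pair sharing $\mathcal{P}_{n-1}$ and the pair sharing $\mathcal{P}_{n}$, and apply Lemma~\ref{lem:RefCutEqu} to each pair with its free index set to $n-1$ and then to $n$. Since the two resulting copies of $\mathcal H_{n-1\,n}$ are literally the same function (each $\mathcal H$ depends only on its labelling line together with the fixed data $\mathcal{P}_i$ and the cut lines), they cancel, giving
\begin{align}
\mathcal{A}_{c_{i+1,n-1}}
&=\big(\mathcal H_{n-2\,n-1}-\mathcal H_{n-1\,n}\big)
 +\big(\mathcal H_{n-1\,n}-\mathcal H_{n\,n+1}\big)
 +\mathcal R_{c_{i+1,n-1}}\nn
&=\mathcal H_{n-2\,n-1}-\mathcal H_{n\,n+1}+\mathcal R_{c_{i+1,n-1}},
\label{eq:plan-telescope}
\end{align}
with $n+1\equiv 1\ (\mathrm{mod}\ n)$ and $\mathcal R_{c_{i+1,n-1}}$ the usual ambiguity vanishing on the cut. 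This reduces the single-cut side to a difference of just two box-type objects, each carrying the two cut propagators $\langle AB\,i\,i+1\rangle$ and $\langle AB\,n-1\,n\rangle$.

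Finally I would evaluate the right-hand side $\mathcal{A}_{C_{1,n-2}}$ on the cut $c_{i+1,n-1}$, using the same reference line $\mathrm{L}_{n-1}$ so that the union is written in the same representation as \ref{eq:plan-telescope}. On this double cut only terms whose box carries both cut propagators survive; by the construction of the union order and the operation $\cup$ these are exactly the contributions inherited from the two cuts in $C_{1,n-2}$ that share a cut line with $c_{i+1,n-1}$, and in the reference-line representation they reduce to precisely $\mathcal H_{n-2\,n-1}$ and $-\mathcal H_{n\,n+1}$. Matching this against \ref{eq:plan-telescope} establishes the claim up to $\mathcal R_{c_{i+1,n-1}}$. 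I expect this last step to be the main obstacle: it requires tracking, through the inductive definition of the union and the earlier combining lemmas, exactly which already-assembled terms remain non-zero under $c_{i+1,n-1}$, and verifying that their reference-line form collapses to the telescoped difference without leaving any spurious surviving pieces. The telescoping in \ref{eq:plan-telescope} is the structural mechanism that makes this bookkeeping close.
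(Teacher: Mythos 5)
There is a genuine gap, and it sits in your very first step. You propose to apply Lemma~\ref{cor:MHVFourTerm} to the cut $c_{i+1\,n-1}$ with reference line $\mathrm{L}_{m'}=(n-1\ n)$, i.e.\ with the reference line equal to one of the two cut lines. For this cut the ``$j$'' of Lemma~\ref{cor:MHVFourTerm} is $n-1$, so the four pentagons you obtain are $\mathcal I_{5}[\mathcal{P}_a\cap \mathcal{P}_b,(n-1\ n)]$ with $b\in\{n-1,n\}$, and every one of them carries the numerator factor $\bracf{a\,b\,n-1\,n}$, which vanishes identically because a twistor is repeated; the denominator simultaneously acquires the cut propagator $\bracf{AB\,n-1\,n}$ twice. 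The representation is degenerate --- indeed the proof of Lemma~\ref{cor:MHVFourTerm} expands $(Z_{m'}Z_{m'+1})$ on the bi-twistors built from $\{Z_i,Z_{i+1},Z_j,Z_{j+1}\}$ and then cancels $\bracf{AB\,m'\,m'+1}$ against cut-surviving propagators, which is exactly what fails when $(m'\ m'+1)=(j\ j+1)$. The whole reason Lemma~\ref{lem:RefCutR} is needed, and why the last column of Table~\ref{tab:UnionOrder} is treated separately, is precisely that the reference line $\mathrm{L}_{n-1}$ is unusable for the cuts $c_{i+1\,n-1}$. The same objection applies to your second step: you invoke Lemma~\ref{lem:RefCutEqu} at $j=n-1$ and $j=n$, where $\mathcal H_{n-1\,n}$ has the cut propagator $\bracf{AB\,n-1\,n}$ in its denominator and is singular on the cut, so the claimed cancellation of the two copies of $\mathcal H_{n-1\,n}$ is a cancellation of ill-defined quantities.

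The argument actually runs in the opposite direction, and the part you defer as ``bookkeeping'' is the entire proof. One lists the terms of $\mathcal{A}_{C_{1\,n-2}}=\sum_{a<b}\mathcal I_5[\mathcal P_a\cap\mathcal P_b,\mathrm{L}_{n-1}]$ that survive the cut $c_{i+1\,n-1}$: these are the pairs $\mathcal I_{5}[\mathcal P_{j}\cap \mathcal P_{i},(n-1\ n)]+\mathcal I_{5}[\mathcal P_{j}\cap \mathcal P_{i+1},(n-1\ n)]$ with $j$ running over $1,\dots,i-1$ and $i+2,\dots,n-2$, plus the diagonal term $\mathcal I_{5}[\mathcal P_{i}\cap \mathcal P_{i+1},(n-1\ n)]$. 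Lemma~\ref{lem:RefCutEqu} turns each pair into $\mathcal H_{j-1\,j}-\mathcal H_{j\,j+1}$, the sum over $j$ telescopes to the boundary terms $\mathcal H_{n\,1}-\mathcal H_{i-1\,i}+\mathcal H_{i+1\,i+2}-\mathcal H_{n-2\,n-1}$, and one then checks that $\mathcal H_{i-1\,i}=0$, that $\mathcal H_{i+1\,i+2}$ cancels the diagonal pentagon, and that the two remaining terms equal $\mathcal I_{5}[\mathcal P_{i+1}\cap\mathcal P_{n},(i-1\ i)]+\mathcal I_{5}[\mathcal P_{i+1}\cap\mathcal P_{n-1},(i-1\ i)]$, which is $\mathcal{A}_{c_{i+1\,n-1}}$ as given by Lemma~\ref{cor:MHVFourTerm} with the \emph{admissible} reference line $(i-1\ i)$. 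Your two-term expression $\mathcal H_{n-2\,n-1}-\mathcal H_{n\,n+1}$, obtained by telescoping over only the two values $j=n-1,n$, does not reproduce this: the surviving content of $\mathcal{A}_{C_{1\,n-2}}$ involves all intermediate $j$, and no shortcut through the $(n-1\ n)$ reference line is available.
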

\begin{proof}
First we find out the terms in $\mathcal{A}_{C_{1n-2}}$ which contain both $\bracf{ABii+1}$ and $\bracf{ABn-1n}$.
We need to prove
\begin{eqnarray}
\mathcal{A}_{c_{i+1n-1}}&=&\sum\limits_{j=1}^{i-1}(\mathcal I_{5}[\mathcal P_{j}\cap \mathcal P_{i},(n-1n)]+\mathcal I_{5}[\mathcal P_{j}\cap \mathcal P_{i+1},(n-1n)])\nn
&+&\sum\limits_{j=i+2}^{n-2}(\mathcal I_{5}[\mathcal P_{j}\cap \mathcal P_{i},(n-1n)]+\mathcal I_{5}[\mathcal P_{j}\cap \mathcal P_{i+1},(n-1n)])\nn
&+&\mathcal I_{5}[\mathcal P_{i}\cap \mathcal P_{i+1},(n-1n)])
\end{eqnarray}
According to Lemma~\ref{lem:RefCutEqu},
the term $-\mathcal H_{jj+1}$ in
$$\mathcal I_{5}[\mathcal P_{j}\cap \mathcal P_{i},(n-1n)]+\mathcal I_{5}[\mathcal P_{j}\cap \mathcal P_{i+1},(n-1n)]$$
will be cancelled by the terms of  $\mathcal H_{jj+1}$ in
$$
\mathcal I_{5}[\mathcal P_{j+1}\cap \mathcal P_{i},(n-1n)]+\mathcal I_{5}[\mathcal P_{j+1}\cap \mathcal P_{i+1},(n-1n)]~.
$$

After summing  over all possible terms in $c_{i+1\,n-1}$,
we have only four terms left
\begin{eqnarray}
\mathcal{A}_{c_{i+1n-1}}=\mathcal H_{(n1)}-\mathcal H_{(i-1i)}+\mathcal H_{(i+1i+2)}-\mathcal H_{(n-2n-1)}+\mathcal I_{5}[\mathcal P_{i}\cap\mathcal P_{i+1},(n-1n)]~.\nn
\end{eqnarray}
For every term in the equation,
\begin{eqnarray}
\mathcal H_{n1}&=&\mathcal I_{5}[\mathcal P_{i+1}\cap\mathcal P_{n},(i-1i)]\nn
\mathcal H_{i-1i}&=&0\nn
\mathcal H_{i+1i+2}&=&-\mathcal I_{5}[\mathcal P_{i}\cap\mathcal P_{i+1},(n-1n)]\nn
\mathcal H_{n-2n-1}&=&-\mathcal I_{5}[\mathcal P_{i+1}\cap\mathcal P_{n-1},(i-1i)]~.
\end{eqnarray}
Therefore
\begin{eqnarray}
\mathcal{A}_{c_{i+1n-1}}=\mathcal I_{5}[\mathcal P_{i+1}\cap\mathcal P_{n},(i-1i)]+\mathcal I_{5}[\mathcal P_{i+1}\cap\mathcal P_{n-1},(i-1i)]
\end{eqnarray}
This result is the same as $\mathcal{A}_{c_{i+1n-1}}$ in
Lemma~\ref{cor:MHVFourTerm},
if we pick $(Z_{m'}Z_{m'+1})=(Z_{i-1}Z_{i})$.
\endofproof
\end{proof}

\begin{theorem}
\label{theo:MHVUnion}
The integrand of MHV one loop amplitudes can be constructed by the union over all the integrand from each unitarity cut. If we choose the reference line to be $\mathrm{L}_{n-1}$, the integrand of MHV one loop amplitudes is
\begin{eqnarray}
\label{eq:MHVAmpOneLoop}
\mathcal{A}_{C_{1n}}=\mathcal{A}_{C_{1n-2}}=\bigcup\limits_{1\leqslant i<j<n-1}\mathcal{A}_{c_{ij}}=\sum_{1\leqslant i<j<n-1} \mathcal I_{5}[\mathcal{P}_i\cap \mathcal{P}_j, \mathrm{L}_{n-1}]
\end{eqnarray}
\end{theorem}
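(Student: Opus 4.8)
The plan is to verify that the explicit pentagon sum on the right-hand side of \ref{eq:MHVAmpOneLoop} is exactly what the union operation produces from the individual cuts, and that it reproduces every unitarity cut, so that unitarity-cut-constructibility forces it to be the full amplitude. The first stage is to put every cut integrand into a common basis. Lemma~\ref{cor:MHVFourTerm} already expresses each $\mathcal{A}_{c_{i+1\,j}}$ as a sum of the four pentagons $\mathcal I_5[\mathcal P_i\cap\mathcal P_j,\mathrm L_{m'}]$, $\mathcal I_5[\mathcal P_i\cap\mathcal P_{j+1},\mathrm L_{m'}]$, $\mathcal I_5[\mathcal P_{i+1}\cap\mathcal P_j,\mathrm L_{m'}]$ and $\mathcal I_5[\mathcal P_{i+1}\cap\mathcal P_{j+1},\mathrm L_{m'}]$, for a reference line $\mathrm L_{m'}$ of my choosing. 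I would then use Lemma~\ref{lem:RefCutEqu} to migrate all of these to the single reference line $\mathrm L_{n-1}=(n-1\,n)$, which is legitimate up to the rational functions $\mathcal{R}_{c}$ that vanish on the relevant cut. After this step every cut integrand is written in the uniform basis $\mathcal I_5[\mathcal P_a\cap\mathcal P_b,\mathrm L_{n-1}]$, and such a pentagon carries a double pole on a given cut precisely when $\{a,b\}$ meets the two cut channels in the manner dictated by its four-term decomposition.

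Next I would carry out the union in the order of Table~\ref{tab:UnionOrder} by induction. The inductive claim is that after uniting all cuts up through $c_{ij}$, the accumulated integrand equals $\sum\mathcal I_5[\mathcal P_a\cap\mathcal P_b,\mathrm L_{n-1}]$ over exactly the pentagons introduced so far, each counted once. At every step the $\cup$ prescription---choose $\mathcal{R}$ so that the terms shared between the two cut-sets coincide, then add, counting a shared term only once---is applied to the four-pentagon forms of the first stage. Since neighbouring cuts in the union order (such as $c_{i+1\,j}$ and $c_{i+1\,j+1}$, or $c_{i\,j}$ and $c_{i+1\,j}$) share exactly the overlapping pentagons of their decompositions, and since those pentagons are now written with the \emph{same} reference line $\mathrm L_{n-1}$, the shared terms are literally identical and the union contributes precisely the one genuinely new pentagon. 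Running this over all $1\leqslant i<j<n-1$ yields the right-hand side of \ref{eq:MHVAmpOneLoop}.

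Finally I would check that the resulting sum reproduces every unitarity cut and then invoke unitarity-constructibility. For the cuts $c_{ab}$ with $b<n-1$ this is immediate from the construction. For the remaining boundary cuts $c_{i+1\,n-1}$---precisely those excluded from the union range---I would appeal to Lemma~\ref{lem:RefCutR}: on such a cut the telescoping identity of Lemma~\ref{lem:RefCutEqu} collapses the long sum $\sum_j\bigl(\mathcal I_5[\mathcal P_j\cap\mathcal P_i,(n-1\,n)]+\mathcal I_5[\mathcal P_j\cap\mathcal P_{i+1},(n-1\,n)]\bigr)$ to the boundary terms $\mathcal H_{n1}-\mathcal H_{i-1\,i}+\mathcal H_{i+1\,i+2}-\mathcal H_{n-2\,n-1}$, which, using $\mathcal H_{i-1\,i}=0$ and the identification of the others with single pentagons, reduce to exactly the two-pentagon form of $\mathcal{A}_{c_{i+1\,n-1}}$ from Lemma~\ref{cor:MHVFourTerm}. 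Hence no information is lost by stopping at $C_{1\,n-2}$, giving $\mathcal{A}_{C_{1n}}=\mathcal{A}_{C_{1\,n-2}}$; and since the pentagon sum reproduces all cuts, the unitarity-cut-constructibility of $\mathcal N=4$ SYM~\cite{Bern94,Bern95} fixes the residual rational function---which vanishes on every cut---to zero, so $\mathcal{A}_P=\mathcal{A}_{C_{1n}}$.

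I expect the main obstacle to be the consistency of the union in the second stage: one must show that the freedom in the $\mathcal{R}_{c}$ terms can be fixed \emph{globally}, not merely cut by cut, so that the overlapping pentagons from different cuts coincide as rational functions rather than only agreeing on a shared cut. The telescoping identity Lemma~\ref{lem:RefCutEqu} is the essential tool here, since it is exactly what controls how a pentagon's reference line may be shifted to $\mathrm L_{n-1}$ without disturbing the cut; the careful bookkeeping of its boundary terms $\mathcal H$ is the computational heart that closes the argument.
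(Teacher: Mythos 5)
Your proposal follows essentially the same route as the paper: express each cut via the four-pentagon decomposition of Lemma~\ref{cor:MHVFourTerm} with a common reference line, unite in the order of Table~\ref{tab:UnionOrder} by matching the overlapping pentagons, dispose of the boundary cuts $c_{i+1\,n-1}$ via Lemma~\ref{lem:RefCutR}, and close with unitarity constructibility. The only slip is attributing the reference-line migration to Lemma~\ref{lem:RefCutEqu} rather than to Lemma~\ref{lem:UnitaryEq} (equivalently, the freedom to choose $\mathrm{L}_{m'}$ in Lemma~\ref{cor:MHVFourTerm}); otherwise your write-up is, if anything, more explicit than the paper's own argument.
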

\begin{proof}
As discussed above, for convenience, we combine the $\mathcal{A}_{c_{ij}}$ in a specific order.
In general, when combining integrands $\mathcal{A}_{c_{i-1 j}}$ and $\mathcal{A}_{C_{ij}}$, the only cut-related terms in $\mathcal{A}_{C_{ij}}$ which may influence the unitarity cuts of $c_{i-1 j}$ are $\{c_{i-1 j-1}, c_{i j}, c_{i j-1}\}$. On the other hand, $\mathcal{A}_{c_{i-1 j}}$ may affect the cuts $\{c_{i-1 j-1}, c_{i j}, c_{i j-1}\}$ in $C_{ij}$.
Hence we only need verify that the cut-related terms in $\mathcal{A}_{C_{ij}}$ are either same with the terms in $\mathcal{A}_{c_{i-1 j}}$ or cut un-related with each other. This is easy to prove according to Lemma~\ref{cor:MHVFourTerm}. Hence we can unite all the terms except the last column in Tab.~\ref{tab:UnionOrder}.
According the Lemma~\ref{lem:RefCutR}, we do not need to combine  the integrand from the last column and  the union of $C_{1n-2}$ is just the final integrand from the unitarity cuts up to a rational function on all the unitarity cuts.  We find such construction of integrand is equivalent to the integrand from single cuts in~\cite{NimaAllLoop, NimaLocalInt}.
\endofproof
\end{proof}

\subsection{MHV nonplanar amplitudes and unitarity cuts}
In this section, we will present  a general recipe for dealing with MHV nonplanar amplitudes $U(N)$ Yang-Mills theory.
 One-loop four-point and five-point amplitudes are carefully worked out as examples.
Our results  verify directly the $U(1)$ decoupling relation of one loop amplitudes.

\paragraph{Properties of MHV nonplanar amplitudes under unitarity cuts}~

We  consider  first  the situation with only one nonplanar leg. Based on the permutation relations of Yangian
Invariants (\ref{eq:4point}) this amplitude under a unitarity cut can be converted  to a planar one at a price of  a simple factor $f_{kin}$ in Fig.~\ref{fig:MHVgeneral}
\begin{figure}[ht!]
  \centering
 \includegraphics[width=0.53\textwidth]{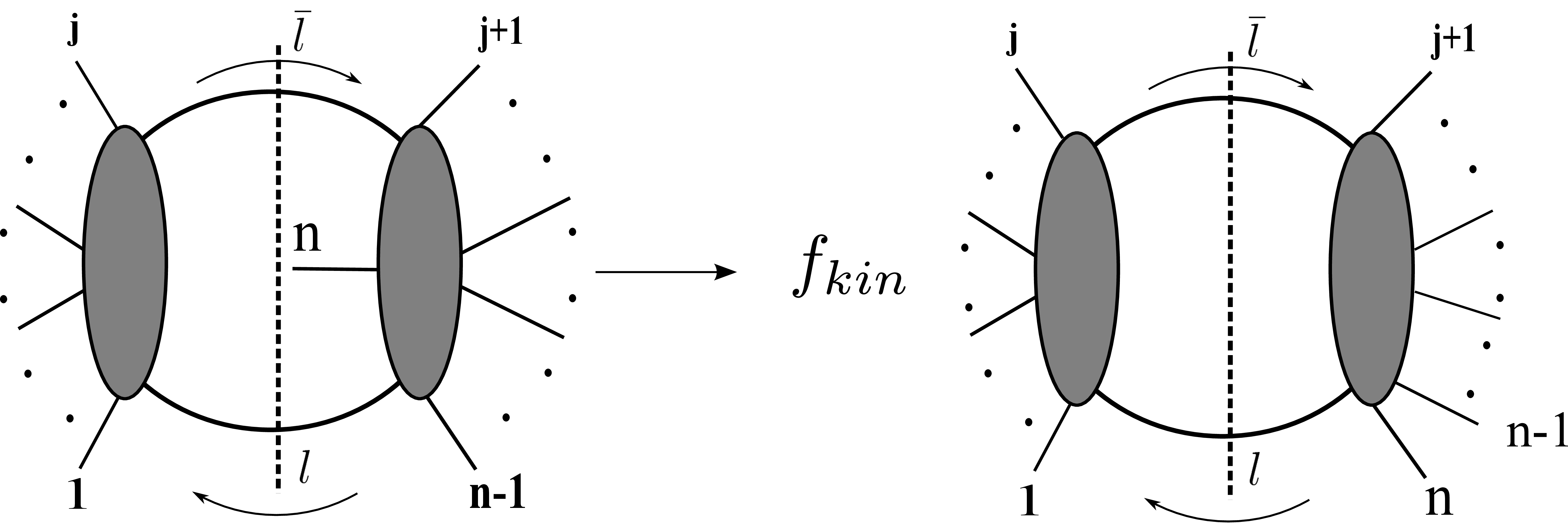}
 \caption{A nonplanar one-loop n-point  MHV amplitude, with one nonplanar leg marked ``$n$'',  can be converted  to a planar one at the price of a simple kinematic factor $f_{kin}$. }
 \label{fig:MHVgeneral}
\end{figure}
\begin{eqnarray}
f_{kin}\,  =\frac{\left\langle n-1n\right\rangle\left\langle \bar l l\right\rangle}{\left\langle n-1 l\right\rangle\left\langle n \bar l\right\rangle}.
\end{eqnarray}
This  step removes   one nonplanar leg.
For more than one nonplanar legs, say $m$,  we can repeat  this operation  $m$ times to arrive at a planar diagram.
Planar MHV amplitudes under unitarity cuts have been  discussed in Section~\ref{sec:Planar} we can therefore apply the planar results to obtain the nonplanar  MHV amplitude after a unitarity cut by
\begin{eqnarray}
A_c(1,2,\dots,j|j+1,\dots,n-1,\dot{n})=f_{kin}A_c(1,j|j+1,n),
\end{eqnarray}
where `$\dot i$' labels the nonplanar leg and $A$ is  planar  if there is no dotted leg in the arguments.
This equation clearly shows that  the difference between the planar result and the nonplanar one lies in  the kinematic factors.

\paragraph{Reconstructing the kinematic factors: }~

Amplitudes under a unitarity cut  only contains two variables related to the loop momentum.
Unitarity cut condition sets two cut propagators to zero, while fixing the other two variables.
With this in mind  we need to  reconstruct the amplitudes to a function of four variables.
A unitarity cut on nonplanar diagrams contains two pieces of information, the kinematic factor and the corresponding planar amplitudes. We need to discuss them separately.
As reconstruction of planar part is the same as above we deal with the kinematic factors here.

For one nonplanar leg (labelled n), we can read off from Fig.~\ref{fig:MHVgeneral} the kinematic factor as
\begin{eqnarray}
f_{kin}\,  =\frac{\left\langle n-1n\right\rangle\left\langle \bar l l\right\rangle}{\left\langle n-1 l\right\rangle\left\langle n \bar l\right\rangle}
\end{eqnarray}
Using Schouten identity, and expanding spinor
$\lambda_{n-1}$ based on $\lambda_{j+1}$ and $\lambda_{n}$ as
$$\lambda_{n-1} =
\frac{\left \langle n-1,\, n\right \rangle}
     {\left \langle j+1,\, n\right \rangle} \lambda_{j+1} +
\frac{\left \langle j+1, \, n-1\right \rangle}
     {\left \langle j+1,\, n \right \rangle}\lambda_n~, $$
$f_{kin}$  can be expanded as
\begin{eqnarray}
f_{kin}\,
= -1 + \frac{\left\langle ln\right\rangle\left\langle
    \bar{l}\,  j+1 \right\rangle}
  {\left\langle ln-1\right\rangle\left\langle \bar{l}\, n
      \right \rangle}
   \frac{\left \langle n-1, \, n\right \rangle}
     {\left \langle j+1, \, n\right \rangle}
 +\frac{\left\langle l\, n\right\rangle}
     {\left\langle l,\, n-1\right\rangle}
     \frac{\left \langle j+1, \, n-1\right\rangle}
     {\left \langle j+1, \, n\right \rangle}.
\end{eqnarray}
Under a unitarity cut, variables $l$ and $\bar l$ in momentum space are denoted by variables A and B in momentum twistor space.
So $\left\langle ln\right\rangle\left\langle \bar l j+1\right\rangle$
can be rewritten as
$\left\langle A\, n\right\rangle\left\langle B\, j+1\right\rangle$.
Using the equation
$$\left\langle A\, B\, i-1\, i\right\rangle
=\left\langle A\, B\right\rangle\left\langle i-1\, i\right\rangle(x-x_i)^2$$
we express $f_{kin}$ as
\begin{eqnarray}
\label{eq:Kfactor}
f_{kin}
= -1+ \frac{\left\langle A\, n\, B\, j+1\right\rangle^{o1}}
    {\left\langle A\, n-1\, B\, n\right\rangle^{o2}}
  \frac{\left \langle n-1\, n\right \rangle}
   {\left \langle j+1\, n\right \rangle}
 +\frac{\left\langle A\, n\, B\, j+1\right\rangle^{o1}}
   {\left\langle A\, n-1\, B\, j+1\right\rangle^{o3}}
   \frac{\left \langle j+1\, n-1\right \rangle}
      {\left \langle j+1\, n\right \rangle}
\end{eqnarray}
We take a  pause  here to clarify our notations.
The expansion of $\left\langle A\, B\, i-1\, i\right\rangle$ is valid if $A$ and $B$, $i-1$ and  $i$ are continuous in a certain color ordering  in twistor space.
Now $A$ and $n-1$,  and $B$ and $n$  are in fact not  continuous in the original color ordering.
We  can, nevertheless,  define a new color ordering,
labeled by  $^{o2}$ in equation, where in this ordering  $A$ and $n-1$, and $B$ and $n$  are both continuous.
In addition  if we choose
$$o1=(1,2,\ldots,j,j+1,\ldots,n)$$
$$o2=(1,2,\ldots,j,n,\ldots,n-1)$$
$$o3=(1,2,\ldots,j,j+1,\ldots,n,n-1)$$
then $(p_A\, +\, p_B)^2$ will appear both in the denominator and numerator and cancel each other.
Thus the  former equation (\ref{eq:Kfactor}) is naturally true.

The next step is to combine kinematic factors with the planar amplitudes.
Recall that  the planar amplitudes
\begin{eqnarray}
&&\mathcal A_0(1,\dots,j|j+1,\dots,n) \nn
&=&  \mathcal A_{n}^{tree_{MHV}}  \cdot
 \frac{-\left\langle j+1\, i\,  i+1\, j\right\rangle^2}
      {\left\langle A\, B\, j+1\, i\right\rangle
       \left\langle A\, B\, i\, i+1\right\rangle
       \left\langle A\, B\, i+1\, j\right\rangle
       \left\langle A\, B\, j\, j+1\right\rangle}
\end{eqnarray}
(Here we add the MHV tree amplitudes as a coefficient because, in nonplanar situation, this coefficient is important and can
 not be omitted).
It is then clear that the numerator
$\left\langle A\, n\, B\, j+1\right\rangle^{o1}$
will cancel the same term in the denominator of
$\mathcal A_0\, (1,2,\ldots,j|j+1,\ldots,n)$
by adding  a new term related to $o2$ or $o3$, and consequently changing  the color ordering.
Hence
\begin{eqnarray}
 &&  f_{kin}\,  \mathcal A_0\, (1,\dots,j|j+1,\dots,n) \nn
&=&-\, \mathcal A_0\, (1,\dots,j|j+1,\dots,n)
       -\, \mathcal A_0\, (1,\dots,j|n,\dots,n-1)   \nn
&&-\, \frac{\left\langle n\,n-2\right\rangle
         \left\langle n-1\,j+1\right\rangle}
      {\left\langle n-2\, n-1\right\rangle
        \left\langle n\, j+1\right\rangle}\,
     \cdot \mathcal A_0\, (1,\dots,j|j+1,\dots,n-1)
\end{eqnarray}
This is nothing but  $\mathcal A_0$ with unphysical poles, which needs  to be  converted to $\mathcal A_1$, which only contains physical poles, to arrive at the  final results. This can be achieved in a way completely analogous to the procedures presented in Section~\ref{sec:Planar}.

\paragraph{Comparison with $U(1)$ decoupling relation}~~

As alluded above a relation of reconstructed amplitudes
from permutation relation of Yangian Invariants has only these three terms  (or two terms in the case of four-point).
\begin{eqnarray}
 &&   \mathcal A_1(1,2,\dots,j|j+1,\dots,n-1,\dot{n})  \nn
&=&  f_{kin}\, \mathcal A_1(1,2,\ldots,j|j+1,\ldots,n) \nn
&=&  -\, \mathcal A_1(1,2,\ldots,j|j+1,\ldots,n)
    -\, \mathcal A_1(1,2,\ldots,j|n,\ldots,n-1)  \nn
&& -\, \frac{\left\langle n\,n-2\right\rangle
          \left\langle n-1\, j+1\right\rangle}
        {\left\langle n-2\, n-1\right\rangle
          \left\langle n\, j+1\right\rangle}\,
      \cdot \mathcal A_1\, (1,2,\ldots,j|j+1,\ldots,n,n-1)\nb
\end{eqnarray}
When considering the $U(1)$ decoupling relation
\begin{eqnarray}
&& \mathcal A_1\, (1,2,\ldots,j|j+1,\ldots,n-1,\dot{n}) \nn
&=&-\mathcal A_1\, (1,2,\ldots,j|j+1,\ldots,n)
   -\mathcal A_1(1,2,\ldots,j|n,\ldots,n-1)  \nn
&~&-\sum\limits_{i}
   \mathcal A_1(1,2,\ldots,j|j+1,\ldots,i,n,i+1,\ldots,n-1),
\end{eqnarray}
more terms will appear.
However, due to the special property of MHV amplitudes, the exchange of external legs which are not the four legs connecting the ``basic twin-box''  will not affect the final result of this cut.
That is to say, all of the possible cases in this set have the same cut amplitudes.
The only difference is the pre-factor, MHV tree amplitude.
If we expand the factor
$$
\frac{\left\langle n\,  n-2\right\rangle
        \left\langle n-1\, j+1\right\rangle}
       {\left\langle n-2\, n-1\right\rangle
        \left\langle n\, j+1\right\rangle}
$$
we can find cuts with all  possible color ordering,
which are the same as obtained from $U(1)$ decoupling.
This  form is obviously more compact than $U(1)$ decoupling, since it combines,  in one integral,  terms related to planar amplitudes of different orders but having same result under a given  unitarity cut.

\paragraph{Final results of MHV nonplanar amplitudes:}~

Final results of MHV nonplanar amplitudes are the union of all possible $f_{kin}\, \mathcal A_1$
\begin{eqnarray}
\mathcal A_{NP}=\bigcup\limits_{i}f_{kin}^i\mathcal A_{1i}
\end{eqnarray}
where $\mathcal A_{NP}$ stands for  the final results of nonplanar amplitudes and $\mathcal A_{P}$ is the final results of planar counterparts.

This formula just give a procedure, which is the same as the planar situation, to get the final result. However, to get the general formula of integrand, a lemma needs to be proved. When two terms in two different cut results are cut related (one term has propagators of the other unitarity cuts), one of them can convert to terms which are the same as the other one and others that are cut un-related. We will prove it in future work. In the following sections, we will show this method is valid in some particular examples of four-point and five-point situation.

\subsection{One-loop four-point MHV nonplanar amplitudes}

In this section we describe a nonplanar four point amplitude with one nonplanar leg (we call ``(3+1)'' case) as an explicit example of our construction.

We start with the cut result of this case, which is shown in Fig.~\ref{fig:stu-cuts}. First, we convert this cut amplitude to planar one with a kinematic factor.
in Fig.~\ref{fig:NPtoP}.
\begin{figure}[ht!]
  \centering
 \includegraphics[width=0.83\textwidth]{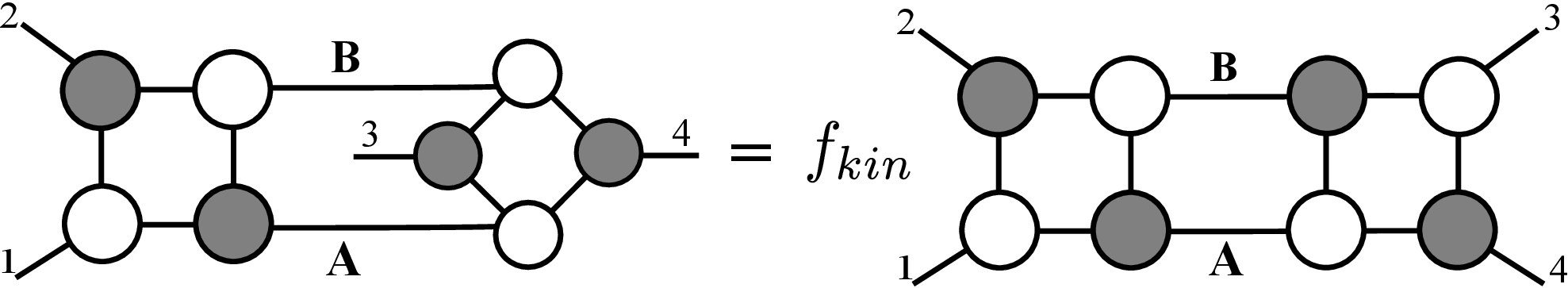}
 \caption{Convert the nonplanar cut amplitude to planar one with a kinematic factor}
  \label{fig:NPtoP}
\end{figure}

The way to reconstruct the planar diagram is simply adding two BCFW bridges across legs (2 3) and (1 4). The result is exactly the one loop four-point planar amplitude (\ref{fig:4pointloop1}).
\begin{figure}[ht!]
  \centering
 \includegraphics[width=0.83\textwidth]{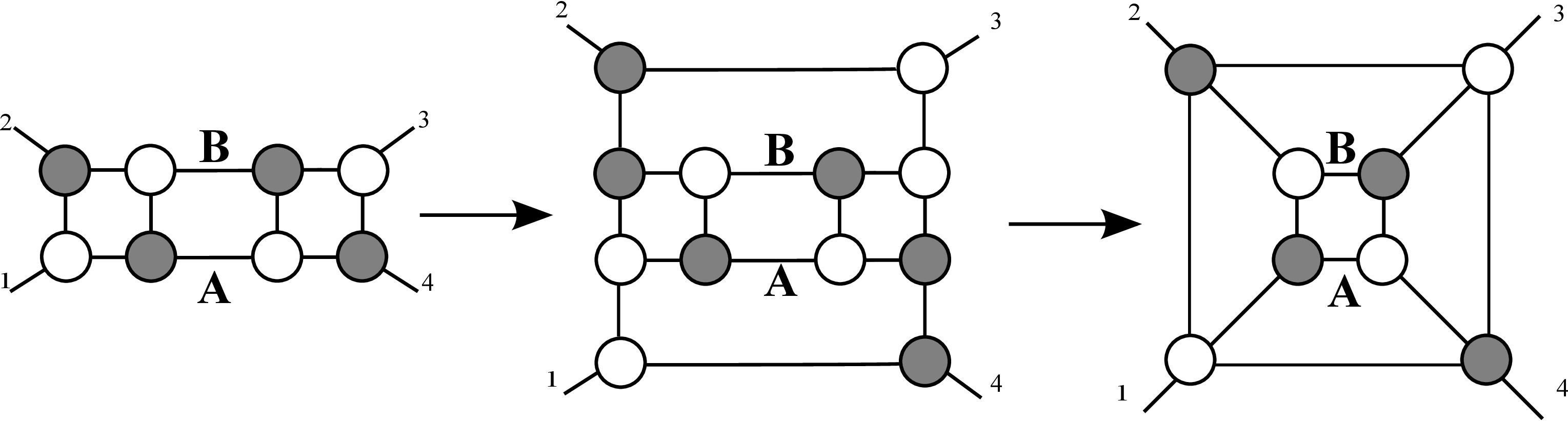}
 \caption{Way of reconstruct the planar diagram. First step is adding two BCFW bridges. Using ``square moves'' and ``merges'' to represent the diagram as shown in~\cite{NimaGrass}. }
 \label{fig:4pointloop1}
\end{figure}
We can simply write down the result of the diagram above in the form of loop integrand.
\begin{eqnarray}
\mathcal{A}_0(1,2,3,4)=\frac{-\left\langle 1234\right\rangle^2}{\left\langle AB12\right\rangle\left\langle AB23\right\rangle\left\langle AB34\right\rangle\left\langle AB41\right\rangle}
\end{eqnarray}
Here $\mathcal A_0=\mathcal A_1$ because there are only four points, we need not to remove unphysical poles.

Using the unitarity cut condition$\left\langle AB23\right\rangle=0,\left\langle AB41\right\rangle=0$,we can rewrite the kinematic factor as
$$f_{kin}=\frac{s_{ll_2}}{s_{l3}}=-1-\frac{\left\langle AB34\right\rangle^{o1}}{\left\langle AB43\right\rangle^{o2}}$$
Here we define two color orders, $o^1$ means color order (1,2,3,4), and $o^2$ means (1,2,4,3).
Combine the factor and planar result, we can get
\begin{eqnarray}
&& \mathcal{A}_1(1,2|3,\dot 4) \nn
&=& \mathcal A_{4}^{MHVtree}(1,2,3,4)\,
 (1+\frac{\left\langle A\, B\, 3\, 4\right\rangle^{o1}}
         {\left\langle A\, B\, 4\, 3\right\rangle^{o2}})
 \frac{\left\langle 1\, 2\, 3\, 4\right\rangle^2}
      {\left\langle A\,B\,1\,2\right\rangle
         \left\langle A\, B\, 2\, 3\right\rangle
         \left\langle A\, B\, 3\, 4\right\rangle
         \left\langle A\, B\,4\,1\right\rangle}   \nn
&=&
 \mathcal A_{4}^{MHVtree}(1,2,3,4)\,
 \frac{\left\langle 1\, 2\, 3\,4\right\rangle^2}
    {\left\langle A\,B\,1\,2\right\rangle^{o1}
     \left\langle A\,B\,2\,3\right\rangle^{o1}
      \left\langle A\,B\,3\,4\right\rangle^{o1}
      \left\langle A\,B\,4\,1\right\rangle^{o1}}\nn
&~&
+\mathcal A_{4}^{MHVtree}(1,2,4,3)
 \frac{\left\langle 1\,2\,4\,3\right\rangle^2}
   {\left\langle A\,B\,1\,2\right\rangle^{o2}
    \left\langle A\,B\,2\,4\right\rangle^{o2}
    \left\langle A\,B\,4\,3\right\rangle^{o2}
    \left\langle A\,B\,3\,1\right\rangle^{o2}}
\end{eqnarray}
Other possible unitarity cuts should be taken into consideration. The steps to deal with all possible unitarity cuts are as described above.
 Since each term is actually the planar four-point one-loop amplitude with a certain color ordering, we can simplify the expression as
\begin{eqnarray}
\mathcal{A}_1(1,2|3,\dot 4)=\mathcal{A}_P(1,2,3,4)+\mathcal{A}_P(1,2,4,3)\nonumber\\
\mathcal{A}_1(2,3|1,\dot 4)=\mathcal{A}_P(2,3,1,4)+\mathcal{A}_P(2,3,4,1)\nonumber\\
\mathcal{A}_1(3,1|2,\dot 4)=\mathcal{A}_P(3,1,2,4)+\mathcal{A}_P(3,1,4,2)\nonumber
\end{eqnarray}

Now we need to unite $\mathcal{A}_1(2,3|1,\dot 4)$ and $\mathcal{A}_1(1,2|3,\dot 4)$. Obviously, the terms with $o1$ are the same in these two. While it is not obvious to judge whether other terms are cut related or not. However, we can set one of the cut propagator in all terms as $l^2$ in momentum space and write down the denominators of these three terms.
\begin{eqnarray}
o1(1,2,3,4)&:&\ \underline{l^2}(l-p_1)^2\underline{(l-p_1-p_2)^2}(l+p_4)^2\nn
o2(1,2,4,3)&:&\ \underline{l^2}(l-p_1)^2\underline{(l-p_1-p_2)^2}(l+p_3)^2\nn
o3(2,3,1,4)&:&\ \underline{l^2}(l-p_1)^2\underline{(l-p_1-p_4)^2}(l+p_3)^2
\end{eqnarray}
Now we can clearly find out  that the terms of $o3$ in $\mathcal{A}_1(2,3|1,\dot 4)$ do not have common unitarity cuts with the terms in $\mathcal{A}_1(1,2|3,\dot 4)$ and vice versa. So, according to the definition of operation union,
\begin{eqnarray}
\mathcal{A}_1(1,2|3,\dot 4)\cup \mathcal{A}_1(2,3|1,\dot 4) =\mathcal{A}_P(1,2,3,4)+\mathcal{A}_P(1,2,4,3)+\mathcal{A}_P(3,1,2,4).
\end{eqnarray}
So we can obtain the final result of nonplanar amplitude as
 \begin{eqnarray}
\mathcal{A}_{NP}(1,2,3,\dot 4)&=&\bigcup\limits_{i=1}^{3}(\mathcal{A}_{NP}(i,i+1|i+2,\dot 4)\nn
&=&\mathcal{A}_P(1,2,3,4)+\mathcal{A}_P(1,2,4,3)+\mathcal{A}_P(1,4,2,3),
\end{eqnarray}
which is the familiar result of nonplanar ``(3+1)'' case~\cite{Bern94}
for $U(N)$ Yang-Mills theory. This equation imply that the $U(1)$ gauge field will decouple from the $SU(N)$ part of the $U(N)$ gauge fields.

\subsection{One-loop five-point MHV nonplanar amplitudes}
We first obtain all the unitarity cuts of five point nonplanar amplitudes $A_c(1,2,3,4,\dot{5})$ as follows,
\begin{itemize}
\item[Cut Type I] \begin{eqnarray}\label{cut5I}
A_{c}(1, \dot{5} | 2,3,4), A_{c}(2, \dot{5} | 3,4,1), A_{c}(3, \dot{5} | 2,4,1),A_{c}(4, \dot{5} | 3,2,1) \nb
\end{eqnarray}
\begin{figure}[ht!] \label{fig:4pointloop2}
  \centering
 \includegraphics[width=0.83\textwidth]{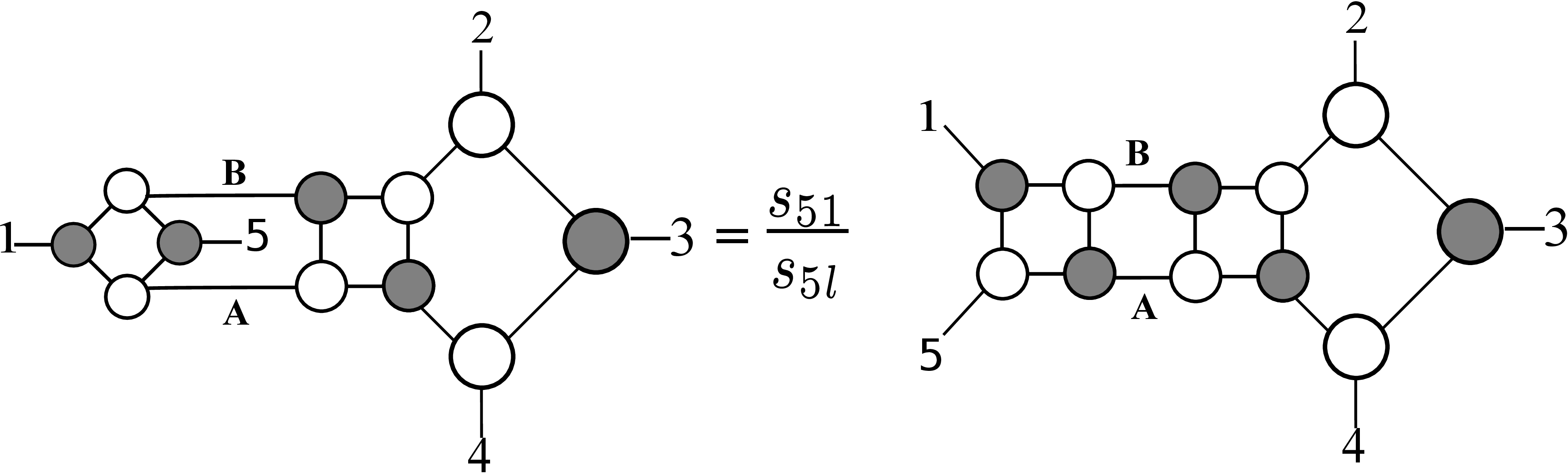}
  \caption{On-shell diagram of cut amplitude $\mathcal{A}_{c}(1, \dot{5} | 2,3,4)$}
\end{figure}
\item[Cut Type II]  \begin{eqnarray}\label{cut5II}
A_{c}(1,2 | 3,4, \dot{5}), A_{c}(1,2, \dot{5} | 3,4),  A_{c}(2,3 | 4, 1,\dot{5}), A_{c}(2,3,\dot{5} | 4, 1).   \nb
\end{eqnarray}
\begin{figure}[ht!] \label{fig:4pointloop3}
  \centering
 \includegraphics[width=0.53\textwidth]{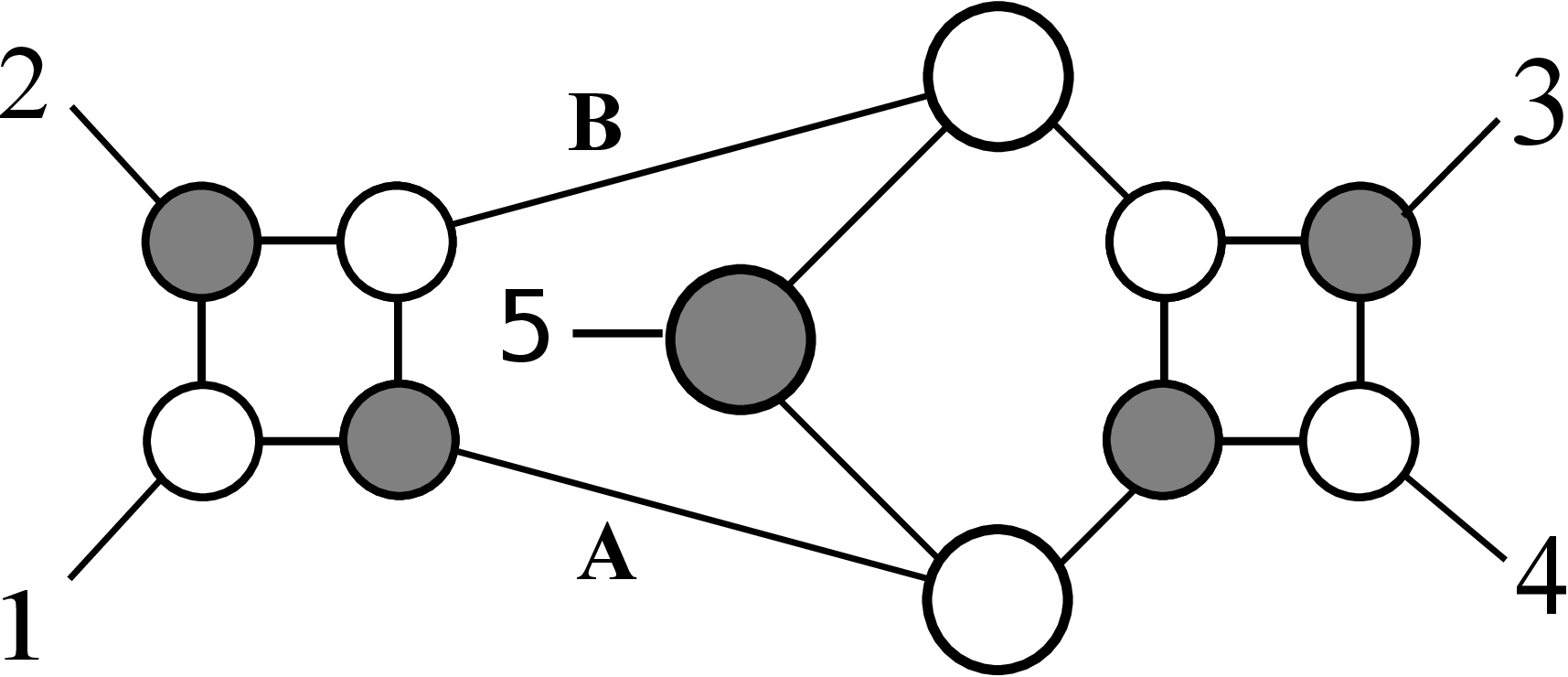}
 \caption{On-shell diagram of cut amplitude $\mathcal{A}_{c}(1,2 | 3,4, \dot{5})$}
\end{figure}
\end{itemize}
For the cut of Type I, the tree level part of nonplanar leg is four point amplitude. While for the Cut Type II, the tree level part of nonplanar leg is five point amplitude. In both cases, all the tree part under a  unitarity cut are MHV amplitudes.

Since  Case~I  is  actually the same as the four-point case, we can simply write down the result:
 \begin{eqnarray}\label{cut5I}
A_{c}(1, \dot{5} | 2,3,4)&=&{s_{15}\over s_{l5}}A_c(51|234)\nb\\
\mathcal A_1(1, \dot{5} | 2,3,4)&=&\mathcal A_1(1, 5 | 2,3,4)+\mathcal A_1(5,1| 2,3,4)
\end{eqnarray}

In  Case~II, we take $\mathcal{A}_{c}(1,2 | 3,4, \dot{5})$ as an example and obtain the relation in bipartite on-shell diagram as Fig.~\ref{fig:5pointloop}
\begin{figure}[ht!]
  \centering
 \includegraphics[width=0.83\textwidth]{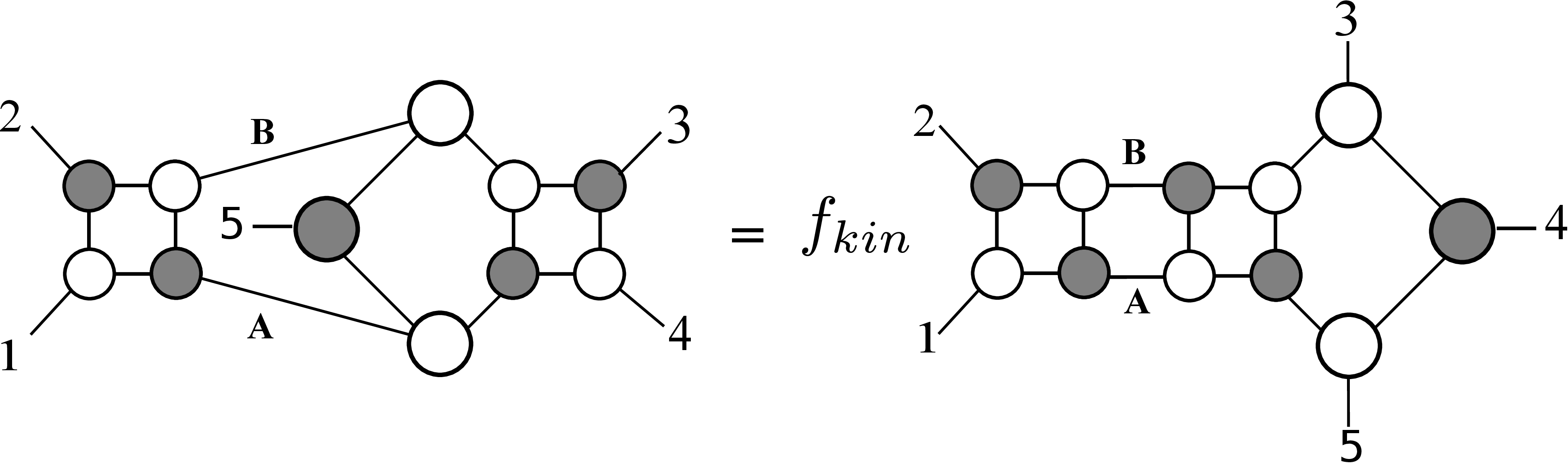}
 \caption{Permutation relation in $\mathcal{A}_{c}(1,2 | 3,4, \dot{5})$ }
 \label{fig:5pointloop}
\end{figure}

The planar part can be done with the same strategy above, and covert to box integrand. Here we first deal with the kinematic factor $f_{kin}\, =\frac{\left \langle 45\right \rangle\left \langle l\bar{l}\right \rangle}{\left \langle{l4}\right \rangle\left \langle\bar{l}5\right \rangle}$

Considering different color order, we can get
\begin{eqnarray}
f_{kin}\,  =-1+\frac{\left\langle A5B3\right\rangle^{o1}}{\left\langle A4B5\right\rangle^{o2}}\frac{\left \langle 45\right \rangle}{\left \langle 35\right \rangle}+\frac{\left\langle A5B3\right\rangle^{o1}}{\left\langle A4B3\right\rangle^{o3}}\frac{\left \langle 34\right \rangle}{\left \langle 35\right \rangle}
\end{eqnarray}
Then we can simply write
$$
\mathcal A_0(1,2|3,4,5)
=\mathcal A_{5}^{tree_{MHV}}
\frac{-\left\langle1235\right\rangle^2}
    {\left\langle AB12\right\rangle
      \left\langle AB23\right\rangle
      \left\langle AB35\right\rangle
      \left\langle AB51\right\rangle}.
$$
Combining $f_{kin}$ and $\mathcal{A}_0(1,2|3,4,5)$ we  get
\begin{eqnarray}
\mathcal{A}_0(1,2| 3,4,\dot{5})
&=&f_{kin}\, \mathcal{A}_0(1,2|3,4,5)\nn
&=&-\,(\mathcal{A}_0(1,2|3,4,5)
       +\mathcal{A}_0(1,2|3,5,4)
       +\mathcal{A}_0(1,2|5,3,4))
\end{eqnarray}
Upon eliminating unphysical poles $\mathcal A_1$ becomes
\begin{equation}
\mathcal{A}_1(1,2|3,4,5)
=\frac{\left\langle AB(234)\cap(451)\right\rangle
        \left\langle 3512\right\rangle}
      {\left\langle AB12\right\rangle
        \left\langle AB23\right\rangle
        \left\langle AB34\right\rangle
        \left\langle AB45\right\rangle
        \left\langle AB51\right\rangle}
\end{equation}

Uniting $\mathcal{A}_1(1,2|3,4,\dot 5)$ and $\mathcal{A}_1(2,3|4,1,\dot 5)$ based on the same method in four-point case, we can find that every term is only cut-related with the results of its own color order,  while not affect those of other orders under unitary cuts.  The final non-planar integrand is the union of all five cuts,
\begin{eqnarray}
\mathcal{A}_{NP}(1,2,3,4,\dot{5})=\bigcup\limits_{i}{f_{kin}}^i \mathcal{A}_1(i,i+1|i+2,i+3,\dot5).
\end{eqnarray}

This result contains all of the possible results of unitary cuts in all possible color orders $(i,i+1,i+2,i+3,5),i=1,2,3,4$ of planar amplitudes. For instance, we can find all unitary cuts $\mathcal{A}_1(i,i+1|i+2,i+3,i+4),i=1,2,3,4,5$ of order (1,2,3,4,5).  According to the discussion in planar MHV amplitudes, the union of these cuts can get $\mathcal A_P(1,2,3,4,5)$. In the same way, we can get $\mathcal A_P(1,2,3,5,4)$, $\mathcal A_P(1,2,5,3,4)$, $\mathcal A_P(1,5,2,3,4)$. Since results from different orders do not affect each other under unitary cuts, the union of all non-planar results equals to the sum of the unions of every order
\begin{eqnarray}
\mathcal A_{NP}(1234\dot5)=\mathcal A_P(12345)+\mathcal A_P(12354)+\mathcal A_P(12534)+\mathcal A_P(15234).
\end{eqnarray}
This equation is just the $U(1)$ decoupling relation for  amplitudes in the $U(N)$ Yang-Mills theory.

This method can also be applied to one-loop MHV non-planar amplitudes with $k$($k>1$) non-planar legs. The conversion from non-planar diagram to planar one using permutation relation of Yangian invariants will be applied successively $k$ times to arrive at  the final results.
For instance,  six  four-point planar amplitudes  with different orders arise  in the case of  $\mathcal A_{NP}(1,2,\dot3,\dot4)$ after unitarity cut while twelve planar   amplitudes in  the case of $\mathcal A_{NP}(1,2,3,\dot4,\dot5)$%}}

\section{NMHV nonplanar amplitude from  generalized  unitarity cuts}
\label{sec:NMHV}

Although the  general  procedures presented in Section~\ref{subset:UCut} above  can be applied to NMHV nonplanar amplitudes.
However more involved procedures are called for to cancel the non-physical poles for NMHV amplitudes. Other interesting physics may arise in the process, which  we will leave  to a future investigation.
We  choose, instead,  generalized unitarity cuts~\cite{2011JPhA...44S4003B} to tackle the problem of NMHV amplitudes because of the absence of non-physical poles.

In this section we will present results of a 6-point one loop NMHV nonplanar amplitude, by generalized  unitarity cuts (quadruple cuts), in the invariant top form.
It is convenient to see the geometric structures of the amplitudes from this invariant top form~\cite{NimaGrass}.
In order to write the total amplitude in the top form   the newly found  permutation relation of the Yangian Invariants again comes in handy.

Before investigating  specific  examples
we  propose the general procedures of constructing total amplitudes for one loop nonplanar Feynman diagrams.  At one loop level the planar diagrams corresponds to the single-trace partial amplitudes in color-order decomposition. The planar on-shell diagrams are associated with the $(k\times n)$ Grassmannian Matrices~\cite{NimaGrass}, $C$,
\begin{equation}\label{eq:}
C=\left( \begin{array}{cccc}
 c_{11}&c_{12}&\cdots& c_{1n} \\
 c_{21}&c_{22}&\cdots& c_{2n} \\
 \vdots&\vdots&\ddots& \vdots \\
 c_{k1}&c_{k2}&\cdots& c_{kn}
\end{array}.\right)
\end{equation}
 It is of convenience to view the Grassmannian cell $C$ as a collection of  $k$-dimensional columns $\{\vec{c}_1,\vec{c}_2\cdots \vec{c}_n\}$. There are $k\times (n-k)$ parameter for a generic Grassmanian matrix $C$, of which $2n-4$ parameter are determined by  the $\delta$-functions $\delta(C\cdot\vec{\tilde\lambda})$ and $\delta(\vec{\lambda}\cdot C^{\bot})$ in (\ref{eq:IntForm}) and others are determined by  the linear-structures of the on-shell reduced diagrams.
 As discussed in~\cite{NimaGrass}, for planar diagrams, such linear-structures of linear-dependencies among consecutive chains of columns  is known as \textrm{positroid stratification}~\cite{2006math......9764P, 2011arXiv1111.3660K}.  The top form of correct singularities should be
 \begin{eqnarray}
\label{eq:}
\Omega={d^{k\times n}C\over vol(GL(k))}{1\over (1\cdots k)\cdots (n\cdots k-1)},
\end{eqnarray}
where $(1\cdots k)$ is the minor of matrix $\{\vec{c}_1 \cdots \vec{c}_k\}$. Hence  the top forms of planar diagrams are characterized by consecutive minors of the Grassmannian matrix.

Corresponding to double-trace partial  amplitudes, every on-shell bipartite diagram of the leading singularity in a double-trace partial amplitudes is also associated with Grassmannian cell $C^{k\times n}=\{\vec{c}_1,\cdots, \vec{c}_j, \vec{c}_{\dot{j+1}} \cdots \vec{c}_{\dot{n}}\}$. And $C$ are determined by the $\delta$-function in (\ref{eq:IntForm}) and the linear-structures of the on-shell reduced diagrams. Since such nonplanar diagrams are endowed with two cyclic-orderings for $\{\vec{c}_1,\cdots, \vec{c}_j\}$ and $\{\vec{c}_{\dot{j+1}} \cdots \vec{c}_{\dot{n}}\}$ respectively.  By hunch, the linear-structures is the linear-dependencies among the chains of columns with consecutive legs with respect to each cyclic-ordering, and hence is a stratification of $G(k,n)$.  Since the linear-dependencies are characterized by the minors.  The external leg indexes of the columns in each minor, if they are in the same trace,  should be consecutive.  Hence the proper minors are
\begin{eqnarray}
\label{eq:Minors}
\mathcal{M}&=&\{(1\cdots k) \cdots (j\,\cdots\,  k-1), (\dot{j+1}\,\cdots \,\dot{j+k})\cdots (\dot{n}\,\cdots\, \dot{j+k-1})\\
&&\bigcup_{k_1+k_2=k}(1\, \cdots\, k_1\, \dot{j+1}\, \cdots \, \dot{j+k_2})\cdots \bigcup_{k_1+k_2=k}(j\, \cdots\, k_1-1\, \dot{n}\, \cdots \, \dot{j+k_2-1})\}.\nb
\end{eqnarray}

The linear-dependencies of an on-shell diagram  correspond to $k\times (n-k)-(2n-4)$ minors $m_i$ in $\mathcal{M}$ vanishing.   The set of these $k\times (n-k)-(2n-4)$ minors is denoted as $M_I$.   As conclusion, the on-shell bipartite diagram of the leading singularities of the loop amplitudes correspond to the Grassmannian cell $C^{k\times n}$ whose values  can  be completely  fixed  by the $\delta$-function in (\ref{eq:IntForm})
and the constraints $m_i=0$ for arbitrary $m_i\in M_I$.
In this way terms with the same Grassmannian geometry
are collected together.  We shall be using these properties in a crucial way to study the singularity structures of nonplanar NMHV amplitudes in this section. We shall as before use a specific example, in this case a 6-point NMHV nonplanar amplitude, to assist a general discussion whenever appropriate.

For the on-shell diagrams with same linear-structures characterized by  $M_I$, we define a function
\begin{eqnarray}
\label{eq:NMHVTopForm}
F^{g}_{M_I}=\oint\limits_{C\in \bar G_{m_i=0, \forall m_i\in M_I }}{d^{k\times n}C\over vol(GL(3))}{1\over \mathcal{P}_g}\delta^{k\times 4}(C\cdot \tilde\eta) \delta^{k\times 2}(C\cdot \tilde\lambda)\delta^{2\times k}(\lambda\cdot C^{\bot}).
\end{eqnarray}
where $\bar G_{m_i=0}$ is a subset in $G(k,n)$ with $m_i=0$, and   $\oint\limits_{C\in \bar G_{m_i=0}}$
picks up  the residue on one minor $m_i=0$ upon an integration
 along  the contour, $C$ in $\bar G_{m_i=0}$.
In order  to form an  invariant top form and to include all  existent poles,
 $\mathcal{P}_g\equiv \prod_{i=1}^n m_{g_i}$ and
 $\mathcal{P}_g$ should scale uniformly as
 $\mathcal{P}_g(t C)=t^{k\times n} \mathcal{P}_g(C)$ and contain  all the factors  $m_i\in M_I$.

We therefore propose a general formula for nonplanar one-loop diagram in  the invariant top form
\begin{eqnarray}
\label{eq:GSumTopForm}
\mathcal{A}_n^{k}(1\cdots j, \dot{j+1}\cdots \dot{n})=\sum_{M_I\subset\mathcal{M} } \mathcal{F}_{M_I}(1\cdots j, \dot{j+1}\cdots \dot{n}),
\end{eqnarray}
where $\mathcal{F}_{M_I}(1\cdots j, \dot{j+1}\cdots \dot{n}) \equiv \sum_{g} N_g F^g_{M_I}$.
The sum  runs  over all the top forms with poles on the hypersurfaces, defined by $m_i=0$,   in $G(k,n)$.
The coefficients, $N_g$, do not depend on $C$.

We consider a 6-point one-loop amplitude $\mathcal{A}(1,2,3,4,5,\dot 6)$ with ``$6$''  being the nonplanar leg.
More  general one-loop amplitudes and higher-loop amplitudes will be left to a future publication.
The  set of minors  is
\begin{eqnarray}
\label{eq:Minors}
\mathcal{M}=\{(123), (234), (345), (451), (512), (612), (623), (634), (645), (651)\}.
\end{eqnarray}
with $M_I$ containing  only one element in  $\mathcal{M}$. Then
\begin{eqnarray}
\label{eq:NMHVTopForm}
F^j_{M_I}=\oint\limits_{C\in \bar G_{m_i=0, \forall m_i\in M_I }}{d^{3\times 6}C\over vol(GL(3))}{1\over \mathcal{P}_j}\delta^{3\times 4}(C\cdot \tilde\eta) \delta^{3\times 2}(C\cdot \tilde\lambda)\delta^{2\times 3}(\lambda\cdot C^{\bot}),
\end{eqnarray}

All possible products of minors are listed,
\begin{equation}\label{eq:minor-products}
\left\{
\begin{array}{ccccccc}
 \mathcal{P}_1&=&(123) (234) (345) (645) (651) (612)& &
 \mathcal{P}_6&=&(123) (234) (451) (623) (645) (651) \\
\mathcal{P}_2&=& (234) (345) (451)(623) (651) (612)&&
\mathcal{P}_7&=& (234) (345) (512) (634) (651) (612)\\
\mathcal{P}_3&=& (345) (451) (512) (623) (612) (634)&&
\mathcal{P}_8&=& (345) (451) (123) (645) (623) (612) \\
\mathcal{P}_4&=& (451) (512) (123) (623) (645) (634)&&
\mathcal{P}_9&=& (451) (512) (234) (651) (623)  (634)\\
\mathcal{P}_5&=& (512) (123) (234) (645) (651) (634) &&
\mathcal{P}_{10}&=& (512) (123) (345) (612) (645) (634).
\end{array}
\right.
\end{equation}

We need to verify that the amplitude is in the form
\begin{eqnarray}
\label{eq:SumTopForm}
\mathcal{A}(12345\dot 6)=\sum_{m_i\in \mathcal{M}} \mathcal{F}_{M_I}(12345\dot 6)\equiv\sum_{M_I\in \mathcal{M}} \sum_{j}^{10} N_j F^j_{M_I}.
\end{eqnarray}
And we also need to determine the $N_j$.
To this end  we classify all the leading singularities
into three type:
\begin{itemize}
\item[Type I:]
The nonplanar leg belongs to a 3-point amplitude after  a quadruple  cut, as shown in Fig.~\ref{fig:NMHVLead1}, which  are the same as a planar diagram up to a  minus sign.
\item[Type II:]
The nonplanar leg belongs to a 4-point amplitude after a quadruple  cut, as shown in Fig.~\ref{fig:NMHVLead2}, which can be transformed into a planar diagram up to a  kinematic factor.
\item[Type III:]
The nonplanar leg lies  in a 5-point amplitude after a
quadruple  cut, as shown in Fig.~\ref{fig:NMHVLead3},   which  can also be transformed into  a planar diagram up to an overall coefficient.
\end{itemize}

These three types of singularities are presented in
Fig.~\ref{fig:NMHVLead1}, Fig.~\ref{fig:NMHVLead2},
and Fig.~\ref{fig:NMHVLead3}, respectively.
\begin{figure}[ht!]
  \centering
 \includegraphics[width=0.83\textwidth]{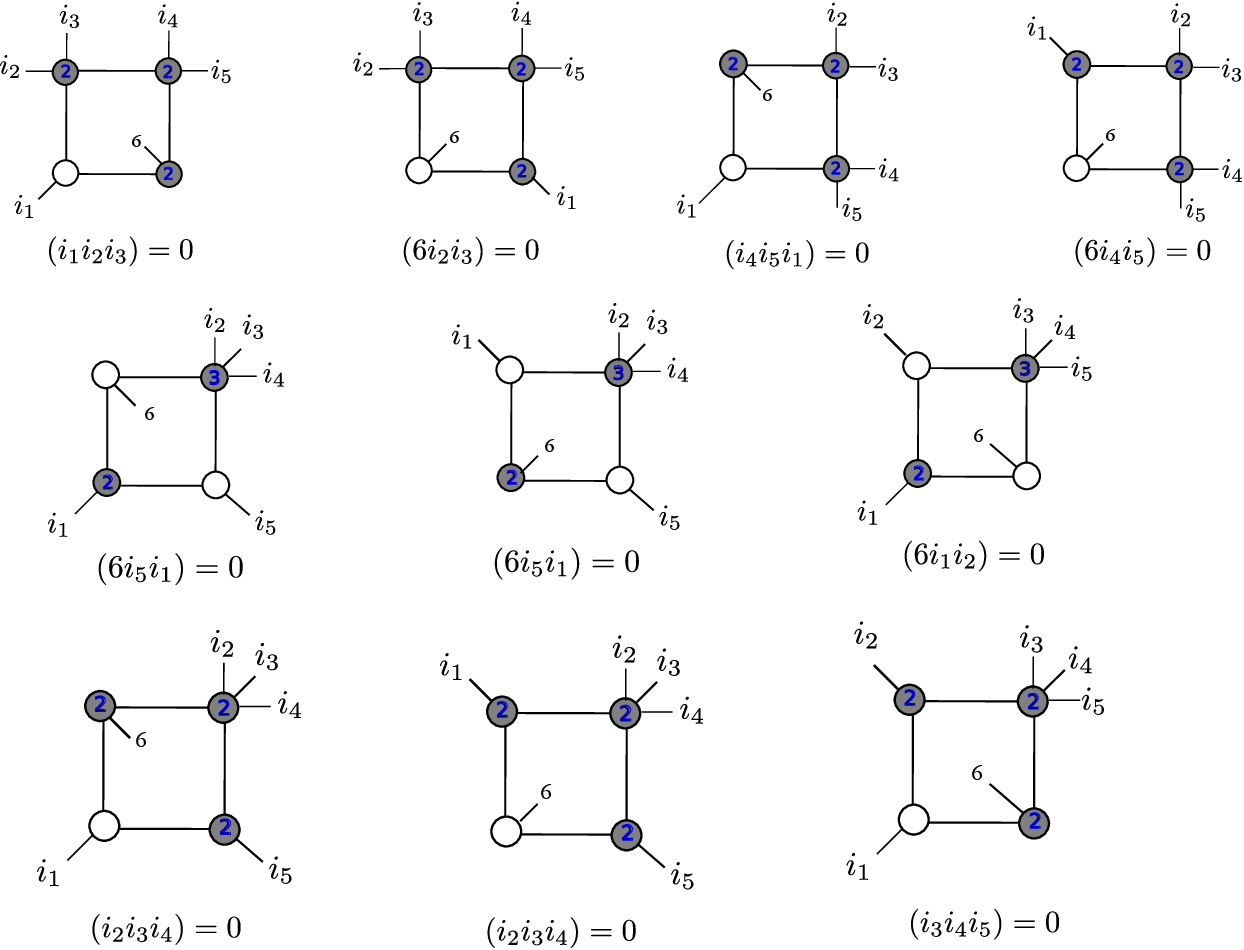}
\caption{Leading singularity Type I}
\label{fig:NMHVLead1}
\end{figure}
\begin{figure}[ht!]
  \centering
 \includegraphics[width=0.63\textwidth]{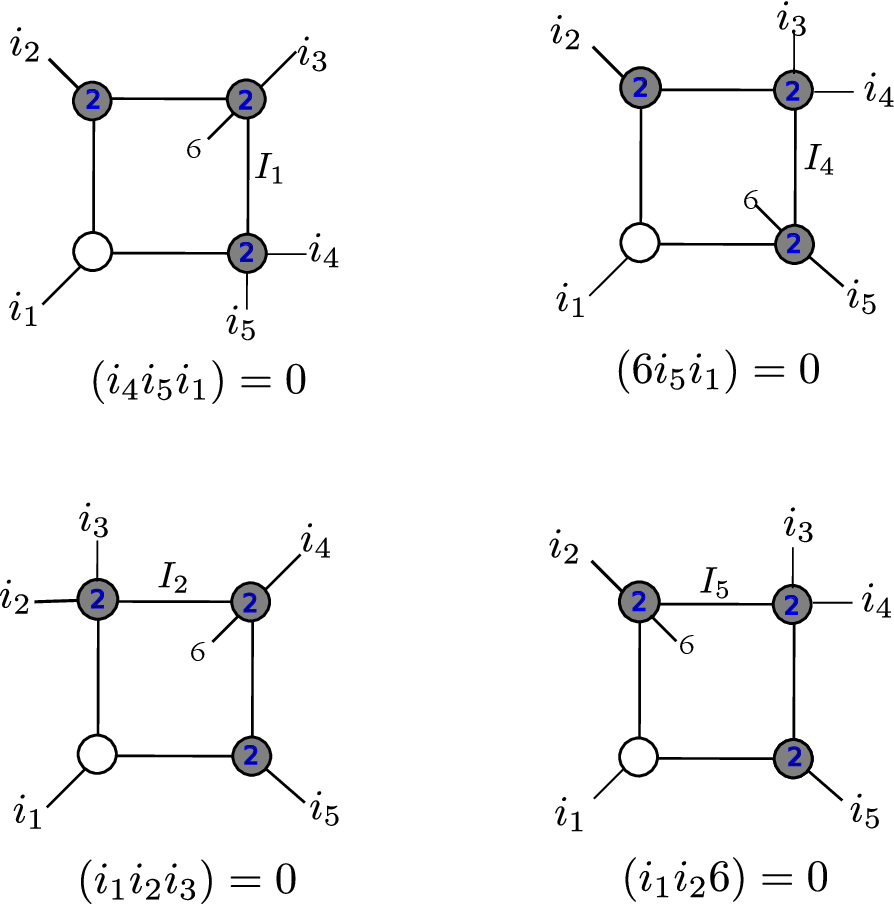}
\caption{Leading singularity Type II}
\label{fig:NMHVLead2}
\end{figure}
\begin{figure}[ht!]
  \centering
 \includegraphics[width=0.63\textwidth]{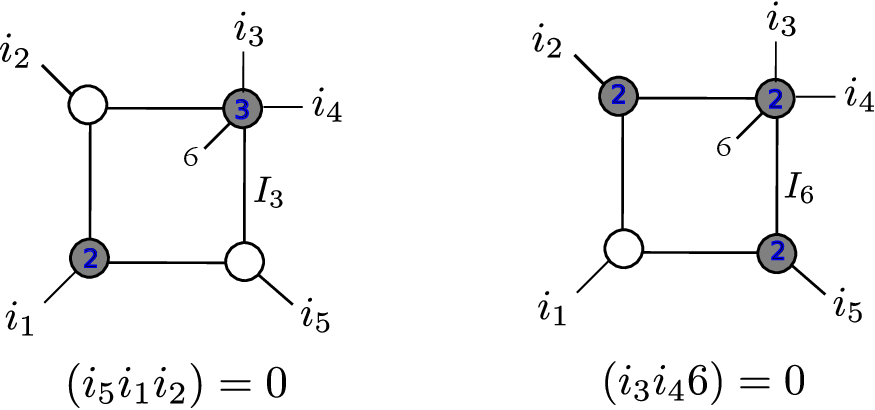}
\caption{Leading singularity Type III}
\label{fig:NMHVLead3}
\end{figure}

To arrive at the top form (\ref{eq:SumTopForm}) we need to group  terms with the same vanishing minor together.
For example, if we consider $(123)=0$,
then according to Fig.~\ref{fig:NMHVLead1}, Fig.~\ref{fig:NMHVLead2}, and Fig.~\ref{fig:NMHVLead3},
we  get
\begin{equation}\label{eq:M123}
\mathcal{F}_{(123)}(12345|6)=\left\{ \begin{array}{lc}
 -(\mathcal{I}_{\{1,23,45,6\}}+\mathcal{I}_{\{5,6,123,4\} })\mathcal{T}^{(123)=0}_{\{123456\}}-\mathcal{I}_{\{6,5,123,4\}}\mathcal{T}^{(123)=0}_{\{123465\}}&~~\\
 -(\mathcal{I}_{\{3,6,45,12\}}+\mathcal{I}_{\{4,5,123,6\}})\mathcal{T}^{(123)=0}_{\{123645\}}& ~~~\textrm{Type I} \\
+{s_{56}\over s_{5I_1}}\mathcal{I}_{\{3,4,56,12\}}\mathcal{T}^{(123)=0}_{\{123456\}} +{s_{46}\over s_{4I_2}}\mathcal{I}_{\{1,23,46,5\}}\mathcal{T}^{(123)=0}_{\{123465\}} &~~~ \textrm{Type II} \\
+{s_{\hat 56}\over s_{\hat 5I_3}}\mathcal{I}_{\{2,3,456,1\}}\mathcal{T}^{(123)=0}_{\{123456\}}& ~~~ \textrm{Type III},
\end{array}\right.
\end{equation}
where $$p_{I_1}={(p_1+p_2+p_3)|\lambda_3\rangle\bigotimes [\tilde\lambda_4|(p_1+p_2+p_3)\over  [\tilde\lambda_4|(p_1+p_2+p_3)|\lambda_3\rangle},$$
$$p_{I_2}={(p_1+p_2+p_3)|\lambda_1\rangle\bigotimes [\tilde\lambda_5|(p_1+p_2+p_3)\over  [\tilde\lambda_5|(p_1+p_2+p_3)|\lambda_1\rangle},$$
$$p_{I_3}={(p_1+p_2)|\lambda_3\rangle\bigotimes [\tilde\lambda_2|(p_1+p_2)\over  [\tilde\lambda_2|(p_1+p_2)|\lambda_3\rangle},$$
$$p_{\hat 5}={(p_4+p_5)|\lambda_4\rangle\bigotimes \langle\lambda_1|(p_2+p_3)(p_4+p_5)\over \langle\lambda_1|(p_2+p_3)(p_4+p_5)|\lambda_4\rangle}.$$
Here $\mathcal{I}_{\{1,23,46,5\}}$  is the scalar integration
\begin{eqnarray}
\label{eq:Scalar}
\mathcal{I}_{\{1,23,46,5\}}=\int{(p_1+p_2+p_3)^2 (p_1+p_4+p_6)^2\over l^2 (l+p_1)^2 (l+p_1+p_2+p_3)^2 (l-p_5)^2}
\end{eqnarray}
and $\mathcal{T}^{(123)=0}_{\{123645\}}$ is the cyclic integration around the pole $(123)=0$ of the
top-form~\cite{NimaGrass} of the tree amplitudes with color ordering
$\{123645\}$
\begin{eqnarray}
\label{eq:TreeTopForm}
\mathcal{T}^{(123)=0}_{\{123645\}}=\oint\limits_{C\in \bar G_{(123)=0 }}{d^{3\times 6}C\over vol(GL(3))}{\delta^{3\times 4}(C\cdot \tilde\eta) \delta^{3\times 2}(C\cdot \tilde\lambda)\delta^{2\times 3}(\lambda\cdot C^{\bot})\over (123)(236)(364)(645)(451)(512)},
\end{eqnarray}
and it works similarly  for others.
For Type II and Type III,
the coefficients in (\ref{eq:M123}) are obtained by the permutation relation of a ``box.''  As it is  explained in detail  in Section~\ref{sec:PRBox}, the permutation relation do not change the geometry of the Grassmannian cell.

Similarly,  according to Fig.~\ref{fig:NMHVLead1},
Fig.~\ref{fig:NMHVLead2}, and Fig.~\ref{fig:NMHVLead3},
a sum  of the terms with Grassmannian geometry $(612)=0$ is
\begin{equation}
\label{eq:M612}
\mathcal{F}_{(612)}(12345\dot 6)=\left\{ \begin{array}{lc}
 -(\mathcal{I}_{\{6,12,34,5\}}+\mathcal{I}_{\{1,2,345,6\} })\mathcal{T}^{(612)=0}_{\{123456\}}-\mathcal{I}_{\{6,2,345,1\}}\mathcal{T}^{(612)=0}_{\{162345\}}&~~\\
 -(\mathcal{I}_{\{6,3,45,12\}}+\mathcal{I}_{\{6,345,1,2\}})\mathcal{T}^{(612)=0}_{\{126345\}}& ~~~\textrm{Type I} \\
+{s_{16}\over s_{1I_4}}\mathcal{I}_{\{2,3,45,61\}}\mathcal{T}^{(612)=0}_{\{123456\}} +{s_{26}\over s_{2I_5}}\mathcal{I}_{\{1,26,34,5\}}\mathcal{T}^{(612)=0}_{\{126345\}} &~~~ \textrm{Type II} \\
+{s_{\hat 26}\over s_{\hat 2I_6}}\mathcal{I}_{\{4,5,126,3\}}\mathcal{T}^{(612)=0}_{\{126345\}}& ~~~ \textrm{Type III},
\end{array}\right.
\end{equation}
where
$$p_{I_4}={(p_1+p_2+p_6)|\lambda_2\rangle\bigotimes [\tilde\lambda_3|(p_1+p_2+p_6)\over  [\tilde\lambda_3|(p_1+p_2+p_6)|\lambda_2\rangle},$$
$$p_{I_5}={(p_1+p_2+p_6)|\lambda_1\rangle\bigotimes [\tilde\lambda_5|(p_1+p_2+p_6)\over  [\tilde\lambda_5|(p_1+p_2+p_6)|\lambda_1\rangle},$$
$$p_{I_6}={(p_3+p_4)|\lambda_4\rangle\bigotimes [\tilde\lambda_5|(p_3+p_4)\over  [\tilde\lambda_5|(p_3+p_4)|\lambda_4\rangle},$$
$$p_{\hat 2}={(p_1+p_2)(p_4+p_5)|\tilde\lambda_3]\bigotimes [\tilde\lambda_1|(p_1+p_2)\over [\tilde\lambda_1|(p_1+p_2)(p_4+p_5)|\tilde\lambda_3]}.$$
All other terms can be generated by cyclic permutations
$Z_5$ of $\{12345\}$.
And the  total amplitude can be written as
\begin{eqnarray}
\label{eq:SumSimpleForm}
\mathcal{A}(12345\dot 6)&=&\sum_{\sigma\in Z_5} \mathcal{F}_{\left(\sigma(1) \sigma(2) \sigma(3)\right)}(\sigma(1)\sigma(2)\sigma(3)\sigma(4)\sigma(5)|6)\nb\\&&+\mathcal{F}_{\left(6 \sigma(1) \sigma(2)\right)}(\sigma(1)\sigma(2)\sigma(3)\sigma(4)\sigma(5)|6).
\end{eqnarray}
The coefficients $N_j$ are  obtained  by comparing (\ref{eq:SumTopForm}), (\ref{eq:M123}), (\ref{eq:M612})
with (\ref{eq:SumSimpleForm}).
An interesting observation is that  all the coefficients of the top-form for $\mathcal{P}_6\cdots \mathcal{P}_{10}$ vanish,  which, in turn, serves as a direct verification of our proposition (\ref{eq:SumTopForm}).

\section{Conclusion and Outlook}
\label{sec:Conclusion}

In this paper we present  a new and useful permutation relation of Yangian Invariants. Different from  KK and BCJ relations working  at the level of amplitudes, it unveils a relation between two Yangian Invariants with two consecutive legs exchanged.
Interesting properties governing the permutations of Yangian Invariants
can be uncovered in the bipartite on-shell diagram.
For instance all Yangian Invariants have at least one ``box'' connecting to two external legs.
When these two legs are exchanged  the Grassmannian matrix does not  change but maintain the same geometric property.
The two Yangians are related by a simple kinematic factor which can be calculated recursively by BCFW method.

However, it is not always obvious to find the ``box'' due to the equivalence
of bipartite on-shell diagrams.
To this end we  give a simple criterion from the associated permutation to check whether a given pair of consecutive legs are connected to a ``box''.
Because all consecutive legs in MHV amplitudes\footnote{MHV amplitudes have only one Yangian Invariant which is itself} lie in a ``box,'' we can exchange any two legs at the expense of  the kinematic factor.
Most  importantly, for a general diagram, if we exchange two lines--either internal or external--connecting to a ``box'' the geometry of the underlying Grassmannian  will not be affected.
This property  can be interpreted  as a new generator of new kind of equivalence relation in bipartite on-shell diagram--other than the square moves and mergers already observed in~\cite{NimaGrass}.

In the  case of NMHV amplitudes there will be a special case--but only one case--that cannot be molded  into  a ``box.'' There arises a second basic building block in bipartite diagram,
a ``bridged twin-box'' (Fig.~\ref{fig:NMHVBiBox}), the permutation relation of which is discussed in Section~\ref{sec:BridgeBiBox}.
With these  two permutation relations  we can resolve all permutations in NMHV amplitudes in the process of constructing their total on-shell integrals.

In this paper we also present a systematic way to deal with the integrands of scattering amplitudes using  unitarity cuts.
Momentum twistor space is a natural language to reconstruct integrand without unphysical propagators.  We discover a new way to add BCFW bridges  and a new operation called ``union'' is introduced to combine results from different cuts to arrive at the total integrands. For one-loop planar MHV amplitudes our results coincide with those obtained from single cuts. The advantage of our proposal is its easy extension to NMHV and higher loops.

For  nonplanar loop amplitudes
we apply unitarity cuts to fix the loop momenta endowing them  with a reasonable definition in the loop integrand.
A crucial relation between planar and nonplanar elements has been discovered which, in turn, enable us to turn nonplanar components into planar ones at the expense of a simple kinematic factor.
With  on-shell diagrams we present detailed and  systematic constructions of the total integrands  for  four- and  five-point one-loop nonplanar MHV amplitudes.
The kinematic factors as well as the corresponding planar amplitudes are separately dealt with using unitarity conditions. Final results are the ``union'' of all results reconstructed from all possible unitarity cuts.

Generalized unitarity cuts are used to address NMHV amplitudes. With six-point one-loop nonplanar  as an explicit example, the amplitude after quadruple   cuts--with all loop momenta being fixed by the cut constraints--is a leading singularity without any variables.
Interesting geometric properties, nevertheless, can be found in the nonplanar leading singularities: it is the result of top-forms integrating around  different poles.

There is an abundance of interesting open questions generated from  these ideas.
In the next paper we will present findings on the leading singularities in bipartite on-shell diagrams  as well as a systematic way of building these  diagrams in the twistor space.
This way of dealing with leading singularities  lends itself straightforward applications to higher loops.
Furthermore, according to the geometric  properties  of the Yangian Invariants, say,  collinearity or coplanarity of several points,
we can further classify the permutation relations;  and we  will probably find permutation relations of non-adjacent legs.
Moreover, interesting  geometric shapes, such as knots, will appear  in two loops.
Ideas and methods in topology are called for to deal  with higher-loop  nonplanar  amplitudes.
Last but not the least we will apply our  methodology to
$\mathcal N<4$ SYM or  gauge theories in other dimensions.

\acknowledgments
Useful discussions with Nima Arkani-Hamed, Bo Feng,  Yijian Du, Jens Fjelstad,  and Konstantin Savvidy are gratefully acknowledged.
We would also  like to thank Antonio Amariti, Andreas Brandhuber,  Livia Ferro, Song He,  Jan Plefka, and  Ellis Yuan  for helpful communications.
Peizhi Du would like to thank Nima Arkani-Hamed for encouragement.

This research project has been supported in parts by the Jiangsu Ministry of Science and Technology under contract~BK20131264 and by the Swedish Research Links programme of the Swedish Research Council (Vetenskapsradets generella villkor) under contract~348-2008-6049.

We also acknowledge
985 Grants from the Ministry of Education, and the
Priority Academic Program Development for Jiangsu Higher Education Institutions (PAPD).

\appendix
\section{The momentum twistor space}
\label{app:twistor}

%\textbf{A Momentum twistor space}
The introduction  of momentum twistor space are discussed in~\cite{NimaAllLoop, NimaLocalInt}.  Here we summarize the  basic concepts and some useful identities in momentum twistor space for completeness.
In momentum space, the spinor form~\cite{Berends:1981uq, Berends:1981rb, Kleiss:1985yh, Gunion:1985vca, Xu:1986xb} of  on-shell momentum is $p_{\alpha \dot \alpha}=p_{\mu}\sigma^{\mu}_{\alpha \dot \alpha}=\lambda_{\alpha}\tilde{\lambda}_{ \dot \alpha}$ , satisfying  the constraint $p^2=0$ by construction.
The momentum conservation, $\sum\limits_{i=1}^n p_{i}=0$,  however,   needs to be enforced by $\delta$-functions $\delta(\sum\lambda_{i}\tilde{\lambda}_{i})$  in the scattering amplitudes.
One often  uses the  dual coordinates $x_i$~\cite{Drummond:2006rz}, where $p_i=x_i-x_{i-1}$, in which the momentum conservation $\sum\limits_{i=1}^n p_{i}=0$ is naturally satisfied,
at the expense of  $p_i^2=0$ being  obscured.
These two constraints are, however,  both manifest in momentum twistor space,
with twistor $Z=(\lambda ,\mu)$ satisfying  $\mu_{\dot \alpha}=x_{\alpha \dot \alpha}\lambda^{\alpha}$.

Any $x_i$ in $\mathbb{C}^4$ corresponds to a projective line $(Z_i,Z_{i+1})$ in $\mathbb{CP}^3$.
Two lines $(Z_i-1,Z_i)$ and $(Z_i,Z_i+1)$ intersect at the point $Z_i$ and  the momentum $p^2=(x_i-x_{i-1})^2=0$ is a null vector.
When twistors  are used to build momenta,  the corresponding twistor space is called momentum twistor space~\cite{Penrose:1967wn, Hodges:2010kq}.

$\left\langle Z_i\,  Z_j\, Z_k\, Z_l \right\rangle$ denotes  the determinant of four twistors.
If line $(Z_i Z_j)$ and $(Z_k Z_l)$ corresponds to the spacetime points $x$ and $y$, the determinant is simply
\begin{eqnarray}
\left\langle Z_i\,  Z_j\, Z_k\, Z_l\, \right\rangle=\left\langle \lambda_i \lambda_j\right\rangle\left\langle \lambda_k \lambda_l\right\rangle(x-y)^2,
\end{eqnarray}
where $\left\langle \lambda_i \lambda_j\right\rangle=\epsilon_{\alpha\beta}\lambda^{\alpha}_i\lambda^{\beta}_j$.
In particular  if two lines intersect,  $(x-y)^2=0$,  then the determinant vanishes.
It implies that  these four points are coplanar.

$(abc)$ denotes  the plane spanned by the three points $Z_a$, $Z_b$, $Z_c$,
while  $(ab)\cap(cde)$ denotes  a  point in twistor space where the line, $(ab)$, intersects with  the plane, $(cde)$, and
\begin{eqnarray}
(ab)\cap(cde)=Z_a\left\langle bcde \right\rangle+Z_b\left\langle cdea \right\rangle=-(Z_c\left\langle deab \right\rangle+Z_d\left\langle eabc \right\rangle+Z_e\left\langle abcd \right\rangle)~.
\end{eqnarray}
With this definition  we deduce  that  $(ab)\cap(cde)\, =\, -(cde)\cap(ab)$.

Likewise the line,  $(abc)\cap(def)$,  is the intersection of  two planes $(abc)$ and $(def)$ % intersect as
\begin{eqnarray}
(abc)\cap(def)&=&Z_aZ_b\left\langle cdef \right\rangle+Z_bZ_c\left\langle adef \right\rangle+Z_cZ_a\left\langle bdef \right\rangle\nn
&=&\left\langle abcd \right\rangle Z_eZ_f+\left\langle abcf \right\rangle Z_dZ_e+\left\langle abce \right\rangle Z_fZ_d~.
\end{eqnarray}

Here we also give several very useful identities for momentum twistor space called \textit{Schouten identity}. The familiar Schouten identity based on spinors is
\begin{eqnarray}\label{SpinorId}
\bracf{ac}\bracf{bd}=\bracf{ab}\bracf{cd}+\bracf{ad}\bracf{bc}.
\end{eqnarray}
In momentum twistor space, any arbitrary set of five twistors $\{Z_a,Z_b,Z_c,Z_d,Z_e\}$ will satisfy the following identity,
\begin{eqnarray}\label{ABId}
Z_a\bracf{bcde}+Z_b\bracf{cdea}+Z_c\bracf{deab}+Z_d\bracf{eabc}+Z_e\bracf{abcd}=0.
\end{eqnarray}
According to this, we could obtain the 5-term identity also called a Schouten identity:
\begin{eqnarray}\label{GeneralId}
\bracf{fgha}\bracf{bcde}+\bracf{fghb}\bracf{cdea}+\bracf{fghc}\bracf{deab}+\bracf{fghd}\bracf{eabc}+\bracf{fghe}\bracf{abcd}=0.\nn
\end{eqnarray}
We will show another frequently used identity related to A and B, which is very analogous to (\ref{SpinorId}),
\begin{eqnarray}\label{ABId}
\bracf{AB13}\bracf{AB24}=\bracf{AB12}\bracf{AB34}+\bracf{AB14}\bracf{AB23}.
\end{eqnarray}

\bibliographystyle{jhep}
%\bibliography{gauge_20140428}
\providecommand{\href}[2]{#2}\begingroup\raggedright\endgroup

\end{document}